\newtheorem{theorem}{Theorem}[subsubsection]
\newtheorem{lemma}{Lemma}[subsubsection]
\newtheorem{corollary}{Corollary}[subsubsection]
\theoremstyle{definition}
\newtheorem{definition}{Definition}[subsubsection]
\newtheorem{example}{Example}[subsubsection]
\useunder{\uline}{\ul}{}
\begin{document}
%
% paper title
% Titles are generally capitalized except for words such as a, an, and, as,
% at, but, by, for, in, nor, of, on, or, the, to and up, which are usually
% not capitalized unless they are the first or last word of the title.
% Linebreaks \\ can be used within to get better formatting as desired.
% Do not put math or special symbols in the title.
\title{Delay Analysis of Multichannel Parallel Contention Tree Algorithms \\ MP-CTA}
%
%
% author names and IEEE memberships
% note positions of commas and nonbreaking spaces ( ~ ) LaTeX will not break
% a structure at a ~ so this keeps an author's name from being broken across
% two lines.
% use \thanks{} to gain access to the first footnote area
% a separate \thanks must be used for each paragraph as LaTeX2e's \thanks
% was not built to handle multiple paragraphs
%

\author{Murat~G\"ursu,~\IEEEmembership{Student~Member,~IEEE}, Alberto~Mart\'inez~Alba,
       %,
	   ~and~Wolfgang~Kellerer,~\IEEEmembership{Senior~Member,~IEEE}}% <-this % stops a space

% note the % following the last \IEEEmembership and also \thanks - 
% these prevent an unwanted space from occurring between the last author name
% and the end of the author line. i.e., if you had this:
% 
% \author{....lastname \thanks{...} \thanks{...} }
%                     ^------------^------------^----Do not want these spaces!
%
% a space would be appended to the last name and could cause every name on that
% line to be shifted left slightly. This is one of those "LaTeX things". For
% instance, "\textbf{A} \textbf{B}" will typeset as "A B" not "AB". To get
% "AB" then you have to do: "\textbf{A}\textbf{B}"
% \thanks is no different in this regard, so shield the last } of each \thanks
% that ends a line with a % and do not let a space in before the next \thanks.
% Spaces after \IEEEmembership other than the last one are OK (and needed) as
% you are supposed to have spaces between the names. For what it is worth,
% this is a minor point as most people would not even notice if the said evil
% space somehow managed to creep in.

% The paper headers
\markboth{Transactions on Information Theory}%
{Shell \MakeLowercase{\textit{et al.}}: Bare Demo of IEEEtran.cls for IEEE Journals}
% The only time the second header will appear is for the odd numbered pages
% after the title page when using the twoside option.
% 
% *** Note that you probably will NOT want to include the author's ***
% *** name in the headers of peer review papers.                   ***
% You can use \ifCLASSOPTIONpeerreview for conditional compilation here if
% you desire.

% If you want to put a publisher's ID mark on the page you can do it like
% this:
%\IEEEpubid{0000--0000/00\$00.00~\copyright~2015 IEEE}
% Remember, if you use this you must call \IEEEpubidadjcol in the second
% column for its text to clear the IEEEpubid mark.

% use for special paper notices
%\IEEEspecialpapernotice{(Invited Paper)}

% make the title area
\maketitle

% As a general rule, do not put math, special symbols or citations
% in the abstract or keywords.
\begin{abstract}
%Tree algorithms are initially provided as a trade-off between latency and central information. Partial central information enables lower delays compared to distributed systems. This perspective is leveraged for the massive machine type communication. The analysis for probability density function for number of slots used by a tree resolution and the analysis for number of probability distribution of number of levels for a tree resolution is available. The former can give the delay pdf in case a single channel is used while the latter produces the delay pdf in case non-constraint number of channels are parallelizing the tree resolution. In this work we investigate the delay pdf for tree resolution with particular number of channels or parallelization and provide precise analytical results confirmed via simulations.% if you wish. %This we believe can be used to provide requirement based parallelization of sensor access in massive machine type communication.

Contention tree algorithm is initially invented as a solution to improve the stable throughput problem of Slotted ALOHA in multiple access schemes. Even though the throughput is stabilized in tree algorithms, the delay of requests may grow to infinity with respect to the arrival rate of the system. Delay depends heavily on the exploration of the tree structure, i.e., breadth search, or depth search. Breadth search is necessary for faster exploration of tree. The analytical probability distribution of delay, which is available mostly for depth search, is not generalizable to all breadth search. In this paper we fill this gap through though arbitrary grouping of branches and including this in the delay analysis. This enables obtaining the delay analysis of any contention tree algorithm that runs a breadth first search exploration. We show through simulations that the analysis is in agreement with the realizations.
\end{abstract}

% Note that keywords are not normally used for peerreview papers.
\begin{IEEEkeywords}
Trees, Multichannel
\end{IEEEkeywords}

% For peer review papers, you can put extra information on the cover
% page as needed:
% \ifCLASSOPTIONpeerreview
% \begin{center} \bfseries EDICS Category: 3-BBND \end{center}
% \fi
%
% For peerreview papers, this IEEEtran command inserts a page break and
% creates the second title. It will be ignored for other modes.
\IEEEpeerreviewmaketitle

\section{Introduction}
% The very first letter is a 2 line initial drop letter followed
% by the rest of the first word in caps.
% 
% form to use if the first word consists of a single letter:
% \IEEEPARstart{A}{demo} file is ....
% 
% form to use if you need the single drop letter followed by
% normal text (unknown if ever used by the IEEE):
% \IEEEPARstart{A}{}demo file is ....
% 
% Some journals put the first two words in caps:
% \IEEEPARstart{T}{his demo} file is ....
% 
% Here we have the typical use of a "T" for an initial drop letter
% and "HIS" in caps to complete the first word.

%\IEEEPARstart{T}{he} multiple dimensions of wireless communications allow for a myriad of different strategies to exploit their possibilities. Since wireless communications are based on handling electromagnetic waves, those dimensions are at least five: time, frequency, energy, space and code. Technology advancements control how granularly we can create resources out of these five dimensions. Once we are able to quantify and combine these dimensions into resource blocks, we can create protocols that use them.

\IEEEPARstart{P}{rotocols} for resource management can be roughly categorized as contention-based, such that users are not assigned resources but they contend for them, and contention-free, where each user has separate access to allocated resources. The suitable protocol is selected depending on the requirements of a system. Contention-free protocols provide guarantees for certain services like industrial control, whereas the contention-based protocols enable flexible use of resources, as highly dynamic requests are required in, e.g., mobile networks. The flexibility of contention-based communication has been attracting recent interest due to the upcoming concept of Internet of Things. The number of users is expected to grow exponentially such that pre-allocated resource management is sub-optimal. %The current contention-based algorithms, however, have limitations that have been thoroughly investigated in previous literature. 
%efficiency loss due to allocating constant resources for any event that may happen is like living in Japan and keeping thousands of people ready to run away from a Tsunami since earthquake may arrive anytime. However, disasters can be resolved once they are detected. Similar perspective is used in communication networks, to diverge from TDMA standards to contention based ones. 

%ALOHA is one of the oldest and simplest contention-based protocols, initially used to communicate between island campuses in Hawaii. The motto of the protocol is `send and pray', that is, data are transmitted whenever available and no special mechanisms for detecting or avoiding collisions are in place. Surprisingly, ALOHA works reasonably well up to certain traffic levels. This protocol is further improved via synchronization, leading to so-called
 Slotted-ALOHA (SA) is one protocol that deals with multiple access without reservations. Nonetheless, problems regarding stability are still present. Tree Algorithms working on top of SA may alleviate these problems. Stabilization for SA is achieved via sending successive feedback to collided users, such that they are prioritized compared to initial arrivals.
% Synchronization implies that data transmission is limited to pre-defined time slots, which decreases the probability of two packets overlapping, thus reducing collisions.
Contention tree algorithms are well known for stable throughput operation. Analysis of throughput of the algorithms is well established in the state of the art while the delay analysis is limited. The distribution of the delay is only available under certain settings. In this paper we generalize this analysis for any breadth-first search for contention tree algorithm in a multichannel environment and call this new approach \textit{Multichannel Parallel Contention Tree Algorithm} (MP-CTA).

The structure of the paper is as follows. In Section~\ref{sec:BACK} we provide details about the state of the art on delay analysis on contention tree algorithms. In Section III we introduce our model and the analysis. In Sec.~\ref{Simusimu} simulations are given to show that analytical assumptions match realizations. In Sec.~\ref{sec:CONC} the key contributions are repeated, which concludes the paper.

\section{Background}
\label{sec:BACK}
%\subsection{System Model}
Contention Tree Algorithms (CTA) are designed to provide efficient medium access to a channel connecting a central station and a set of \textit{contenders}, which are not aware from one another. Contenders are devices that try to send information to the central station through a shared communication channel. 
\subsection{Binary Tree Algorithm (BTA)}
The Binary Tree Algorithm  was invented, by Tsybakov and Mikhailov in 1978 \cite{mikhailov} and independently by Capetanakis in 1979 \cite{capetanakis1979tree}. It is simple to implement and only requires binary feedback from the central station. The principle behind this algorithm is a tree-like splitting strategy. First, contenders access a slot randomly. If multiple contenders access the channel at the same time the result is a \textit{collision}. In such an event, all the initial arrivals are blocked. After a collision, contenders draw a random binary number, either 0 or 1. Those which selected 0 are allowed to transmit in the following slot and those which selected 1 wait until the second slot or until the full resolution of those which selected 0 (depending on the implementation). This random splitting is repeated after every collision until no collisions appear. At that point, it is guaranteed that every device has successfully accessed the channel.

The operation of a BTA can be depicted as a tree diagram, like the one shown in Fig. \ref{fig:BTA_example}. In such a diagram, each group of devices with the same sequence of splittings is represented by a \textit{node}. The number inside each node reflects the number of contenders that have reached that node. In case of collision, that is, if the number of contenders in the node is greater than one, two new branches sprout from the collided node, since the contenders are divided into two new groups. The numbers by each branch represent the two possible choices that a contender can make.
\begin{figure}
\centering
\includegraphics[width=0.4\linewidth]{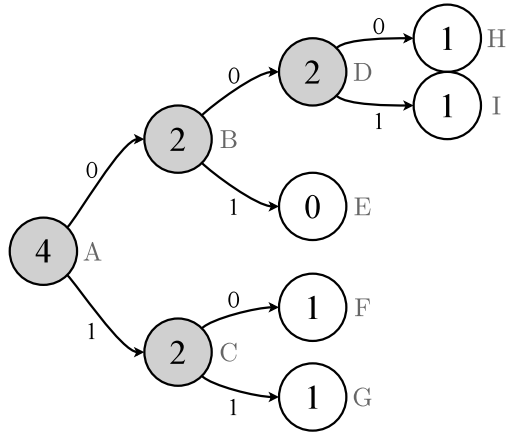}
\caption{Tree representation of an example of a Binary Tree Algorithm, with 4 initial contenders.}
\label{fig:BTA_example}
\end{figure}
\subsection{Preliminaries}
At this point we introduce the terminology for CTA. The initial collision, as also the source of the tree is called the \textit{root}. Each node in the tree except the first one is alternatively called a \textit{contention slot}, instead of simply \textit{node}. The maximum number of branches stemming from a contention slot is called the branching factor and denoted as $Q$.  Immediate children of the same contention slot is called a \textit{contention frame} as a group. A contention frame will contain at maximum $Q$ contention slots. 
%In the example it will contain two slots due to binary structure. The \textit{level} of a contention slot is defined as the depth of the contention slot. So the number of contention slots at level 2 in the example is 4. 

We introduce also a time-slotted collision channel model with immediate perfect feedback, we will refer to this as \textit{channel}. We can have multiple of these channels that are available to use in parallel.

%We want to emphasis here that intuitively PTA is optimal for lower delay in a tree resolution. And we focus on this for our delay analysis.

\subsection{Delay Definition}
Given the preliminaries, we will introduce the delay concept in trees. Delay is the time required to resolve a user, i.e., number of \textit{time slots} from the root of the tree to the successful slot of the contender. The definition of the delay is bound to the number of \textbf{simultaneously usable channels} and \textbf{exploration technique} of the tree.

Here it is important to emphasize that the BTA algorithm is initially designed as a \textit{Serial Tree Resolution} such that a depth first search (DFS) is done in the tree. However, Capetanakis also suggested a breadth first search (BFS) version of the algorithm, and called it \textit{Parallel Tree Resolution} (PTA). 
 
Using the values in Fig.~\ref{fig:BTA_example}, we investigate how different exploration of the delay can affect the delay. In Table~\ref{BFS_tree} we show the evolution of the tree for DFS and BFS, where columns depict evolving time. The contention slots with successes are $( \text{F} , \text{G}, \text{H}, \text{I} )$. So if we write the delay in the same order $D(DFS) = ( 8,9,4,5 )$ and $D(BFS) = ( 6,7,8,9 )$ , where $D(\cdot)$ is the delay function. We get different delay values for each request. %In a single channel system this selection may not make a difference but, in existence of multiple channels BFS is more intuitive to enable lower delays.
\begin{table}
\centering
\begin{tabular}{ c | c | c | c | c | c | c | c | c | c }
	Evolution type & t=1 & t=2 & t=3 & t=4 & t=5 & t=6 & t=7 & t=8 & t=9 \\ \hline
	\textbf{DFS} & A & B & D & H & I & E & C & F & G \\ \hline
	\textbf{BFS} & A & B & C & D & E & F & G & H & I \\ \hline
\end{tabular}
\caption{Breadth First Search and Depth First Search comparison in single channel environment}
\label{BFS_tree}
\end{table}

\subsubsection{Single Channel}
In a \textit{single channel} system the delay maps to the number of nodes (contention slots). For instance, the probability to be successful at $i^\text{th}$ contention slot is an one to one mapping to the delay distribution of the contenders in the resolution. In \cite{molle1993conflict}, the probability generating function for the $i^\text{th}$ successful contention slot conditioned on the initial number of contenders is given such that it can be used to derive the delay distribution. Conditioning on number of initial collided users has been a common approach in most of the work \cite{janssenjong}. We also want to mention that some work focused on Poisson arrivals instead of conditioning on the initial number of contenders \cite{huang}. We think that using the former extends the analysis to be applicable to many arrival distributions and we will also use this approach.

\subsubsection{$Q$ Channel}
In a \textit{$Q$ channel} system, where $Q$ is the number of branches stemming from a contention slot, a full contention frame can be explored parallel at the same time-slot. In this case the delay will map to the probability distribution of success in a contention frame. %Intuitively, selecting one contention slot or another does not change the success probability in one contention frame. However, if a STA is used, the delay analysis becomes complicated. In such a scenario if both of the contention slots resulted in a collision one has to wait for the full resolution of the other slot. Thus, delay analysis has to incorporate that making it complex. In case of PTA it is intuitive though \mg{agree?} that can be derived from distribution of contention slots which converts the 
 Contracting each contention frame to one node keeps the tree structure in tact. Similar recursive analysis to the one used for a success in contention slot is used for contention frames.
%Shifting the perspective from contention slots to contention frames keeps the tree structure in tact. Through modifying the probabilities the same analysis used for successful contention slots can be applied to contention frames. 

In \cite{kaplanMAT} they have conducted such analysis, where they used a recursion instead of the PGF.  

\begin{table}
	\centering
	\begin{tabular}{ c | c | c | c | c | c | c | c | c | c }
		Evolution type & t=1 & t=2 & t=3 & t=4 & t=5 & t=6 & t=7 & t=8 & t=9 \\ \hline
		\textbf{DFS} & A & B & D & H & F &  &  &  &  \\ 
		             &   & C & E & I & G &  &  &  &  \\ \hline
		\textbf{BFS} & A & B & D & F & H &  &  &  &  \\ 
		             &   & C & E & G & I &  &  &  &  \\ \hline
	\end{tabular}
	\caption{Breadth First Search and Depth First Search comparison in $Q$ channel environment}
	\label{BFS_tree_q}
\end{table}

In Tab.~\ref{BFS_tree_q} we have extended the tree evolution to $Q=2$ channels consistent with the tree example. The delays for both cases are $D(DFS) = ( 5,5,4,4 )$ and $D(BFS) = ( 4,4,5,5 )$. %The delays for both cases looks similar in this scenario. 

\subsubsection{Infinite Channel}
We refer the case as infinite channels where the number of channels compared to the number of contention slots in any level of the tree is larger. In such a case all the contention slots in one level of the tree can be explored at the same time-slot. Thus, the probability of success in the $m^{\text{th}}$ level of the tree can be used as delay for resolution of one user in the tree. The probability of success of a contender in level $m$ conditioned on the initial number of contenders for $Q$-ary trees are given in \cite{janssenjong}. They have also derived the probability that the tree terminates at level $M$ conditioned on the initial number of contenders. Such use of channels is not practical since the number of channels required for each time-slot is changing while the tree is evolving, and grows exponentially with respect to $Q$.

 \begin{table}
 	\centering
 	\begin{tabular}{ c | c | c | c | c | c | c | c | c | c }
 		Evolution type & t=1 & t=2 & t=3 & t=4 & t=5 & t=6 & t=7 & t=8 & t=9 \\ \hline
 		\textbf{BFS} & A & B & D & H &  &  &  &  &  \\ 
 		             &   & C & E & I &  &  &  &  &  \\ 
 		             &   &  & F &  &  &  &  &  &  \\ 
 		             &   &  & G &  &  &  &  &  &  \\ \hline
 	\end{tabular}
 	\caption{Breadth First Search and Depth First Search comparison in infinite channel environment}
 	\label{BFS_tree_inf}
 \end{table}
 
 In Tab.~\ref{BFS_tree_inf} we have extended the tree evolution to infinite channels such that each time all the level can be transmitted. The delays are $D(BFS) = ( 3,3,4,4 )$. This analysis also give the minimum delay achievable in a CTA without interrupt, since all the contention slots in a level is transmitted. Exploration of a level before the prior is not possible. There are also tree algorithms that restarts the tree from a specific node depending on the feedback obtained form the channel \cite{fcfs} \cite{popo}. However, this requires all the devices to keep listening during the resolution and thus, is not considered in this work.

\subsection{Arbitrary number of Channels}

The number of channels for the tree resolution can be fixed to an arbitrary value. In this case we define Multichannel Parallel Contention Tree Resolution (MP-CTA). We group $H$ contention slots into one time-slot and schedule time-slots consecutively. The grouping is not done cross levels, such that given $Q^m$ contention slots existing at level $m$, $\lceil \frac{Q^m}{H}\rceil$ time-slots are used to explore that level of the tree before proceeding to the next level.

 The number of simultaneously explorable contention slots increase with levels. Parallelization that is higher than the slots in a level can result in inefficiencies i.e., using $H$ channels waste $H-1$ resources for the initial contention. We define number of arbitrary channels $H$ such that $H= G \cdot Q$. This parallelization up to the level $M$ where number of contention slots is greater than the number of channels e.g., $Q^M \geq H$,  wastes, $M\cdot H - \sum_{m=0}^{M} Q^m$, resources. We define $H$ to be a multiplicative of $Q$, i.e., $  H = G \cdot Q$. In our analysis we restrict ourselves to $Q = 2$.

\begin{figure}
	\centering
	\includegraphics[width=1\linewidth]{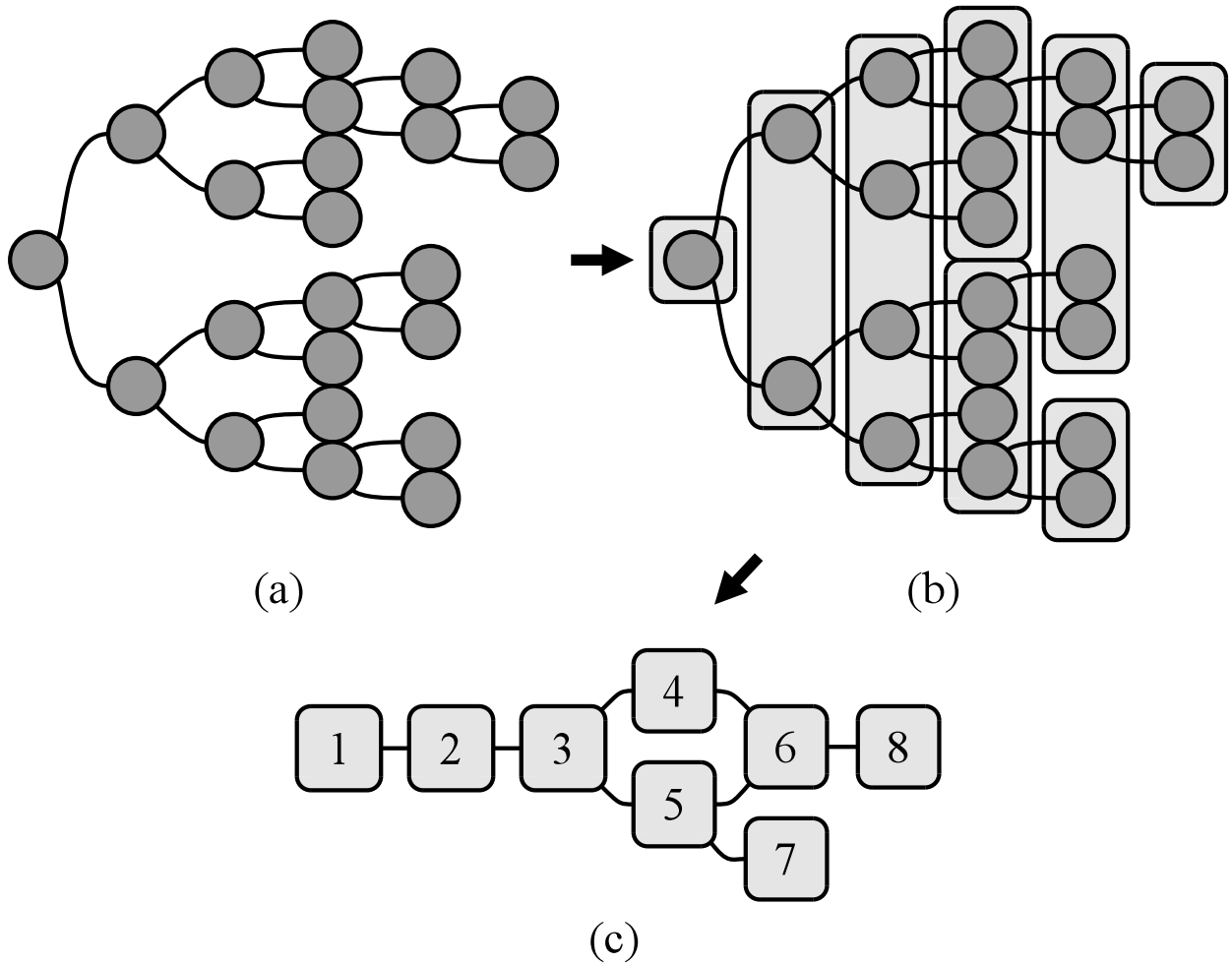}
	\caption{(a) Diagram of a classical Binary Tree Algorithm, (b) Grouping of nodes into slots (depicted as rectangles) in a multichannel tree. (c) Resulting slots and order of transmission.}
	\label{fig:Evolution2MC}
\end{figure}

%Since a new node can only be created after the transmission of its parent nodes, only those nodes that are not descendant from one another can be transmitted at the same time. The simplest way to satisfy this condition is to limit the simultaneous transmission of nodes to those belonging to the same level of the tree, as it is depicted in  Fig. \ref{fig:Evolution2MC}. In this context, a level is defined as the set of nodes that have the same number of ancestors. Henceforth, we will assume that any tree starts at the level 0, which contains the initial collision. This number scheme will be useful in the following analysis.
In Fig. \ref{fig:Evolution2MC} we share an example of MP-CTA. We assume a MP-CTA with $G = 2$ such that the grouping is done for 4 contention slots in  Fig. \ref{fig:Evolution2MC}b. 

We see that the number of contention slots in a certain level can be greater than $H$. In that case, the level needs to be broken into multiple time slots. Based on this fact, we can regard a time-slot as the set of contention slots that are transmitted at the same time. As a consequence, we can group contention slots into time-slots and group time-slots into levels as depicted in Fig. \ref{fig:Evolution2MC}b.

%We model the number of available channels in the system, i.e. the maximum number of simultaneously transmitted nodes, by means of the variable $G$. For the sake of simplicity in the subsequent formulae, $G$ is chosen to denote the maximum number of \textit{pairs} of nodes fitting in one slot, instead of just the number of nodes. For instance, in Fig. \ref{fig:Evolution2MC}b a scenario with four channels is depicted, hence a maximum of four nodes are allowed to lie in the same slot. This implies that $G=2$, since two pairs of nodes can be transmitted simultaneously, i.e. within the same slot. It is important to note that not every slot contains $2G$ nodes, as can be also observed in the figure.

As intuitive more parallelization, we use \textit{breadth-first traversal}, which is presented in Fig. \ref{fig:Evolution2MC}c. The classical way of traversing a Binary Tree Algorithm, which relies on solving the collisions in a nested manner (\textit{depth-first traversal}), is not as suitable since the tree structure of the algorithm can be modified by the parallelization. This loss of the tree structure is also shown in Fig. \ref{fig:Evolution2MC}c.

%Finally, we may proceed to study the performance of multichannel trees with respect to classical single channel binary trees. We aim to show that  multichannel Tree Algorithms outperform single channel Tree Algorithms in terms of time to complete the tree and in terms of the access delay experienced by a contender, as intuitively expected. Of course, the superior performance of multichannel Tree Algorithms is due to the use of additional channels, not to the algorithms themselves. However, the comparison between multichannel and single channel Tree Algorithms is necessary to fully understand the behavior of the former.  

In the following sections, a complete analysis of the statistics of the number of time-slots that are required to complete the tree and the statistics of the access delay experienced by a contender will be derived. Time slot is defined $G$ contention frames or $H$ contention slots grouped to transmit simultaneously.

\section{Multichannel Parallel - Contention Tree Algorithms (MP-CTA)}
\label{sec:PAR}
In the Table \ref{tabletebla} the most relevant variables which were used in this section is presented. Some of them will be reused in computing the access delay.
\begin{table}[]
	\centering
	\caption{Most relevant variables for computing $p_{\mathcal{T}^N}(t)$}
	\label{tabletebla}
	\begin{tabular}{@{}clc@{}}
		\toprule
		\textbf{Variable} & \multicolumn{1}{c}{\textbf{Definition}}  &   \multicolumn{1}{c}{\textbf{Definition index}}                             \\ \midrule
		\vspace*{1mm} 
		$N$               & Number of initial contenders.    &            -                                                                      \\
				\vspace*{1mm}
		$G$              & Number of contention frames per time slot.   &      -                                                                         \\
		\vspace*{1mm}  
		$m$               & Level index.                       &              -                                                                   \\
		\vspace*{1mm}
		$M$               & Number of considered levels.         &             -                                                                  \\
		\vspace*{1mm}
		$\mathcal{T}^N$   & Number of time slots.                  & \ref{{def_T}}                                                       \\
		\vspace*{1mm}
		$\mathcal{T}_m^N$ & Number of time slots at the level $m$.       &   \ref{{def_Tm}}                                                          \\
		\vspace*{1mm}
		$\mathcal{X}_m^N$ & Number of collisions at the level $m$.      &    \ref{{def_Xm}}                                             \\
		\vspace*{1mm}
		$\mathcal{K}_m^N$ & Number of contenders at the level $m$.      &   \ref{{def_Km}}                                              \\
		\vspace*{1mm}
		$\mathcal{Y}_\eta$   & \begin{tabular}[c]{@{}l@{}}Number of children collisions of a \\ parent collision with $\eta$ contenders.\end{tabular} &  \ref{{def_Y}}  \\ \bottomrule
	\end{tabular}
\end{table}

\subsection{Number of required time slots required to complete the tree}
As it was stated before, the grouping of contention frames into time slots erases the recursive properties of contention tree algorithms. As a consequence, recursive approaches to obtain the length of the tree (in terms of time slots) are not an option for these trees, even though they are commonly used for single channel trees \cite{massey1981collision}. On the contrary, a level-wise approach such as the one presented in \cite{kaplangulko} may still be applied, and it will be the basis of the analysis. For that, we first need to formally define the variables of the total and level-wise number of time slots.

\begin{definition}
Let $\mathcal{T}^N$ be the random variable modeling the number of time slots needed to complete a PCTA  given $N$ contenders in the root node.
\label{{def_T}}
\end{definition}
\begin{definition}
Let $\mathcal{T}^N_m$ be the random variable modeling the number of time slots within the level $m$. 
\label{{def_Tm}}
\end{definition}
From these two definitions, it follows that the total number of time slots in the PCTA, or just \textit{tree} for simplicity, can be expressed as:
\begin{equation}
\mathcal{T}^N = 1 + \sum_{m = 1}^\infty \mathcal{T}_m^N.
%\label{mc_avg_orig}
\label{avg_mc_s4}
\end{equation}
This equation is the starting point for calculating the statistical properties of $\mathcal{T}^N$.

\subsubsection{Probability mass function}
Our aim is to obtain the probability mass function (pmf) $p_{\mathcal{T}^N}(t)$ of the number of slots in the tree, provided the number of initial contenders $N$.
%\begin{equation}
% \triangleq \Pr\left\lbrace \mathcal{T}_N = t | n = N \right\rbrace.
%\end{equation}
%With the intention of reducing the complexity of the derivation of this function to an affordable level, simplifying approximations are presented.
%
%Even with approximations, the process of deriving the pmf of $\mathcal{T}_N$ is rather long and may be confusing. With the intention of facilitating understanding, there is a summary at the end of this section that compiles the most important steps of this process. This summary includes a table with the definition of the variables that are used throughout the analysis, which may be useful for the reader to consult throughout the analysis.
%\paragraph{Derivation approach and limitations}
In order to derive this pmf, we can use \eqref{avg_mc_s4} to express it as an infinite sum of the related random variables belonging to the following set.
\begin{definition}
	Let $\mathfrak{T}^N$ be the set of random variables $\mathcal{T}_m^N$ from $m=1$ to $m=\infty$:
	\begin{equation}
	\hat{\mathfrak{T}}^N_M \triangleq \left\lbrace \mathcal{T}_m^N : m \in \mathbb{N} \right\rbrace.
	\end{equation}
\end{definition}
With such an approach, we need to know the joint pmf of all variables in $\mathfrak{T}^N$, since it is clear from the properties of the tree that those variables are not independent from one another. However, a joint pmf of an infinite set of variables cannot be defined. Therefore, we have to limit our analyzed set of variables to a finite set such that the difference between the result yielded by the finite set and the actual result is negligible. 
With this in mind, we define a new, finite set of random variables with cardinality $M$.
\begin{definition}
	Let $\hat{\mathfrak{T}}^N_M$ be the set of random variables $\mathcal{T}_m^N$ from $m=1$ to $m=M$:
	\begin{equation}
	\hat{\mathfrak{T}}^N_M \triangleq \left\lbrace \mathcal{T}_m^N : m \in \mathbb{N} \; \wedge \; m \leq M \right\rbrace.
	\end{equation}
	The selection of $M$ and its effects on the accuracy of the result are discussed in App.~\ref{app:3}.
\end{definition}
We can now define the joint pmf of the variables in $\hat{\mathfrak{T}}^N_M$ as follows.
\begin{definition}
	Let $p_{\mathcal{T}^N_1,...,\mathcal{T}^N_M} (t_1,...,t_M)$ be the joint pmf of the variables in the set $\hat{\mathfrak{T}}^N_M$, that is:
	\begin{align}
		p_{\mathcal{T}^N_1,...,\mathcal{T}^N_M} (t_1,...,t_M) &\triangleq \Pr\left\lbrace \mathcal{T}^N_1 = t_1,...,\mathcal{T}^N_M = t_m \right\rbrace \\
		&= \Pr\left\lbrace  \left\langle\mathcal{T}^N_1,...,\mathcal{T}^N_M\right\rangle = \left\langle t_{1},...,t_{M}\right\rangle \right\rbrace. \label{prvector}
	\end{align}
	In \eqref{prvector}, a vectorial notation was used instead of the standard notation. This will be useful at some points in the subsequent analysis.
	\label{setor}
\end{definition}

All the statistical information of the number of slots in the tree is contained in the joint pmf of $p_{\mathcal{T}^N_1,...,\mathcal{T}^N_M} (t_1,...,t_M)$. Therefore, if this joint pmf was known, one could directly derive $p_{\mathcal{T}^N}(t)$. Indeed, these two pmfs are related as follows.

\begin{lemma}
	The probability $p_{\mathcal{T}^N}(t)$ can be expressed as a finite sum of values of the joint pmf of $\hat{\mathfrak{T}}^N_M$:
	\begin{equation}
	p_{\mathcal{T}^N}(t)
	= \sum_{\mathfrak{S}} p_{\mathcal{T}^N_1,...,\mathcal{T}^N_M} (t_1,...,t_M),
	\label{knoedel}
	\end{equation}
	where
	\begin{equation}
		 \mathfrak{S} = \left\lbrace \left\langle t_{1},...,t_{M}\right\rangle : \sum_{m=1}^M t_{m} = t-1 \right\rbrace  
	\end{equation}
	is the set of vectors of the possible realizations of the variables in the set $\hat{\mathfrak{T}}^N_M$ whose sum is $t-1$.
	\begin{proof}
		An element in $\mathfrak{S}$ is one distribution of level sizes (in time slots) such that the overall number of time slots in the tree is $t$. Hence, we just need to add the probabilities of all these combinations together ---which is given by the joint pmf of $\hat{\mathfrak{T}}^N_M$--- to obtain the probability of $\mathcal{T}^N = t$.
	\end{proof}
	\label{lemma_8}
\end{lemma}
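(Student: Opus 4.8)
The plan is to treat \eqref{knoedel} as a direct application of the law of total probability, built on the additive decomposition \eqref{avg_mc_s4}. The essential observation is that, once attention is restricted to the truncated set $\hat{\mathfrak{T}}^N_M$, the total number of time slots equals the root slot plus the sum of the per-level counts, so the event $\left\lbrace \mathcal{T}^N = t \right\rbrace$ coincides with the event $\left\lbrace \sum_{m=1}^M \mathcal{T}_m^N = t-1 \right\rbrace$. No dynamical property of the tree is needed; the argument is purely a bookkeeping over disjoint events.

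First I would rewrite $p_{\mathcal{T}^N}(t)$ using the truncated form of \eqref{avg_mc_s4}, namely $\mathcal{T}^N = 1 + \sum_{m=1}^M \mathcal{T}_m^N$, obtaining
\[
p_{\mathcal{T}^N}(t) = \Pr\left\lbrace \sum_{m=1}^M \mathcal{T}_m^N = t-1 \right\rbrace .
\]
Next I would partition this event according to the exact realization of the random vector $\left\langle \mathcal{T}_1^N,\dots,\mathcal{T}_M^N\right\rangle$. Every outcome whose level counts sum to $t-1$ realizes exactly one tuple $\left\langle t_1,\dots,t_M\right\rangle \in \mathfrak{S}$, and distinct tuples index mutually exclusive events. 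Hence the event decomposes as a disjoint union over $\mathfrak{S}$, and by countable additivity of probability its probability is the sum of the individual probabilities $\Pr\left\lbrace \mathcal{T}_1^N = t_1,\dots,\mathcal{T}_M^N = t_M\right\rbrace$. Each of these is, by Definition~\ref{setor}, exactly the joint pmf $p_{\mathcal{T}^N_1,\dots,\mathcal{T}^N_M}(t_1,\dots,t_M)$, which yields \eqref{knoedel}. I would also note that $\mathfrak{S}$ is finite, since each $\mathcal{T}_m^N$ is a nonnegative integer and the constraint $\sum_m t_m = t-1$ bounds every component, so the summation on the right-hand side is a finite sum.

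The only genuinely delicate point --- and, I believe, the reason the statement is phrased over the finite set $\hat{\mathfrak{T}}^N_M$ rather than the full $\mathfrak{T}^N$ --- is the truncation at level $M$. The exact identity \eqref{avg_mc_s4} carries an infinite sum, so replacing it by the partial sum through level $M$ is exact only when the tree almost surely terminates by level $M$; otherwise it introduces the approximation error that the paper defers to App.~\ref{app:3}. I would therefore make explicit that \eqref{knoedel} holds under this truncation, and flag that the accuracy of the identity is controlled entirely by the choice of $M$. Beyond this caveat, the proof is a one-line invocation of disjointness and additivity and requires nothing further.
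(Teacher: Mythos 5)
Your proposal is correct and takes essentially the same approach as the paper's proof: you partition the event $\left\lbrace \mathcal{T}^N = t \right\rbrace$ into the disjoint realizations indexed by $\mathfrak{S}$ and sum their probabilities via the joint pmf, which is exactly the paper's (more tersely stated) argument. Your added remarks --- that $\mathfrak{S}$ is finite and that the identity is exact only up to the truncation at level $M$ deferred to App.~\ref{app:3} --- are accurate refinements of the same argument, not a different route.
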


%
%\begin{definition}
%We denote by $\mathcal{S}_N^A$ the sum of the first $A \leq M$ random variables in $\hat{\mathfrak{T}}^N$. That is:
%\begin{equation}
%\mathcal{S}_N^A \triangleq \sum_{m = 1}^A \mathcal{T}_N^m.
%\label{ajvar}
%\end{equation}
%\end{definition}
%Using this new definition, we can draw the following conclusion:
%\begin{lemma}
%	The pmf $p_{\mathcal{T}_N}(t)$ can be approximated by using the pmf of $\mathcal{S}_N^M$ when $M$ is large:
%	\begin{equation}
%	p_{\mathcal{T}_N}(t) \approx p_{\mathcal{S}_N^M}(t-1).
%	\label{taco}
%	\end{equation}
%\end{lemma}
%
%\begin{equation}
%\mathcal{T}_N \approx 1 + \mathcal{S}_N^M,
%\end{equation}
%
%Given the joint pmf of $\hat{T}_N$, the pmf of the variables in $\mathcal{S}_N^M$ can be expressed as:
%\begin{equation}
%p_{\mathcal{S}_N^M}(s) \triangleq \Pr\left\lbrace\mathcal{S}_N^M  = s \right\rbrace
%=\medmath{\sum_{t_1=0}^s \cdots \hspace{-0.4cm}\sum_{t_{M-1} = 0}^{s-\sum\limits_{i=1}^{M-2} t_m}\hspace{-0.4cm} p_{\mathcal{T}_N^1,\hdots,\mathcal{T}_N^M}\left(t_1,\hdots,t_{M-1},s-\textstyle\sum\limits_{i=1}^{M-1} t_i\right)}.
%\label{laleche}
%\end{equation}
%Although \eqref{laleche} may seem complicated, it just represents the addition of the probabilities of those combinations of  $\left\langle t_1,\hdots,t_M\right\rangle$ that yield the same sum. If we had this $p_{\mathcal{S}_N^M}(s)$, we could convert it directly to the pmf of $\mathcal{T}_N$, as follows:
%The larger $M$ (and hence $\epsilon$), the closer this approximation will be to the real value. 
The next step is to derive an expression for $p_{\mathcal{T}^N_1,...,\mathcal{T}^N_M} (t_1,...,t_M)$ as a function of $N$, since it will lead us to $p_{\mathcal{T}^N}(t)$. However, the derivation of this joint pmf is rather difficult, since we are facing the problem of finding out the relation among numerous variables that are all dependent from one another. In fact, attempting to model the exact dependence among all levels is likely to be cumbersome and even analytically intractable. Therefore, an approximative approach is presented.
%
%
%\begin{equation}
%p_{\mathcal{T}_m} (t_m | N) =  \begin{cases}
%p_{\mathcal{X}_{m-1}}(0 | N) & t_m = 0, \\
%\displaystyle \sum_{i = 1}^{G} p_{\mathcal{X}_{m-1}}\left(G \cdot (t_m-1)+i\right | N) & t_m > 0, \\
%\end{cases}
%\label{brezen}
%\end{equation}
%Furthermore, the conversion between conditional pmfs is obtained in a similar manner:
%
%
%Since $\mathcal{T}_m$ and $\mathcal{X}_m$ are directly related, we can obtain expressions equivalent to \eqref{lentejas} and \eqref{kartoffelsalat} for $\mathcal{X}_m$ instead of $\mathcal{T}_m$. Thus, our problem is now reduced to find $p_{\mathcal{X}_m | \mathcal{X}_{m-1}} (x_m | x_{m-1})$.
%\begin{equation}	
%	\label{cocoloco}
%\end{equation}
%This is the probability of getting $x_m$ collisions at the level $m$, given $x_{m-1}$ collisions at the level $m-1$. 
Namely, we use a Markovian approximation that exploits the level-by-level expanding nature of the trees. 

We will assume that the Markov property holds for our set of variables:
\begin{equation}
\Pr\left\lbrace\mathcal{T}^N_m = t_m | \mathcal{T}^N_{m-1} = t_{m-1}, \hdots, \mathcal{T}^N_{1} = t_{1}\right\rbrace
\cong \Pr\left\lbrace\mathcal{T}^N_m = t_m | \mathcal{T}^N_{m-1} = t_{m-1}\right\rbrace.
\label{lentejas}
\end{equation}
In words, this property implies that the number of time slots in a given level is only influenced by the number of time slots in the previous level. 

This assumption does not hold in general, since both the number of time slots and the distribution of contenders in those slots at the level $m-1$ are needed to calculate the statistics of the number of slots at the level $m$. The distribution of contenders is the result of what happened in the tree since the root node, which means that this information is not contained in the number of slots in the previous level. Therefore, $\mathcal{T}_m^N$ is indeed influenced by previous levels other than  $\mathcal{T}_{m-1}^N$. 

However, it is clear that the shorter the distance, the greater the dependence between two levels. Although it is not the only required information, the number of slots at the previous level is highly influential on the number of slots at any level. Thus, it is worth assuming that $\mathcal{T}^N_{m}$ only depends on $\mathcal{T}^N_{m-1}$, since such an approximation greatly simplifies the analysis and yet provides accurate results, as it will be shown through simulations. 

The first benefit of applying Markov property is the simple form of the joint pmf of the variables in $\hat{\mathfrak{T}}^N_M$, which is shown in the following lemma.
\begin{lemma} The joint pmf $p_{\mathcal{T}^N_1\!,\hdots,\mathcal{T}^N_M} (t_1,...,t_M)$ of the variables in $\hat{\mathfrak{T}}^N_M$ can be approximated as the product of the conditional pmfs of $\mathcal{T}_m$ and $\mathcal{T}_{m-1}$ for any two consecutive levels $m$ and $m-1$:
	\begin{equation}
	p_{\mathcal{T}^N_1\!,\hdots,\mathcal{T}^N_M} (t_1,...,t_M) \cong p_{\mathcal{T}^N_1}(t_1) \cdot  p_{\mathcal{T}^N_2 | \mathcal{T}^N_1}(t_2 | t_1) \cdot \hdots \cdot p_{\mathcal{T}^N_M | \mathcal{T}^N_{M-1}} (t_M | t_{M-1}) .
	\label{kartoffelsalat}
	\end{equation} 
\begin{proof}
	This is a well known property of Markov processes, in which the definition of conditional probability is combined with the Markov property:
	\begin{align}
	p_{\mathcal{T}^N_1\!,\hdots,\mathcal{T}^N_M} (t_1,...,t_M) &= p_{\mathcal{T}^N_M | \mathcal{T}^N_{M-1}\!,\hdots,\mathcal{T}^N_{1}}(t_M | t_{M-1},\hdots,t_1) \cdot p_{\mathcal{T}^N_1\!,\hdots,\mathcal{T}^N_{M-1}}(t_1,...,t_{M-1})\\
	&\cong p_{\mathcal{T}^N_M | \mathcal{T}^N_{M-1}}(t_M | t_{M-1}) \cdot p_{\mathcal{T}^N_1\!,\hdots,\mathcal{T}^N_{M-1}}(t_1,...,t_{M-1})
	\end{align}
	After iteratively applying the same procedure on $p_{\mathcal{T}^N_1\!,\hdots,\mathcal{T}^N_{M-1}}(t_1,...,t_{M-1})$ and onwards, we eventually reach \eqref{kartoffelsalat}.
\end{proof}
\label{lemma_approx}
\end{lemma}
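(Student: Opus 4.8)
The plan is to prove this by combining the chain rule of probability with the Markov approximation \eqref{lentejas}, and then iterating the resulting decomposition backward through the levels. The statement is the standard factorization of a Markov chain's joint distribution, so the work is essentially bookkeeping rather than a deep argument, and the relation is necessarily stated with $\cong$ because one exact step (conditioning) is interleaved with one approximate step (the Markov assumption).

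First I would isolate the topmost level. By the definition of conditional probability, the joint pmf factors \emph{exactly} as
\begin{equation}
p_{\mathcal{T}^N_1,\ldots,\mathcal{T}^N_M}(t_1,\ldots,t_M) = p_{\mathcal{T}^N_M | \mathcal{T}^N_{M-1},\ldots,\mathcal{T}^N_1}(t_M | t_{M-1},\ldots,t_1) \cdot p_{\mathcal{T}^N_1,\ldots,\mathcal{T}^N_{M-1}}(t_1,\ldots,t_{M-1}).
\end{equation}
Applying the Markov property \eqref{lentejas} to the first factor replaces the full history by the single preceding level, so that the leading term is approximated by $p_{\mathcal{T}^N_M | \mathcal{T}^N_{M-1}}(t_M | t_{M-1})$.

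The key structural step is then an induction (equivalently, a telescoping argument) on the residual joint pmf $p_{\mathcal{T}^N_1,\ldots,\mathcal{T}^N_{M-1}}(t_1,\ldots,t_{M-1})$. I would assume that the joint pmf over the first $k$ consecutive levels factors as $p_{\mathcal{T}^N_1}(t_1)\prod_{m=2}^{k} p_{\mathcal{T}^N_m | \mathcal{T}^N_{m-1}}(t_m | t_{m-1})$, and then show the same form for $k+1$ by peeling off its top level exactly as above (conditioning, then applying \eqref{lentejas}). The base case $k=1$ is trivially $p_{\mathcal{T}^N_1}(t_1)$. Carrying the induction up to $k=M$ collects precisely one conditional factor per level and reproduces \eqref{kartoffelsalat}.

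As for the main obstacle, there is none in the strict mathematical sense: the chain rule is exact and the Markov reduction is the already-granted approximation \eqref{lentejas}. The only point deserving care is confirming that the Markov assumption is legitimately invoked at \emph{every} intermediate conditioning, not merely for the outermost level, i.e.\ that $\mathcal{T}^N_m$ may be reduced to depending on $\mathcal{T}^N_{m-1}$ alone at each stage of the recursion. Since \eqref{lentejas} is stated for arbitrary $m$, this is immediate, and the proof closes cleanly.
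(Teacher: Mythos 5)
Your proposal is correct and follows essentially the same route as the paper's proof: peel off the top level with the exact chain rule, apply the Markov approximation \eqref{lentejas} to reduce the conditioning to the previous level, and repeat down to level 1 --- your explicit induction is merely a formalized version of the paper's ``iteratively applying the same procedure.'' No gaps to report.
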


With the result of Lemma \ref{lemma_approx} in mind, we can focus on the derivation of the conditional pmfs of the number of time slots at any level of the tree, provided the number of time slots at the previous level. We will tackle this problem by analyzing first the number of \textit{collisions} (the number of nodes with more than one contender) at each level. The number of collisions at a certain level can be easily translated into the number of time slots at the next level, as it will be shown in the next lemma. But before, we need to define a new variable to model the number of collisions.

\begin{definition}
	Let $\mathcal{X}^N_m$ be the random variable modeling the number of collisions within the level $m$, provided $N$ initial contenders. 
	\label{{def_Xm}}
\end{definition}

\begin{lemma}
	The conditional probability $p_{\mathcal{T}^N_m | \mathcal{T}^N_{m-1}} (t_m | t_{m-1})$ of obtaining $t_m$ time slots at the level $m$, provided $t_{m-1}$ time slots at the level $m-1$ can be expressed as:
	\begin{equation}
	\medmath{p_{\mathcal{T}^N_m | \mathcal{T}^N_{m-1}} (t_m | t_{m-1}) } \medmath{=} \!\begin{cases}
	\!\medmath{\sum_{i = 1}^{G} p_{\mathcal{X}^N_{m-1} | \mathcal{X}^N_{m-2}}\big(0\,\big|\,G \cdot (t_{m-1}-1)+i\big)} & \!\!\medmath{t_m = 0,} \\
	\!\medmath{\displaystyle \sum_{i = 1}^{G} \!\sum_{j = 1}^{G}\! p_{\mathcal{X}^N_{m\!-\!1} \!| \mathcal{X}^N_{m\!-\!2}} \!\big( G\! \cdot\! (t_m\!-\!1)\!+\!i \,\big|\,  G\! \cdot\! (t_{m\!-\!1}\!-\!1)\!+\!j\big)} & \!\!\medmath{t_m > 0,} \\
	\end{cases}
	\label{kaisersemmel}
	\end{equation}
	where $p_{\mathcal{X}^N_m | \mathcal{X}^N_{m-1}} (x_m | x_{m-1})$ is the conditional probability of obtaining $x_m$ collisions at the level $m$, provided $x_{m-1}$ collisions at the level $m-1$.
	\begin{proof}
		We know that every collision at the level $m-1$ produces two new nodes at the level $m$, and that one time slot contains $2 G$ nodes. Thus, we can convert collisions to time slots as follows:
		\begin{equation}
		\mathcal{T}^N_m = \left\lceil \frac{\mathcal{X}^N_{m-1}}{G} \right\rceil
		\label{lagrange}
		\end{equation}
		Owing to the presence of the ceiling function, the relation is not bijective, but several values of $\mathcal{X}^N_{m-1}$ map to the same value of $\mathcal{T}^N_m$. Indeed, given $\mathcal{T}^N_m = t_m$ and $\mathcal{X}^N_{m-1} = x_{m-1}$, any $x_{m-1}$ in the set $\left\lbrace G \cdot (t_m - 1) + i \;:\; 1 \leq i \leq G \right\rbrace$ fulfills \eqref{lagrange}. Hence, the conversion between the marginal probability $p_{\mathcal{T}^N_m} (t_m)$ of obtaining $t_m$ time slots at the level $m$ and the marginal probability $p_{\mathcal{X}^N_{m-1}} (x_{m-1})$ of obtaining $x_{m-1}$ collisions at the level $m-1$ is just a matter of adding together the probabilities of those $x_{m-1}$ that yield the same $t_{m}$:
		\begin{equation}
		p_{\mathcal{T}^N_m} (t_m) =  \begin{cases}
		p_{\mathcal{X}^N_{m-1}}(0) & t_m = 0, \\
		\displaystyle \sum_{i = 1}^{G} p_{\mathcal{X}^N_{m-1}}\left(G \cdot (t_m-1)+i\right) & t_m > 0, \\
		\end{cases}
		\label{brezen}
		\end{equation}
		Hence, in order to deduce the relation between $p_{\mathcal{T}_m | \mathcal{T}_{m-1}} (t_m | t_{m-1})$ and $p_{\mathcal{X}_m | \mathcal{X}_{m-1}} (x_m | x_{m-1})$, exactly the same procedure needs to be applied, but this time with two variables instead of one.
	\end{proof}
	\label{lemma_TTXX}
\end{lemma}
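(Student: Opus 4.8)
The plan is to obtain $p_{\mathcal{T}^N_m | \mathcal{T}^N_{m-1}}$ by pushing forward the conditional collision pmf $p_{\mathcal{X}^N_{m-1} | \mathcal{X}^N_{m-2}}$ through the deterministic map \eqref{lagrange}, applied simultaneously at two consecutive levels. The key observation is that \eqref{lagrange} holds at \emph{every} level: since $Q = 2$, each collision spawns exactly one contention frame of two nodes at the next level, and a time slot packs $G$ such frames, so $\mathcal{T}^N_m = \lceil \mathcal{X}^N_{m-1}/G \rceil$ and, shifting the index down by one, $\mathcal{T}^N_{m-1} = \lceil \mathcal{X}^N_{m-2}/G \rceil$. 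Thus the pair $(\mathcal{T}^N_m, \mathcal{T}^N_{m-1})$ is a deterministic function of $(\mathcal{X}^N_{m-1}, \mathcal{X}^N_{m-2})$, and the whole lemma is a change of variables from collisions to time slots carried out exactly as in the single-variable identity \eqref{brezen}, but now on two coordinates at once.

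First I would identify the preimages. As already used in the marginal case, $\lceil x/G \rceil = t$ holds precisely when $x \in \{G(t-1)+k : 1 \le k \le G\}$ for $t > 0$, while $t = 0$ forces the singleton preimage $x = 0$. Applying this coordinatewise, the event $\{\mathcal{T}^N_m = t_m, \mathcal{T}^N_{m-1} = t_{m-1}\}$ with $t_m > 0$ is the disjoint union of the events $\{\mathcal{X}^N_{m-1} = G(t_m-1)+i, \mathcal{X}^N_{m-2} = G(t_{m-1}-1)+j\}$ over $1 \le i,j \le G$. Summing the joint collision pmf over this rectangle of preimages yields the joint pmf of $(\mathcal{T}^N_m, \mathcal{T}^N_{m-1})$, the two-variable analogue of \eqref{brezen}; dividing by the marginal $p_{\mathcal{T}^N_{m-1}}(t_{m-1})$, itself given by \eqref{brezen}, then produces the conditional. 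The $t_m = 0$ branch of \eqref{kaisersemmel} follows the same way: the preimage of $\mathcal{T}^N_m = 0$ under \eqref{lagrange} is the singleton $\mathcal{X}^N_{m-1} = 0$, so the sum over that coordinate collapses and only the sum over the $G$ collision counts compatible with $\mathcal{T}^N_{m-1} = t_{m-1}$ survives.

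The step I expect to be the main obstacle --- and the one that introduces the only approximation in an otherwise exact computation --- is the passage from the joint pmf to the conditional pmf in the form stated. Writing the joint collision pmf as $p_{\mathcal{X}^N_{m-1} | \mathcal{X}^N_{m-2}} \cdot p_{\mathcal{X}^N_{m-2}}$ and dividing by $p_{\mathcal{T}^N_{m-1}}(t_{m-1}) = \sum_j p_{\mathcal{X}^N_{m-2}}(G(t_{m-1}-1)+j)$ leaves a \emph{weighted} average of the conditional collision pmfs, the weights being the probabilities of the individual representatives $G(t_{m-1}-1)+j$ inside the window. The plain double sum in \eqref{kaisersemmel} is recovered only after discarding these weights, i.e.\ under the assumption that $p_{\mathcal{X}^N_{m-1} | \mathcal{X}^N_{m-2}}$ is essentially insensitive to which of the $G$ collision counts consistent with a given slot count is used as the conditioning value. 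This is of the same flavour as the Markov approximation \eqref{lentejas} already in force: having compressed a level into its slot count, we treat the collision counts compatible with that slot count as interchangeable. I would therefore state this reduction explicitly as the approximation underlying \eqref{kaisersemmel} and defer its justification to the numerical agreement reported in the simulations, rather than attempt an exact derivation.
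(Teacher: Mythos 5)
Your derivation follows the same route as the paper's proof: establish the ceiling map \eqref{lagrange} at two consecutive levels, identify the coordinatewise preimage windows $\left\lbrace G(t-1)+i : 1 \leq i \leq G \right\rbrace$ (singleton $\{0\}$ for $t=0$), and convert collision pmfs into slot pmfs exactly as in the one-variable identity \eqref{brezen} applied on two coordinates at once --- which is precisely how the paper's proof ends, by invoking ``the same procedure\ldots with two variables instead of one.'' Where you go beyond the paper is in the final conditioning step, and your scruple is justified: dividing the joint slot pmf by the marginal $p_{\mathcal{T}^N_{m-1}}(t_{m-1})$ leaves a \emph{weighted} average $\sum_{j} w_j \, p_{\mathcal{X}^N_{m-1}|\mathcal{X}^N_{m-2}}\bigl(\,\cdot \mid G(t_{m-1}-1)+j\bigr)$ with $w_j \propto p_{\mathcal{X}^N_{m-2}}\bigl(G(t_{m-1}-1)+j\bigr)$ and $\sum_j w_j = 1$, a step the paper's proof silently elides. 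One imprecision in your reduction, however: insensitivity of the conditional to the representative $j$ licenses replacing $w_j$ by the uniform $\frac{1}{G}$, which yields $\frac{1}{G}\sum_j(\cdot)$, \emph{not} the unweighted double sum of \eqref{kaisersemmel}; indeed, summing the right-hand side of \eqref{kaisersemmel} over all $t_m \geq 0$ gives $G$ rather than $1$, so the stated formula requires either the weights $w_j$ or a $\frac{1}{G}$ prefactor to be a proper pmf. Your instinct to isolate this passage as the only approximation beyond the Markov property \eqref{lentejas}, and to state it explicitly rather than pass it off as exact, is correct and in this respect sharper than the paper's own proof --- just carry the normalization through when you discard the weights.
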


Provided Lemma \ref{lemma_TTXX}, the problem now is to find an expression for $p_{\mathcal{X}^N_m | \mathcal{X}^N_{m-1}} (x_m | x_{m-1})$ from the available information of the tree. In order to calculate this pmf, we need to know the number of contenders in each of the $x_{m-1}$ collisions of the level $m-1$. Indeed, the probability of producing, e.g., two new collisions is higher if the parent collision occurred with eight contenders than with four contenders. A priori, we cannot know the number of contenders involved in the given $x_{m-1}$ collisions. Instead we need to consider every different possibility and then apply the law of total probability. In order to do so, let us define a new variable for the number of contenders at each level.

\begin{definition}
	Let  $\mathcal{K}^N_m$ be the random variable modeling the total number of contenders which have been involved in collisions at the level $m$, i.e. the number of collided contenders at the level $m$, provided $N$ initial contenders.
	\label{{def_Km}}
\end{definition}
At this point, we are interested in the statistical properties of the distribution of contenders over nodes in the tree. Since contenders are treated in the same manner by the algorithm, and so are the nodes at one level, we can directly transform our \textit{contenders-into-nodes} problem into an equivalent \textit{balls-into-bins} problem. This simplifies the understanding of the problem and allows us to use existing solutions from the literature. 

The next two lemmas deal with the number of ways to distribute balls into bins such that some condition about the number or size of collisions is fulfilled. The results will be useful for subsequent lemmas.
\begin{lemma}
	The number of ways  $\Psi^N_{R,j}$ to arrange $N$ balls into $R$ bins such that $j \leq R$ of them have more than one ball can be obtained by means of the  recursion
	\begin{equation}
	\Psi_{R,j}^N = j \Psi_{R,j}^{N-1} + \left(R - j +1\right) \Psi_{R,j-1}^{N-1} + \Psi_{R-1,j}^{N-1},
	\label{caramba}
	\end{equation}
	with initial conditions $\Psi^N_{N,0} = 1$, $\Psi^1_{1,0} = 1$, $\Psi^1_{1,1} = 0$, and $\Psi^{N>1}_{1,1} = 1$.
	\begin{proof}
		The derivation of this recursion can be found in \cite{bianchi}.
	\end{proof}
	\label{lemma_bianchi}
\end{lemma}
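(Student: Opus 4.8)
The plan is to prove \eqref{caramba} by a direct bijective argument, so the first task is to fix the exact combinatorial meaning of $\Psi^N_{R,j}$ that the stated initial conditions force: it counts the ways to distribute the $N$ (labelled) balls among \emph{exactly} $R$ occupied bins, counted up to relabelling of the bins, so that precisely $j$ of those bins are \emph{collision bins} (hold at least two balls). Because all $R$ bins are occupied, the remaining $R-j$ bins are necessarily singletons. I would first check that the four quoted boundary values are exactly the degenerate instances of this count: $\Psi^N_{N,0}=1$ is the unique all-singletons arrangement, $\Psi^1_{1,0}=1$ and $\Psi^{N>1}_{1,1}=1$ are the two single-bin arrangements (a lone ball, respectively one collision bin holding all $N$ balls), and $\Psi^1_{1,1}=0$ records that a single ball cannot form a collision. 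These anchor an induction on $N$, since every term on the right of \eqref{caramba} has $N-1$ balls.

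The core of the argument is to single out one distinguished ball --- say ball $N$ --- and classify every arrangement by the final size of the bin in which that ball lands, then delete it. This deletion is a bijection onto a disjoint union of three families of $(N-1)$-ball arrangements, producing the three summands. \emph{First}, if ball $N$'s bin ends with at least three balls, deleting it leaves that bin a collision bin, giving an arrangement with $R$ bins and $j$ collision bins; ball $N$ could have been dropped into any of its $j$ collision bins, contributing $j\,\Psi^{N-1}_{R,j}$. \emph{Second}, if ball $N$ completes a bin of size exactly two, deleting it turns that bin into a singleton, giving an arrangement with $R$ bins and $j-1$ collision bins, in which exactly $R-(j-1)=R-j+1$ singletons were available for ball $N$ to join, contributing $(R-j+1)\,\Psi^{N-1}_{R,j-1}$. \emph{Third}, if ball $N$ sits alone, deleting it removes that bin outright, leaving $R-1$ occupied bins with the same $j$ collision bins, and there is a single way to re-open a fresh bin for it, contributing $\Psi^{N-1}_{R-1,j}$. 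These cases are exhaustive and mutually exclusive (indexed by whether ball $N$'s bin has final size $\ge 3$, equal to $2$, or equal to $1$), and summing them yields \eqref{caramba}.

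The step needing the most care is confirming that this deletion map is a genuine bijection with exactly these multiplicities, and this is precisely where the ``all $R$ bins occupied'' convention does the work. In the middle term the coefficient $R-j+1$ is the number of singleton bins, which equals $R-(j-1)$ \emph{only} because no bin is empty; allowing empty bins would split that case and destroy the clean coefficient. Likewise, the third case must reduce the bin count from $R$ to $R-1$ rather than introduce a factor counting empty bins, which is the reason opening a new bin contributes a single term with coefficient one (and hence why the bins must be treated as unlabelled). I would therefore state the occupancy convention explicitly at the outset, verify that each of the three reconstruction rules lands in the correct family and that together they recover every arrangement exactly once, and only then read off the recursion. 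The bookkeeping is otherwise elementary; the one genuinely delicate point is the empty-bin / bin-count boundary of the third case, so I would treat the small-$R$ and small-$N$ instances separately to ensure the recursion meshes with the initial conditions above.
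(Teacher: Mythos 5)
Your proof is correct, and it is worth noting that the paper itself offers no argument here---its ``proof'' is a one-line citation to \cite{bianchi}---so your write-up supplies a derivation where the paper defers entirely to a reference. Your deletion bijection is the standard route (essentially the derivation in the cited work): conditioning on whether the distinguished ball's bin has final size at least $3$, exactly $2$, or exactly $1$ gives the three summands with multiplicities $j$, $R-j+1$, and $1$, and this checks out against the initial conditions (e.g., the recursion forces $\Psi^2_{2,1}=0$ and then $\Psi^3_{2,1}=1\cdot 0+2\cdot 1+1=3$, which is the correct count). Your most useful contribution is making explicit a convention the paper leaves implicit: the condition $\Psi^N_{N,0}=1$ (rather than $N!$) forces $\Psi^N_{R,j}$ to count partitions of $N$ labelled balls into exactly $R$ non-empty \emph{unlabelled} blocks, $j$ of which have size at least two. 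This is precisely the convention the rest of the paper relies on when it later multiplies $\Psi^N_{b_t,x}$ by ${R\choose b_t}\,b_t!$ in Lemma \ref{lemma_gamma}, and by ${Z\choose i}\,i!$ in the pmf of $\mathcal{X}^N_m$, to reinstate the choice and labelling of the occupied bins; your observation that the coefficient $R-j+1$ would fail if empty bins were permitted, and that the third case must decrement the bin count with coefficient one, identifies exactly why that convention is needed. In short, your argument is sound, self-contained, and makes the paper's combinatorial foundation explicit where the original text does not.
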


\begin{lemma}
	Given an uniformly random distribution of $N$ balls into $R$ bins, the number of ways $\Gamma^{N,k}_{R,x}$ to generate  $x \leq R$ bins with more than one ball, such that the total number of balls occupying those $x$ bins is $k \leq N$, can be computed as:
	\begin{equation}
	\Gamma^{N,k}_{R,x} = \Psi^N_{x + N - k,x} {{R}\choose{x + N - k}} (x + N - k) !,
	\label{guthrie}
	\end{equation}
	where $\Psi^N_{R,j}$ was given in Lemma \ref{lemma_bianchi}.
	\begin{proof}
			Let us define $b_s$ and $b_t$ as the number of bins with only one ball and one or more balls, respectively, such that:
			\begin{equation}
			b_t = b_s + x.
			\end{equation}
			We have $N$ balls, $k$ of which are `contenders'. This implies that $N - k$ balls are alone in their occupied bin. Consequently:
			\begin{equation}
			b_s = N - k.
			\label{aladeuna}
			\end{equation}
			Therefore, we have a total of $x + N - k$ occupied bins, $x$ with collisions and $N - k$ with single balls:
			\begin{equation}
			b_t = x + N - k.
			\label{aladedos}
			\end{equation}
			Knowing this, we can compute the number of ways to arrange $N$ balls into $b_t$ bins such that $b_x$ of them have more than one ball, as given in Lemma \ref{lemma_bianchi}.  Finally, we just need to compute the number of ways to choose $b_t$ bins out of $R$ possible bins ---${{R}\choose{b_t}}$--- and the number of ways to arrange those bins --$b_t !$---. As a result, our final expression is:
			\begin{equation}
			\Gamma^{N,k}_{R,x} = \Psi^N_{b_t,b_x} {{R}\choose{b_t}} b_t !
			\label{aladetres}
			\end{equation}
			After combining \eqref{aladeuna}, \eqref{aladedos} and \eqref{aladetres}, we obtain \eqref{guthrie}. 
	\end{proof}
	\label{lemma_gamma}
\end{lemma}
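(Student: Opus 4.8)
The plan is to express the desired count as a product of two independent combinatorial factors: the number of ways to distribute the $N$ balls among the \emph{occupied} bins with the prescribed collision structure, and the number of ways to decide \emph{which} of the $R$ distinguishable bins are the occupied ones. The bridge to Lemma~\ref{lemma_bianchi} is the observation that, once $x$ and $k$ are fixed, the total number of occupied bins is completely determined, so $\Psi$ can be invoked with an explicit first index.

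First I would split the occupied bins into the $x$ collision bins and the singleton bins. The $x$ collision bins hold $k$ balls by hypothesis, so the remaining $N-k$ balls each sit alone, forcing exactly $b_s = N-k$ singleton bins. Hence the total number of occupied bins is $b_t = b_s + x = x + N - k$, regardless of how the $k$ balls are apportioned among the collision bins --- this is the key structural fact that pins down the problem.

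Next I would count the distributions of the $N$ labelled balls into these $b_t$ occupied bins subject to all $b_t$ being non-empty and exactly $x$ of them holding more than one ball. This is exactly what $\Psi$ enumerates in Lemma~\ref{lemma_bianchi}, so the count equals $\Psi^N_{b_t,x} = \Psi^N_{x+N-k,\,x}$. It then remains to restore the bin labels: since the bins are distinguishable, I would choose the $b_t$ occupied bins among the $R$ available in $\binom{R}{b_t}$ ways and attach the $b_t$ content-distinguished groups to them in $b_t!$ ways. Multiplying the two contributions gives $\Gamma^{N,k}_{R,x} = \Psi^N_{x+N-k,x}\binom{R}{x+N-k}(x+N-k)!$, which is \eqref{guthrie} after substituting $b_t = x+N-k$.

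The step I expect to require the most care is the label-restoration factor: I must confirm that $\Psi$ counts bin-\emph{unlabeled} patterns, so that the extra $b_t!$ for assigning the distinct groups to the chosen bins is genuinely needed and introduces no double counting. Concretely, I would verify that the map sending a full arrangement to the pair consisting of its partition pattern and its ordered selection of occupied bins is a bijection; because the balls are labelled, the groups are pairwise distinct, so the $b_t!$ assignments yield distinct realizations and the two factors multiply cleanly without overcounting.
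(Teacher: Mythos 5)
Your proposal is correct and follows essentially the same route as the paper's proof: fixing $b_s = N-k$ singleton bins to pin down $b_t = x+N-k$ occupied bins, invoking $\Psi^N_{x+N-k,x}$ from Lemma~\ref{lemma_bianchi}, and restoring bin labels via $\binom{R}{x+N-k}(x+N-k)!$. Your added check that $\Psi$ enumerates bin-unlabeled partitions into non-empty groups (so the $b_t!$ factor neither under- nor over-counts) is consistent with how the paper itself uses $\Psi$ together with $\binom{Z}{i}\,i!$ elsewhere, and merely makes explicit a convention the paper leaves implicit.
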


We can now apply the result from Lemma \ref{lemma_gamma} to obtain the following intermediate probability, which will be employed in Lemma \ref{lemma_XX} to derive $p_{\mathcal{X}^N_m | \mathcal{X}^N_{m-1}} (x_m | x_{m-1})$.
\begin{lemma}
	The probability of having $\mathcal{K}^N_{m} = k_{m}$ collided contenders that formed $\mathcal{X}^N_{m} = x_{m}$ collisions at the level $m$ is
	
	\begin{equation}
	p_{\mathcal{K}^N_{m} | \mathcal{X}^N_{m}}(k_{m} | x_{m}) = \frac{\Gamma^{N,k_{m}}_{2^m,x_{m}}}{\sum\limits_{j = 0}^N \Gamma^{N,j}_{2^m,x_{m}}}.
	\label{muchas}
	\end{equation}
	\begin{proof}
		The number of ways to generate with exactly $k_{m}$ contenders from $x_{m}$ collisions is obtained by $\Gamma^{N,k_{m}}_{2^m,x_{m}}$, as given in Lemma \ref{lemma_gamma}. The number of bins in our case is $2^m$, since in each level of the tree the maximum number of nodes doubles, starting from 1 at level $m=0$. Finally, $\Gamma^{N,k_{m}}_{2^m,x_{m}}$ is divided by the total number of ways to produce $\mathcal{X}_{m} = x_{m}$ collisions for any possible value of $\mathcal{K}_{m}$, which is just the summation of $\Gamma^{N,j}_{2^m,x_{m}}$ from $j=0$ to $j = N$. 
	\end{proof}
	\label{lemma_KX}
\end{lemma}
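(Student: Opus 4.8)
The plan is to prove this as a direct application of the definition of conditional probability inside the uniform balls-into-bins model introduced just before Lemma~\ref{lemma_bianchi}. Writing $R = 2^m$ for the number of nodes (bins) available at level $m$, I would first establish that the $N$ contenders are spread uniformly at random over these $R$ nodes. In a binary ($Q=2$) tree the node a contender occupies at level $m$ is determined by the sequence of $m$ independent, fair binary splitting choices it has made, so its index is uniform on $\{1,\dots,2^m\}$ and distinct contenders choose independently. Hence all $R^N = (2^m)^N$ placements are equally likely, and the probability of any event equals its number of favorable placements divided by $(2^m)^N$.

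The second step is to expand the conditional probability by its definition,
\begin{equation}
p_{\mathcal{K}^N_m | \mathcal{X}^N_m}(k_m | x_m) = \frac{\Pr\{\mathcal{K}^N_m = k_m,\ \mathcal{X}^N_m = x_m\}}{\Pr\{\mathcal{X}^N_m = x_m\}},
\end{equation}
and to observe that, since every placement carries the identical weight $1/(2^m)^N$, both the numerator and the denominator equal a count divided by $(2^m)^N$. The common normaliser therefore cancels, and the conditional probability collapses to a ratio of pure counts.

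The third step supplies those two counts from Lemma~\ref{lemma_gamma}. The numerator event --- exactly $x_m$ collided bins whose collided balls total $k_m$, among $R = 2^m$ bins --- is counted precisely by $\Gamma^{N,k_m}_{2^m,x_m}$. For the denominator I would hold the number of collisions at $x_m$ but let the number of collided contenders range freely; partitioning the conditioning event $\{\mathcal{X}^N_m = x_m\}$ into the disjoint sub-events $\{\mathcal{K}^N_m = j,\ \mathcal{X}^N_m = x_m\}$ for $j = 0,\dots,N$, the favorable counts add, giving $\sum_{j=0}^N \Gamma^{N,j}_{2^m,x_m}$. Dividing the numerator count by this sum yields exactly~\eqref{muchas}.

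I expect the true difficulty to lie entirely in the first step: justifying the uniform balls-into-bins reduction and confirming that $\Gamma$ is invoked with the correct parameters. One must check that ``collided contenders at level $m$'' coincides with ``balls in bins holding more than one ball,'' and, crucially, that the relevant number of bins is the maximal node count $2^m$ rather than the number of nodes the tree actually realises --- using the latter would smuggle back the dependence on the tree's earlier evolution that the uniform model is meant to abstract away. Once this modeling identification is granted, the remaining steps are immediate consequences of equal-likelihood counting and the partition of the conditioning event.
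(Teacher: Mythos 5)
Your proposal is correct and takes essentially the same route as the paper: both identify level $m$ with a uniform placement of $N$ balls into the maximal $R=2^m$ bins, count the favorable arrangements with $\Gamma^{N,k_m}_{2^m,x_m}$ from Lemma~\ref{lemma_gamma}, and normalize by $\sum_{j=0}^{N}\Gamma^{N,j}_{2^m,x_m}$ obtained by partitioning the conditioning event $\{\mathcal{X}^N_m = x_m\}$ over all values of $\mathcal{K}^N_m$. The paper's proof merely leaves implicit the steps you spell out --- the equal-likelihood of all $(2^m)^N$ placements, the cancellation of that common normalizer in the conditional probability, and the observation that the bin count must be the maximal $2^m$ rather than the realized number of nodes --- so your version is a more explicit rendering of the same argument.
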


The computation of the probability of generating $x_m$ collisions at the level $m$, given $x_{m-1}$ collisions and $k_{m-1}$ contenders at the level $m-1$ is another interesting statistic to investigate, since it will be also used to compute $p_{\mathcal{X}^N_m | \mathcal{X}^N_{m-1}} (x_m | x_{m-1})$  in Lemma \ref{lemma_XX}. The approach to obtain this probability relies on seeing the problem as a number theory problem. Namely, we will cope with integer partitions. The reason why this approach was chosen is illustrated in the following example.

\begin{example}
	Let us consider a scenario where $x_{m-1} =  4$ and $k_{m-1} = 12$. There are five different ways to decompose 12 contenders into 4 collisions, which are the five different partitions of 12 in 4 parts, such that every part is greater than one. Namely, these partitions are:
	\[\medmath{
		\begin{array}{ccc}
		(2,2,2,6) & \Rightarrow & 2+2+2+6=12, \\ 
		(2,2,3,5) & \Rightarrow & 2+2+3+5=12, \\ 
		(2,2,4,4) & \Rightarrow & 2+2+4+4=12, \\ 
		(2,3,3,4) & \Rightarrow & 2+3+3+4=12, \\ 
		(3,3,3,3) & \Rightarrow & 3+3+3+3=12.
		\end{array}} 
	\]
	
	At this point, it is easy to see why it is interesting to decompose $k_{m-1}$ into partitions. Given a certain partition of contenders at the level $m-1$, say $(2,2,3,5)$, it is immediate to compute the probability of $x_m$ collisions at the level $m$. We only need to compute the probability of generating 0, 1 or 2 new collisions for every existing collision (i.e., for every part of the partition), which is now simple since we know the number of contenders in each one.
	\label{ex_part}
\end{example}

We can incorporate partitions into the derivation of $p_{\mathcal{X}^N_m | \mathcal{X}^N_{m-1}} (x_m | x_{m-1})$ by using again the law of total probability. In order to do so, we first need to compute the probability of each partition to appear. 

%Nomenclature
%TODO
\begin{definition}
	Let $\mathfrak{P}^{k,x}$ be the set of partitions of $k$ in $x$ parts greater than 1. An element of $\mathfrak{P}^{k,x}$ is a partition $\pi^{k,s}_i$, such that:
	\begin{equation}
	\mathfrak{P}^{k,x} \triangleq \big\lbrace \pi^{k,x}_i :  i \in \left\lbrace 1, ..., \Pi(k,x)\right\rbrace \big\rbrace, \label{tomatina}
	\end{equation}
	where $\Pi(k,x)$ is the number of partitions of $k$ in $x$ parts greater than 1.
\end{definition}

\begin{definition}
	Let $\mathcal{P}^{k,x}$ be the random variable modeling one randomly chosen partition of $k$ contenders in $x$ parts greater than 1. That partition represents the distribution of the collided balls after an uniformly random allocation of $N$ balls into $R$ bins.
	\label{def_P}
\end{definition}
\begin{definition}
	  Let $p_{\mathcal{P}^{k,x}}(\pi^{k,x}_i)$ be the pmf of $\mathcal{P}^{k,x}$:
	  \begin{equation}
	  p_{\mathcal{P}^{k,x}}(\pi^{k,x}_i)  \triangleq \Pr\left\lbrace\mathcal{P}^{k,x} = \pi^{k,x}_i \;|\; \mathcal{X}_{m}^{N} = x, \mathcal{K}_{m}^{N}= k\right\rbrace, 
	  \label{bocapez}
	  \end{equation}
	  which represents the probability of ending up with the partition $\pi^{k,x}_i$ after a random arrangement of $k$ contenders into $x$ collisions at any level $m$. Note that the conditions in \eqref{bocapez} are implicit in the definition of $\mathcal{P}^{k,x}$, which allows us to simplify the notation.
\end{definition}
\begin{definition}
	 Let $\eta^{k,x}_{i,j}$ be a part of the partition $\pi^{k,x}_i$, for $j \in \{1, \hdots, x\}$. According to what was stated so far, the following relations hold:
	 \begin{equation}
	 \eta^{k,x}_{i,j} > 1
	 \end{equation}
	 \begin{equation}
	 \sum_{j=1}^{x} \eta^{k,x}_{i,j} = k,
	 \end{equation}
	 \begin{equation}
	 \pi^{k,x}_i = \left\langle\eta^{k,x}_{i,1}, \eta^{k,x}_{i,2}, \hdots, \eta^{k,x}_{i,x}\right\rangle.
	 \end{equation}
\end{definition}

\begin{definition}
	Let $\#^{k,x}_{i,a}$ be the number of occurrences of the number $a$ within the partition $\pi^{k,x}_i$. We can formally define this new variable as follows:
	\begin{equation}
	\#^{k,x}_{i,a} \triangleq \sum_{j=1}^{x} \left[\eta^{k,x}_{i,j} = a\right],
	\end{equation}
	where $\left[ \cdot \right]$ is the Iverson bracket, which returns 1 if the proposition inside is true. 
\end{definition}

With these definitions, we can compute the probability $p_{\mathcal{P}^{k,x}}(\pi^{k,x}_i)$ of encountering the partition $\pi^{k,x}_i$ as follows.
\begin{lemma}
	The probability to have the specific partitioning  $\pi^{k,x}_i$  of $k$ contenders in $x$ collisions is 
	\begin{equation}
	p_{\mathcal{P}^{k,x}}(\pi^{k,x}_i) = \frac{k!}{\Psi^{k}_{x,x}} \prod_{j=1}^{x} \frac{1}{\eta^{k,x}_{i,j}! \cdot \#^{k,x}_{i,a}!}.
	\label{weisswurst}
	\end{equation}
	\begin{proof}
		The derivation of \eqref{weisswurst} is explained in the Appendix \ref{LOL}.
	\end{proof}
		\label{lemma_part}
\end{lemma}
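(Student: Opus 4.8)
The plan is to obtain $p_{\mathcal{P}^{k,x}}(\pi^{k,x}_i)$ as the ratio of the number of arrangements that realise the prescribed partition $\pi^{k,x}_i$ to the total number of arrangements compatible with the conditioning event $\{\mathcal{X}^N_m = x,\,\mathcal{K}^N_m = k\}$. The event fixes that the $k$ collided, \emph{distinguishable} contenders occupy exactly $x$ collision bins, each holding at least two of them. By the symmetry of the uniform allocation assumed in Definition~\ref{def_P}, conditioning reduces to distributing the $k$ labelled contenders uniformly over the $x$ collision bins subject to every bin receiving $\geq 2$ balls; hence each admissible labelled configuration is equally likely and the problem becomes pure counting.

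First I would identify the denominator. The number of ways to split $k$ labelled balls into $x$ groups, every group of size at least two, is exactly $\Psi^{k}_{x,x}$: this is Lemma~\ref{lemma_bianchi} evaluated at $N=k$, $R=x$, $j=x$, i.e. all $x$ bins are collisions and none is a singleton. (That the initial conditions $\Psi^N_{N,0}=1$ force the unlabelled‑bin reading of $\Psi$ is exactly what makes this the right normalising constant.) This produces the factor $k!/\Psi^{k}_{x,x}$ in front of \eqref{weisswurst}.

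Next I would count the numerator, namely the configurations whose occupancy multiset equals $\pi^{k,x}_i = \langle \eta^{k,x}_{i,1},\dots,\eta^{k,x}_{i,x}\rangle$. Assigning the $k$ distinguishable contenders to ordered groups of the prescribed sizes is a multinomial count, $k!/\prod_{j=1}^{x}\eta^{k,x}_{i,j}!$. However, groups of equal size are interchangeable and yield the very same unordered partition, so this overcounts by the number of such permutations, $\prod_a \#^{k,x}_{i,a}!$, where the product runs over the distinct part values $a$ with multiplicities $\#^{k,x}_{i,a}$. Dividing gives the favourable count $\frac{k!}{\prod_{j}\eta^{k,x}_{i,j}!\cdot\prod_a \#^{k,x}_{i,a}!}$, and forming the ratio with $\Psi^{k}_{x,x}$ reproduces \eqref{weisswurst} precisely.

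The main obstacle is the bookkeeping of indistinguishability that produces the $\#^{k,x}_{i,a}!$ correction: one must separate the labelled contenders (for which the multinomial is appropriate) from the symmetry among equal‑size groups, while not double‑correcting for the bins whose distinguishability is already absorbed into $\Psi^{k}_{x,x}$. To make this airtight I would run a labelled double‑counting argument and let a common factor cancel: the number of surjections of the $k$ labelled balls onto labelled bins with each bin $\geq 2$ is $x!\,\Psi^{k}_{x,x}$, and the number of those of occupancy type $\pi^{k,x}_i$ is $\frac{x!}{\prod_a \#^{k,x}_{i,a}!}\cdot\frac{k!}{\prod_{j}\eta^{k,x}_{i,j}!}$. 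The factor $x!$ cancels in the ratio, leaving \eqref{weisswurst} and removing any ambiguity about whether the bins are treated as labelled.
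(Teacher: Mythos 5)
Your proposal is correct and follows essentially the same route as the paper's Appendix~\ref{LOL}: the favourable count is the multinomial $k!/\prod_j \eta^{k,x}_{i,j}!$ corrected by $\prod_a \#^{k,x}_{i,a}!$ for equal-size parts (the paper builds this via telescoping binomial choices rather than quoting the multinomial directly), normalised by $\Psi^{k}_{x,x}$ as the total number of ways to split $k$ balls into $x$ groups all of size at least two. Your labelled-bin double-counting check, where the factor $x!$ cancels between numerator and denominator, is a small but welcome addition that the paper leaves implicit, and it correctly resolves the unlabelled reading of $\Psi$ forced by the initial condition $\Psi^N_{N,0}=1$.
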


%\mg{Up to here connections were flowless. But should we rename the big joint condiitonal probabilities to better guide the reader with which one we are tackling with. I think here there is a jump.}
At this point, we have gathered all the information about the number and the distribution of the contenders that occupy some given collisions. This information almost suffices to compute the pmf of the number of collisions at the level $m$. The only missing part is an expression for computing the probability of generating a certain number of \textit{children} collisions, provided that we know the size of the \textit{parent} collision. We address the derivation of such an expression hereunder.
\begin{definition}
	Let $\mathcal{Y}_\eta$ be the random variable modeling number of child collisions of a parent collision of $\eta$ contenders. Since we are analyzing a Binary Tree Algorithm, the sample space of $\mathcal{Y}_\eta$ is simply $\{0,1,2\}$, i.e., at most two collisions can be children of one parent collision.
	\label{{def_Y}}
\end{definition}

\begin{lemma}
	The probability $p_{\mathcal{Y}_\eta}(y_\eta)$ of generating $y_\eta$ children collisions, provided a parent collision of size $\eta$ is:
	\begin{align}
	&p_{\mathcal{Y}_2}(y_2) = \begin{cases}
	\frac{1}{2} & y_2 = 0\\
	\frac{1}{2} & y_2 = 1\\
	0 & y_2 = 2
	\end{cases}
	&\text{If }\eta = 2,
	\\
	&p_{\mathcal{Y}_\eta}(y_\eta) = \begin{cases}
	0 & y_\eta = 0\\
	(\eta+1) \cdot \left( \frac{1}{2} \right)^{\eta-1} & y_\eta = 1\\
	1 - (\eta+1) \cdot \left( \frac{1}{2} \right)^{\eta-1} & y_\eta = 2
	\end{cases}
	&\text{If }\eta > 2.
	\end{align}
	\begin{proof}
			This pmf is the result of a simple combinatorial problem that can be decomposed in two different cases. If $\eta=2$, it is impossible two obtain 2 new collisions, and the events of 0 and 1 collisions are equally likely:
			\begin{equation}
			p_{\mathcal{Y}_2}(y_2) = \begin{cases}
			\frac{1}{2} & y_2 = 0,\\
			\frac{1}{2} & y_2 = 1,\\
			0 & y_2 = 2.
			\end{cases}
			\end{equation}
			On the other hand, if $\eta > 2$, it is impossible to generate 0 collisions, being the remaining options 1 or 2 collisions. In order to generate only 1 collision, one of the nodes needs to either be empty or contain a single contender. The probability of this situation can be computed by adding the probabilities of the following independent events: no contender chooses the first node, a single contender chooses the the first node, all contenders choose the first node, and all but one contenders choose the first node. Since choosing a node is a Bernoulli experiment with probability $\nu =\frac{1}{2}$, the probabilities of these events follow a binomial distribution:
			\begin{align}
				\Pr\left\{1 \text{ collision}\right\} &= \nu^\eta + \eta \nu (1 - \nu)^{\eta-1} + (1-\nu)^\eta + \eta\nu^{\eta-1} (1 - \nu)\\
				&= (\eta+1) \cdot \left( \frac{1}{2} \right)^{\eta-1}
			\end{align}
			Once we have computed the probability of 1 collision, the probability of 2 collisions is just the reciprocal:
			\begin{equation}
			p_{\mathcal{Y}_\eta}(y_\eta) = \begin{cases}
			0 & y_\eta = 0,\\
			(\eta+1) \cdot \left( \frac{1}{2} \right)^{\eta-1} & y_\eta = 1,\\
			1 - (\eta+1) \cdot \left( \frac{1}{2} \right)^{\eta-1} & y_\eta = 2.
			\end{cases}
			\end{equation}
	\end{proof}
\end{lemma}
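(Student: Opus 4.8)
The plan is to view the splitting of a parent collision as an independent allocation experiment, exactly the balls-into-bins picture used throughout the section: each of the $\eta$ contenders independently selects one of the two child nodes with probability $\nu = 1/2$, and $\mathcal{Y}_\eta$ counts how many of those two nodes end up holding at least two contenders. Letting $X$ denote the number of contenders that select the first child, we have $X \sim \mathrm{Binomial}(\eta, 1/2)$ and the second child holds $\eta - X$; a child is a collision precisely when its count exceeds one. The whole lemma then reduces to reading off, for each value of $X$, whether zero, one, or two children are collisions, and summing the corresponding binomial probabilities.

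First I would dispose of the degenerate case $\eta = 2$ by direct enumeration. With only two contenders the two children cannot both exceed one, so $y_2 = 2$ is impossible; the two contenders either land in the same child (one collision) or in different children (no collision). Among the four equally likely outcomes, two place both contenders in a common node and two split them, so each event has probability $1/2$, yielding the stated two-point distribution on $\{0,1\}$.

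For $\eta > 2$ I would first note that $y_\eta = 0$ is impossible, since distributing three or more contenders over two nodes forces at least one node to contain at least two. It then suffices to compute $\Pr\{\mathcal{Y}_\eta = 1\}$, the event that exactly one child is a collision, i.e.\ that one child holds at most one contender while the other holds at least two. In terms of $X$ this is exactly the boundary event $X \in \{0, 1, \eta-1, \eta\}$, and summing $\Pr\{X=0\} + \Pr\{X=1\} + \Pr\{X=\eta-1\} + \Pr\{X=\eta\} = (1 + \eta + \eta + 1)(1/2)^\eta = (\eta+1)(1/2)^{\eta-1}$ gives the $y_\eta = 1$ entry; the value at $y_\eta = 2$ is then its complement.

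The only real subtlety, and the reason the two cases must be separated, is the behaviour at the boundary values of $X$. For $\eta > 2$ the four boundary indices $\{0, 1, \eta-1, \eta\}$ are genuinely distinct and each corresponds unambiguously to a single collision, so no inclusion--exclusion correction is required; but at $\eta = 2$ the indices $1$ and $\eta - 1$ coincide and, moreover, the outcome $X = 1$ there produces \emph{zero} collisions rather than one, so the generic formula breaks down. Keeping these edge cases straight is the main thing to verify; everything else is a routine binomial computation.
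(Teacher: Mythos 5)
Your proof is correct and follows essentially the same route as the paper's: both model the split as a $\mathrm{Binomial}(\eta,\tfrac{1}{2})$ allocation, identify the event of exactly one child collision with the boundary outcomes $X \in \{0,1,\eta-1,\eta\}$ summing to $(\eta+1)\left(\tfrac{1}{2}\right)^{\eta-1}$, take the complement for two collisions, and treat $\eta=2$ by direct enumeration. Your explicit remark that the four boundary indices are distinct only when $\eta>2$ (and that $X=1$ yields zero collisions when $\eta=2$) makes the case split slightly more rigorous than the paper's version, but the argument is the same.
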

As last step, we need to stitch together all the results that we have obtained in order to get a closed-form expression for $p_{\mathcal{X}^N_m | \mathcal{X}^N_{m-1}} (x_m | x_{m-1})$. The following three lemmas build upon the previous lemmas and yield such an expression.
\begin{lemma}
	The probability of having $x_{m}$ collisions at the level $m$, provided $x_{m-1}$ collisions and $k_{m-1}$ contenders partitioned in $\pi^{k_{m-1},x_{m-1}}_i$ at the level $m-1$ is
	\begin{multline}
p_{\mathcal{X}^N_m | \mathcal{X}^N_{m-1}, \mathcal{K}^N_{m-1}, \mathcal{P}^{k_{m-1},x_{m-1}}} (x_m | x_{m-1}, k_{m-1}, \pi^{k_{m-1},x_{m-1}}_i) = \\ = p_{\mathcal{Y}_{\eta^{k,x}_{i,1}}}\!\left(y_{\eta^{k,x}_{i,1}}\right) \ast \hdots \ast p_{\mathcal{Y}_{\eta^{k,x}_{i,x_{m-1}}}}\!\!\!\left(y_{\eta^{k,x}_{i,x_{m-1}}}\right),
\label{burrito}
	\end{multline}
	where $\ast$ denotes the discrete convolution.
	\begin{proof}
		Given a certain distribution (partition) of contenders, we can use $p_{\mathcal{Y}_\eta}(y_\eta)$ to compute the probability that some collision (part) at the level $m-1$ generates 0, 1 or 2 collisions at the level $m$. Furthermore, since the subtrees generated by the parent collisions are not related, variables $\mathcal{Y}_\eta$ are independent from one another. Therefore, we can compute the pmf of the sum of all $\mathcal{Y}_\eta$ as the discrete convolution of all of them.
	\end{proof}
	\label{lemma_3}
\end{lemma}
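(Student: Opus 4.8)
The plan is to identify the conditioned random variable $\mathcal{X}^N_m$ with a sum of independent per-parent contributions and then invoke the elementary fact that the pmf of a sum of independent discrete random variables equals the convolution of their pmfs. First I would exploit the full conditioning: we are given not only that level $m-1$ carries $x_{m-1}$ collisions and $k_{m-1}$ collided contenders, but also the exact partition $\pi^{k_{m-1},x_{m-1}}_i = \langle \eta^{k,x}_{i,1}, \ldots, \eta^{k,x}_{i,x_{m-1}} \rangle$ describing how those contenders are spread over the collisions. Under this conditioning each parent collision $j$ has a \emph{known} size $\eta^{k,x}_{i,j}$, so the number of children collisions it spawns at level $m$ is exactly the variable $\mathcal{Y}_{\eta^{k,x}_{i,j}}$ of Definition \ref{{def_Y}}, whose pmf was computed in the preceding lemma. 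Since every collision at level $m$ descends from precisely one collision at level $m-1$, the total count decomposes as
\begin{equation}
\mathcal{X}^N_m = \sum_{j=1}^{x_{m-1}} \mathcal{Y}_{\eta^{k,x}_{i,j}}.
\end{equation}

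The step I expect to carry the real probabilistic weight is establishing that the summands $\mathcal{Y}_{\eta^{k,x}_{i,1}}, \ldots, \mathcal{Y}_{\eta^{k,x}_{i,x_{m-1}}}$ are mutually independent once the partition is fixed. I would argue this directly from the splitting mechanism of the Binary Tree Algorithm: every collided contender draws its binary label by an independent fair coin flip, and the contenders assigned to distinct parent collisions form disjoint sets. Consequently, the way the contenders of one parent distribute over its two children is governed by a collection of coin flips that is entirely separate from the flips governing any other parent. Hence the number of children collisions produced by one parent is stochastically independent of those produced by the others ---the subtrees rooted at different parents evolve independently--- and this is exactly the hypothesis needed for the convolution to apply. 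A point worth stating carefully is that this independence is a property of the conditioned experiment: it is the conditioning on the partition that pins down each parent's size and thereby decouples the counts.

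Finally, with independence in hand, the pmf of the sum in the displayed equation is the convolution of the pmfs of its summands. Applying this identity to the $x_{m-1}$ independent variables $\mathcal{Y}_{\eta^{k,x}_{i,j}}$ gives
\begin{equation}
p_{\mathcal{X}^N_m | \mathcal{X}^N_{m-1}, \mathcal{K}^N_{m-1}, \mathcal{P}^{k_{m-1},x_{m-1}}}(x_m | x_{m-1}, k_{m-1}, \pi^{k_{m-1},x_{m-1}}_i) = p_{\mathcal{Y}_{\eta^{k,x}_{i,1}}} \ast \cdots \ast p_{\mathcal{Y}_{\eta^{k,x}_{i,x_{m-1}}}},
\end{equation}
which is exactly \eqref{burrito}. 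Everything beyond the independence argument is the routine convolution identity, so no further computation is required.
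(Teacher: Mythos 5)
Your proposal is correct and follows essentially the same route as the paper's proof: decomposing $\mathcal{X}^N_m$ into the per-parent children-collision counts $\mathcal{Y}_{\eta^{k,x}_{i,j}}$ fixed by the partition, invoking independence of the disjoint subtrees, and concluding via the convolution identity for sums of independent discrete variables. You merely make explicit what the paper states tersely ---in particular, grounding the independence in the disjoint, independent coin flips of the splitting mechanism and noting that it is a property of the conditioned experiment--- which is a welcome clarification but not a different argument.
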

\begin{lemma}
	The probability to have $x_m$ collisions at the level $m$, given  $x_{m-1}$ collisions and $k_{m-1}$ contenders at the level $m-1$ is
	\begin{multline}
	p_{\mathcal{X}^N_m| \mathcal{X}^N_{m-1}, \mathcal{K}^N_{m-1}} (x_m | x_{m-1}, k_{m-1}) =\\
	= \sum_{\pi \in \mathfrak{P}^{x_{m-1}, k_{m-1}}} p_{\mathcal{X}^N_m| \mathcal{X}^N_{m-1}, \mathcal{K}^N_{m-1}, \mathcal{P}^{k_{m-1},x_{m-1}}} (x_m | x_{m-1}, k_{m-1}, \pi)  
	\cdot p_{\mathcal{P}^{k_{m-1},x_{m-1}}}(\pi).
	\label{senf}
	\end{multline}
	\begin{proof}
		This lemma is just an application of the law of total probability combining the expressions of  Lemma \ref{lemma_part} and Lemma \ref{lemma_3}.
	\end{proof}
	\label{lemma_4}
\end{lemma}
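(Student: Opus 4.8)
The plan is to build this conditional pmf by inserting the random partition $\mathcal{P}^{k_{m-1},x_{m-1}}$ as an intermediate latent variable and then marginalizing it out by the law of total probability. The motivating observation is that fixing the number of parent collisions $x_{m-1}$ together with the total collided mass $k_{m-1}$ does \emph{not} determine the distribution of the number of children collisions $x_m$: that distribution depends on how those $k_{m-1}$ contenders are spread across the $x_{m-1}$ collisions, which is exactly the information carried by the partition. Example~\ref{ex_part} already illustrates why the individual parent-collision sizes, rather than merely their count and total, control the child-collision statistics, so the partition is the natural quantity to condition on.

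First I would write the decomposition explicitly. Setting $A = \{\mathcal{X}^N_m = x_m\}$ and $B = \{\mathcal{X}^N_{m-1} = x_{m-1},\, \mathcal{K}^N_{m-1} = k_{m-1}\}$ and letting $\pi$ range over $\mathfrak{P}^{x_{m-1},k_{m-1}}$, the law of total probability gives
\[
\Pr\{A \mid B\} = \sum_{\pi \in \mathfrak{P}^{x_{m-1},k_{m-1}}} \Pr\{A \mid B,\, \mathcal{P}^{k_{m-1},x_{m-1}} = \pi\} \cdot \Pr\{\mathcal{P}^{k_{m-1},x_{m-1}} = \pi \mid B\}.
\]
Next I would identify each factor with a result proved earlier. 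The conditional term $\Pr\{A \mid B,\, \mathcal{P} = \pi\}$ is precisely the fully conditioned pmf of Lemma~\ref{lemma_3}: once $\pi$ is fixed, every parent-collision size $\eta$ is known, the subtrees they spawn are independent, and the probability of $x_m$ is the convolution $p_{\mathcal{Y}_{\eta^{k,x}_{i,1}}} \ast \cdots \ast p_{\mathcal{Y}_{\eta^{k,x}_{i,x_{m-1}}}}$ evaluated at $x_m$. The weight $\Pr\{\mathcal{P} = \pi \mid B\}$ is exactly $p_{\mathcal{P}^{k_{m-1},x_{m-1}}}(\pi)$ of Lemma~\ref{lemma_part}, because by Definition~\ref{def_P} the conditioning on $\mathcal{X}^N_{m-1} = x_{m-1}$ and $\mathcal{K}^N_{m-1} = k_{m-1}$ is already baked into the definition of $\mathcal{P}^{k_{m-1},x_{m-1}}$ (cf.~\eqref{bocapez}). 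Substituting both identifications reproduces \eqref{senf} verbatim.

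The one step that genuinely requires care---and where I would expect an objection---is the claim that the partition is a \emph{sufficient} summary of the level-$(m-1)$ state for predicting $x_m$. I would argue that, conditioned on the multiset of parent-collision sizes encoded by $\pi$, nothing about the tree's earlier history influences how those collisions split, since each parent collision of size $\eta$ resolves by an independent uniform balls-into-bins allocation governed solely by $\eta$. In other words, $\mathcal{P}^{k_{m-1},x_{m-1}}$ screens off $\mathcal{X}^N_m$ from everything upstream, so the first factor in the sum depends on the history only through $\pi$. Granting this screening property, the identity is an \emph{exact} application of total probability rather than an approximation, and the remaining effort is the purely mechanical substitution described above.
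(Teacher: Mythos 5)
Your proposal is correct and takes essentially the same route as the paper, whose proof is exactly a one-line application of the law of total probability over partitions, combining the partition weights of Lemma~\ref{lemma_part} with the conditional convolution of Lemma~\ref{lemma_3}. Your additional remark that the partition screens off $\mathcal{X}^N_m$ from the earlier history merely makes explicit a point the paper leaves implicit in the independence argument of Lemma~\ref{lemma_3}.
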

%Finally, after combining \eqref{brezen} -- \eqref{burrito}, we are able to compute the joint pmf of $\hat{T}_N$, which was defined in \eqref{kartoffelsalat}.
\begin{lemma}
	The probability to have $x_m$ collisions at the level $m$ given $x_{m-1}$ collisions at the level $m-1$ and $N$ initial contenders is
	\begin{equation}
	p_{\mathcal{X}^N_m | \mathcal{X}^N_{m-1}} (x_m | x_{m-1})  
	= \sum_{k_{m-1} = 0}^N p_{\mathcal{X}^N_m | \mathcal{X}^N_{m-1}, \mathcal{K}^N_{m-1}} (x_m | x_{m-1}, k_{m-1})    \cdot   p_{\mathcal{K}^N_{m-1} | \mathcal{X}^N_{m-1}}(k_{m-1} | x_{m-1}).
	\label{knoedel}
	\end{equation}
	\begin{proof}
		This is again a direct application of the law of total probability that combines the expressions of Lemma~\ref{lemma_4} and Lemma \ref{lemma_KX}.
	\end{proof}
	\label{lemma_XX}
\end{lemma}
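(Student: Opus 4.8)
The plan is to eliminate the latent variable that truly governs the branching, namely the number of collided contenders $\mathcal{K}^N_{m-1}$ at level $m-1$, by a single application of the law of total probability. The motivation is exactly the one spelled out when $\mathcal{K}^N_m$ was introduced: the number of children collisions produced at level $m$ is not determined by how many parent collisions $x_{m-1}$ there are, but by how many contenders populate them. Since that population is not observable from $x_{m-1}$ alone, I would introduce it as an auxiliary conditioning variable and then average it out.

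Concretely, I would fix the events $A = \{\mathcal{X}^N_m = x_m\}$ and $C = \{\mathcal{X}^N_{m-1} = x_{m-1}\}$, and partition the sample space according to the value taken by $\mathcal{K}^N_{m-1}$. Because the number of collided contenders at level $m-1$ can range from $0$ up to the total number of initial contenders $N$, the events $\{\mathcal{K}^N_{m-1} = k_{m-1}\}$ for $k_{m-1} = 0, \dots, N$ form a finite partition of the sample space conditioned on $C$. The law of total probability then gives
\begin{equation*}
p_{\mathcal{X}^N_m | \mathcal{X}^N_{m-1}}(x_m | x_{m-1}) = \sum_{k_{m-1}=0}^{N} p_{\mathcal{X}^N_m | \mathcal{X}^N_{m-1}, \mathcal{K}^N_{m-1}}(x_m | x_{m-1}, k_{m-1}) \cdot p_{\mathcal{K}^N_{m-1} | \mathcal{X}^N_{m-1}}(k_{m-1} | x_{m-1}).
\end{equation*}
At this point both factors are already in hand: the conditional $p_{\mathcal{X}^N_m | \mathcal{X}^N_{m-1}, \mathcal{K}^N_{m-1}}$ is precisely the quantity delivered by Lemma~\ref{lemma_4}, while the mixing weight $p_{\mathcal{K}^N_{m-1} | \mathcal{X}^N_{m-1}}$ is supplied by Lemma~\ref{lemma_KX} evaluated at level $m-1$. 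Substituting these two expressions yields the claimed identity, so no further computation is required.

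There is no deep obstacle here; the statement is a bookkeeping step that glues the two preceding lemmas together. The only points that warrant a line of justification are, first, that $\{0,\dots,N\}$ is an exhaustive range for $\mathcal{K}^N_{m-1}$, which holds because no more than the $N$ initial contenders can ever be involved; and second, that many summands vanish, since any $k_{m-1} < 2x_{m-1}$ is infeasible when $x_{m-1}$ collisions must each contain at least two contenders. Both facts merely confirm that the sum may safely be taken from $0$ to $N$ without altering the result, so I would simply note them and let the formula stand.
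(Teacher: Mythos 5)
Your proposal is correct and matches the paper's own argument exactly: the paper likewise proves this by a direct application of the law of total probability over $\mathcal{K}^N_{m-1}$, combining Lemma~\ref{lemma_4} with Lemma~\ref{lemma_KX}. Your added remarks on the exhaustiveness of the range $\{0,\dots,N\}$ and the vanishing of infeasible terms with $k_{m-1} < 2x_{m-1}$ are sound refinements the paper leaves implicit.
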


%Therefore, at this point the problem is reduced to find:
%\begin{equation}
%p_{\mathcal{T}_m | \mathcal{T}_{m-1}} (t_m | t_{m-1}, N),
%\end{equation}
%which is the probability of obtaining $t_m$ slots at the level $m$ given $t_{m-1}$ at the level $m-1$. Since every collision at level $m-1$ produces a new pair of contention frame at level $m$, we can use the number of collisions at the level $m$, denoted by $\mathcal{X}_m(n)$ to obtain the number of time slots in a level.
%\begin{equation}
%
%\label{avg_mc_s3}
%\end{equation}
Finally, we have all the required ingredients to write down a close-form expression for the pmf of $\mathcal{T}^N$, which is shown in the following theorem.
\begin{theorem}
	The probability of tree successfully completing with $t$ time slots before level $M$ given $N$ initial contenders is 
	\begin{equation}
	p_{\mathcal{T}^N} (t)  
	\cong \sum_{\mathfrak{S}} p_{\mathcal{T}^N_1}(t_1) \prod_{m=2}^{M} p_{\mathcal{T}^N_m|\mathcal{T}^N_{m-1}} (t_m,t_{m-1})
	\label{knoedel}
	\end{equation}
	\begin{proof}
		Using the Markovian approximation we simplify the Lemma \ref{lemma_8}. Plugging the Lemmas \ref{lemma_approx} and \ref{lemma_TTXX}, which then use Lemma \ref{lemma_XX}, we are able to complete the calculation.
	\end{proof}
	\label{theorem_timeslot}
\end{theorem}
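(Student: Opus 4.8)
The plan is to assemble the claimed expression by chaining together the lemmas already established, since all of the genuinely combinatorial work has been pushed into them; the theorem itself is essentially a substitution statement. The skeleton is Lemma~\ref{lemma_8}, which expresses the target marginal pmf $p_{\mathcal{T}^N}(t)$ exactly as a sum of the joint pmf $p_{\mathcal{T}^N_1,\dots,\mathcal{T}^N_M}$ over the constraint set $\mathfrak{S}$ of level-wise slot vectors summing to $t-1$ (the $-1$ absorbing the single root slot from \eqref{avg_mc_s4}). Into that skeleton I would substitute the Markovian factorization of the joint pmf given in Lemma~\ref{lemma_approx}.

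Concretely, first I would invoke Lemma~\ref{lemma_8} to write $p_{\mathcal{T}^N}(t) = \sum_{\mathfrak{S}} p_{\mathcal{T}^N_1,\dots,\mathcal{T}^N_M}(t_1,\dots,t_M)$. Next I would replace the joint pmf inside the sum using the Markov factorization \eqref{kartoffelsalat} of Lemma~\ref{lemma_approx}, turning it into the telescoping product $p_{\mathcal{T}^N_1}(t_1)\prod_{m=2}^{M} p_{\mathcal{T}^N_m|\mathcal{T}^N_{m-1}}(t_m|t_{m-1})$; this is exactly the right-hand side claimed in the theorem, and it carries the $\cong$ sign inherited from the approximation. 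It then remains only to note that every factor is computable: the base factor $p_{\mathcal{T}^N_1}(t_1)$ and each conditional $p_{\mathcal{T}^N_m|\mathcal{T}^N_{m-1}}$ are supplied in closed form by Lemma~\ref{lemma_TTXX}, which reduces them to the collision-to-collision conditionals $p_{\mathcal{X}^N_m|\mathcal{X}^N_{m-1}}$, and those in turn are delivered by Lemma~\ref{lemma_XX} (which draws on Lemmas~\ref{lemma_KX}, \ref{lemma_4} and \ref{lemma_3} together with the partition and balls-into-bins results). Thus no new estimate is required once the lemmas are in hand.

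The hard part is not the algebra but the two approximations hidden in the $\cong$, and I would make both explicit rather than attempt to bound them here. The first is the Markov assumption \eqref{lentejas}, argued heuristically before Lemma~\ref{lemma_approx}; the proof inherits its validity wholesale. The second is the truncation of the infinite sum in \eqref{avg_mc_s4} to the first $M$ levels, replacing the infinite joint pmf by the finite one on $\hat{\mathfrak{T}}^N_M$. I would point out that the phrase \emph{completing before level $M$} in the statement is precisely the event this truncation captures, and defer the quantitative claim that the neglected tail is negligible to the discussion of the choice of $M$ in App.~\ref{app:3}. Beyond these, the only thing to verify carefully is the index bookkeeping: that the single root slot is correctly accounted for by the $t-1$ defining $\mathfrak{S}$, and that the product runs from $m=2$ so that $p_{\mathcal{T}^N_1}$ enters exactly once as the unconditioned initial factor.
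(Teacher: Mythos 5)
Your proposal is correct and follows essentially the same route as the paper's own proof: substitute the Markov factorization of Lemma~\ref{lemma_approx} into the exact decomposition of Lemma~\ref{lemma_8}, with Lemmas~\ref{lemma_TTXX} and \ref{lemma_XX} supplying the conditional factors. Your explicit treatment of the two sources of the $\cong$ (the Markov assumption and the truncation to $M$ levels, deferred to App.~\ref{app:3}) and of the $t-1$ root-slot bookkeeping is a welcome elaboration of what the paper leaves implicit, but it is the same argument.
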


\subsection{Access delay or Probability of a contender to be successful in a time-slot}
\label{ramalamadindon_}
\begin{table}[]
	\centering
	\caption{Most relevant variables for computing $p_{\mathcal{T}^N}(t)$}
	\label{tableteble}
	\begin{tabular}{@{}clc@{}}
		\toprule
		\textbf{Variable} & \multicolumn{1}{c}{\textbf{Definition}}  &   \multicolumn{1}{c}{\textbf{Definition index}}                             \\ \midrule
		\vspace*{1mm} 
		$h$               & Level of successful transmission.    &            -                                                                      \\
		\vspace*{1mm}  
		$\mathcal{D}^N$   & Access delay in time slots.                  & \ref{def_D}                                                       \\
		\vspace*{1mm}
		$\mathcal{H}^N$   & Maximum level reached by the device.                  & \ref{def_H}                                                       \\
		\vspace*{1mm}
		$\mathcal{L}^N$   & Number of nodes in the tree.                  & \ref{def_L}                                                       \\
		\vspace*{1mm}
		$\mathcal{L}^N_h$   & Number of nodes at the level $h$.                  & \ref{def_Lh}                                                       \\
		\vspace*{1mm}
		$\mathcal{W}^N_h$   & Position of transmission in nodes at the level $h$.          & \ref{def_Wh}                                                       \\
		\vspace*{1mm}
		$\mathcal{V}^N_h$   & Position of transmission in time slots at the level $h$.          & \ref{def_Vh}                                               \\
		\vspace*{1mm}
		$\mathcal{S}^N_m$   & Number of time slots up to level $m$.                  & \ref{def_Sm}                                                       \\
		\vspace*{1mm}
		$\tilde{\mathcal{D}}^N$   & Access delay of a single channel tree.                 & \ref{def_Dt}                                                       \\
 \bottomrule
	\end{tabular}
\end{table}
In this section, we address the derivation of the statistics of the access delay of a multichannel tree. After the analysis of $\mathcal{T}^N$, we have many of the tools that we require in order to characterize this new variable. A summary of the new variables that will be used in this section can be found in Table \ref{tableteble}.

Since the access delay is usually more relevant than the length of the tree, it is worth addressing the calculation of its mean value separately. After this, the complete characterization of this random variable will be obtained by means of its pmf.

\subsubsection{Mean}
The first step to obtain a closed-form expression of the mean access delay is to formally define a random variable that models it. 

\begin{definition}
	Let $\mathcal{D}^N$ be the access delay of a focused, single device as the number of time slots between its first transmission and its successful transmission. Since the first time slot of a tree is always a collision, the access delay varies between 1 time slot (if the device transmits in the second time slot) and $\mathcal{T}^N-1$ (if the device transmits in the last time slot), where  $\mathcal{T}^N$ is the random variable modeling the number of time slots in the tree, which was studied in the last section.
	\label{def_D}
\end{definition}

In order to compute the average of $\mathcal{D}^N$, we introduce an auxiliary variable related with the number of levels tat a device traverses. The pmf of this variable is one of the requirements for the average of $\mathcal{D}^N$, as we will see later.
\begin{definition}
	Let $\mathcal{H}^N$ be the number of levels of the tree that the device has traversed until its successful transmission.
	\label{def_H}
\end{definition}
\begin{lemma}
	The pmf of the number of levels $\mathcal{H}^N$ of the tree that the device has traversed until its successful transmission is
	\begin{equation}
	p_{\mathcal{H}^N}(h)
	= \left(1-\frac{1}{2^h}\right)^{n-1} \!\!\!\!- \left(1-\frac{1}{2^{h-1}}\right)^{n-1}.
	\label{walkthisway_}
	\end{equation}
	\begin{proof}
		The proof of this lemma is given in \cite{janssenjong}.
	\end{proof}
	\label{lemma_H}
\end{lemma}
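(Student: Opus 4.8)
The plan is to read the exponent $n$ in the statement as the number of initial contenders $N$, to characterize $\mathcal{H}^N = h$ as the event that the focused device first becomes \emph{alone} in its node exactly at level $h$, and then to obtain the pmf as a difference of two nested ``being alone'' probabilities.

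First I would fix the focused device together with its sequence of binary splitting choices and introduce the event $A_h$ that this device is alone in its level-$h$ node, i.e.\ that none of the other $N-1$ contenders has matched all $h$ of the device's choices. Because every other contender draws its bits independently and uniformly, the probability that a given one of them agrees with the device on all $h$ bits is $\left(\tfrac{1}{2}\right)^h = \tfrac{1}{2^h}$, and these agreement events are independent across the $N-1$ other contenders. By symmetry this is independent of which path the device actually took, so conditioning on a fixed path loses nothing and I obtain
\begin{equation}
\Pr\{A_h\} = \left(1-\frac{1}{2^h}\right)^{N-1}.
\end{equation}

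Next I would observe that the events are nested, $A_{h-1}\subseteq A_h$: if no other contender matches the device's first $h-1$ bits, then a fortiori none matches its first $h$ bits. Consequently the device becomes alone for the first time at level $h$ exactly when $A_h$ holds but $A_{h-1}$ does not, so that $\{\mathcal{H}^N = h\} = A_h \setminus A_{h-1}$. Using the nesting to simplify $\Pr\{A_h \cap A_{h-1}\} = \Pr\{A_{h-1}\}$,
\begin{equation}
p_{\mathcal{H}^N}(h) = \Pr\{A_h\} - \Pr\{A_{h-1}\} = \left(1-\frac{1}{2^h}\right)^{N-1} - \left(1-\frac{1}{2^{h-1}}\right)^{N-1},
\end{equation}
which is the asserted expression. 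As a sanity check, $h=1$ gives $(1/2)^{N-1} - 0^{N-1} = (1/2)^{N-1}$ for $N>1$, matching the event that all other contenders picked the opposite branch at the first split.

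The argument is short, and the only place demanding care is the justification that $\{\mathcal{H}^N = h\}$ really coincides with ``alone at level $h$ but not at level $h-1$'': one must argue that in a binary tree algorithm the focused device transmits successfully precisely at the first level where its node becomes a singleton, so that the level of successful transmission (Definition~\ref{def_H}) and the first level of being alone agree. Once that equivalence is granted, the path-independence and the mutual independence of the other contenders' choices make each $\Pr\{A_h\}$ immediate, and the telescoping difference finishes the proof. This reproduces the result cited from \cite{janssenjong}.
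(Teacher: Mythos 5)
Your proposal is correct, and it is in fact more informative than the paper itself, which does not prove the lemma at all but simply cites \cite{janssenjong}. Your argument --- fix the focused device's infinite bit string, note that another contender follows it to level $h$ with probability $2^{-h}$ independently across the other $N-1$ contenders, so $\Pr\{A_h\}=\left(1-2^{-h}\right)^{N-1}$, and telescope using the nesting $A_{h-1}\subseteq A_h$ --- is precisely the standard derivation underlying the cited result of Janssen and de Jong (who carry it out for fair $Q$-ary trees; the statement here is the $Q=2$ specialization, and the exponent $n-1$ in the paper is indeed a typo for $N-1$, as you read it). You also correctly identify the one point that genuinely needs justification: that the level of successful transmission in Definition~\ref{def_H} coincides with the first level at which the device's node is a singleton. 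This holds for the blocked-access binary tree considered here, where every collided node splits and a singleton node is by definition a success; it would fail for the interrupted/restart variants the paper explicitly excludes. Two minor caveats worth being aware of, neither of which undermines the proof: the formula implicitly assumes $N>1$ (for $N=1$ the device succeeds at the root, level $0$, and the telescoped expression degenerates), and your claim of path-independence is justified exactly because the matching probability $2^{-h}$ does not depend on the realized path, so the unconditional computation is legitimate.
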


When referring to the level of successful transmission of the device, we will be using the index $h$ instead of the previously used $m$. This is to emphasize the special meaning of this level, since it is now linked to our focused device.  

We also need the probability of the number of collisions $\mathcal{X}^N_m$ at the arbitrary level $m$, since we can directly relate this random variable with the number of nodes ---and hence time slots--- at the next level. 
\begin{lemma}
	The pmf of the number of collisions $\mathcal{X}^N_m$ at the level $m$, provided $N$ initial devices is
		\begin{equation}
		p_{\mathcal{X}^N_{h}}(x_m) = \sum_{i = x_m}^{\min\left( N-x_m,Z \right)} \frac{\Psi_{i,x_m}^N {Z\choose{i}} i!}{Z^N},
		\label{lamia_}
		\end{equation}
		where $Z \triangleq 2^m$ is the maximum number of nodes at the level $m$.
		\begin{proof}
			The derivation of \eqref{lamia_} follows from the translation of a tree into a balls-into-bins problem, which is the same approach as that presented in \cite{bianchi}. We can again think of contenders as balls, and nodes as bins. Then, the probability that a contender is in a certain node at the level $m$ is equal to the probability that a randomly thrown ball ends up in a certain bin. Therefore, the $N$ contenders will be regarded as $N$ balls, and the level $m$ will be modeled as a set of $Z = 2^m$ bins. As a result, the probability of $x_m$ collisions at the level $m$ is equal to the probability of obtaining $x_m$ bins with more than one ball.
			
			We are then interested in computing the statistics of the number of bins with more than one ball after throwing $N$ balls into them. Initially, we have $Z$ bins, some of which might remain empty after the distribution of the $N$ balls. Therefore, we may consider that the balls have been divided in as many groups as non-empty bins. Nonetheless, the number of non-empty bins is also a random variable. This means that, in order to compute the probability of a certain number of collisions, we have to consider all different possibilities for the number of non-empty bins. Namely, the number of non-empty bins ranges from $x_m$ bins, which is the minimum number of bins such that they accommodate $x_m$ collisions, to either $Z$ ---the maximum number of nodes per level--- or $N-x_m$ bins, whichever is lower. The latter bound follows from the fact that the maximum number of \textit{successful} bins ---those that contain only one ball--- is $N - 2 x_m$, since at least two balls are needed in every collision. As the number of collisions has to be $x_m$, the total number of occupied bins cannot be greater than $N - 2 x_m + x_m = N -x_m$. These lower and upper bounds are the ones used by the summation in \eqref{lamia_}.
			
			Provided a number of occupied bins $i$, we need to find the number of ways to arrange $N$ balls into $x_m$ collisions. In order to do so, we benefit from an extension of the Stirling numbers of the second kind that we introduced in Lemma \ref{lemma_bianchi}, whose definition we repeat here for completeness. These numbers are denoted by $\Psi^N_{R,j}$, and are defined  as the number of ways to partition $N \geq 1$ balls into $R \geq 1$ subsets, such that $j \in (0, i)$ of them contain more than one ball. These numbers are defined in \cite{bianchi} by means of the recursion:
			\begin{equation}
			\Psi_{R,j}^N = j \Psi_{R,j}^{N-1} + \left(R - j +1\right) \Psi_{R,j-1}^{N-1} + \Psi_{R-1,j}^{N-1},
			\label{caramba_}
			\end{equation}
			and the initial conditions $\Psi^N_{N,0} = 1$, $\Psi^1_{1,0} = 1$, $\Psi^1_{1,1} = 0$, and $\Psi^{N>1}_{1,1} = 1$. 
			
			Along with this number, the binomial coefficient ${Z\choose{i}}$ and the factorial $i!$ in \eqref{lamia_} incorporate the number of ways to select $i$ non-empty bins out of $Z$ possible bins and the number of ways to arrange those $i$ bins, respectively. The last step to obtain the desired probability is to divide the number of ways to produce $x_m$ collisions by the total number of possible arrangements of $N$ balls into $Z$ bins, provided by $Z^N$.
		\end{proof}
\end{lemma}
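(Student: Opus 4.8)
The plan is to translate the node occupancies at level $m$ into a balls-into-bins experiment and then count favorable configurations combinatorially. First I would observe that a contender's node at level $m$ is fixed by its sequence of $m$ independent fair binary splits, so each of the $N$ contenders lands uniformly and independently in one of the $Z = 2^m$ nodes of that level. Identifying contenders with distinguishable balls and the $Z$ nodes with labeled bins, the sample space consists of exactly $Z^N$ equiprobable outcomes, which supplies the denominator in \eqref{lamia_}. Since a collision is precisely a node holding more than one contender, the event $\mathcal{X}^N_m = x_m$ is the event that exactly $x_m$ of the $Z$ bins contain at least two balls, and it suffices to count the outcomes realizing it.

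Next I would count those favorable outcomes by conditioning on the number $i$ of non-empty bins and summing over it, since $i$ is itself random. For a fixed $i$, the extended Stirling number $\Psi^N_{i,x_m}$ from Lemma~\ref{lemma_bianchi} counts the set-partitions of the $N$ balls into $i$ non-empty blocks, exactly $x_m$ of which have size greater than one. These blocks are unlabeled, so I would then place them into concrete bins: choosing which $i$ of the $Z$ bins are occupied contributes $\binom{Z}{i}$, and assigning the $i$ blocks to those chosen bins contributes $i!$ (equivalently, $\binom{Z}{i}\, i!$ is the number of injections from blocks to bins). Hence $\Psi^N_{i,x_m}\binom{Z}{i}\, i!$ counts exactly the labeled-bin arrangements with $i$ occupied bins and $x_m$ collisions, and summing over $i$ and dividing by $Z^N$ yields \eqref{lamia_}.

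The remaining task is to justify the summation limits. The lower limit $i = x_m$ holds because $x_m$ collisions require at least $x_m$ occupied bins. For the upper limit, each collision consumes at least two balls, so the $x_m$ collisions use at least $2x_m$ balls; the remaining $N - 2x_m$ balls occupy at most $N - 2x_m$ singleton bins, giving at most $x_m + (N - 2x_m) = N - x_m$ occupied bins, and $i$ clearly cannot exceed the total number of bins $Z$. Therefore $i$ ranges over $x_m \leq i \leq \min(N - x_m, Z)$, matching the bounds in the statement.

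I expect the main obstacle to be the bookkeeping of labeled versus unlabeled structures: ensuring that $\Psi^N_{i,x_m}$ is used with its set-partition semantics and that the factor $\binom{Z}{i}\, i!$ converts unlabeled blocks into distinct, labeled bins with no double counting. A secondary subtlety is confirming that the uniform, independent balls-into-bins model describes the level-$m$ occupancy exactly; this follows because a contender that is already alone cannot create a collision at any deeper level, so whether or not singletons continue splitting, the number of collision bins at level $m$ coincides with that of the full allocation of all $N$ balls into $2^m$ bins.
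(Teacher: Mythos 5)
Your proof is correct and follows essentially the same route as the paper's: the same balls-into-bins translation with $Z^N$ equiprobable outcomes, the same conditioning on the number $i$ of occupied bins with the identical bounds $x_m \leq i \leq \min(N-x_m, Z)$, and the same count $\Psi^N_{i,x_m}\binom{Z}{i}\,i!$ of favorable configurations. Your closing observation---that singletons can never rejoin a collision at deeper levels, so the full uniform allocation of all $N$ balls into $2^m$ bins has the same collision statistics as the actual tree in which resolved contenders stop splitting---is a justification the paper leaves implicit, and is a worthwhile addition rather than a departure.
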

	
	Another requirement for the average value of $\mathcal{D}^N$ is the pmf of the number of nodes contained in the level $h$. This is addressed by the followings definitions and lemma.
	
	\begin{definition}
		Let $\mathcal{L}^N$ be the random variable modeling the number of nodes contained in a tree, i.e., the length of the equivalent single channel tree.
		\label{def_L}
	\end{definition}
	\begin{definition}
		Let $\mathcal{L}^N_h$ be the random variable modeling the number of nodes within the level $h$ of a tree started with $N$ contenders, provided that the tree reaches such a level. That is, we assume that the level $h$ will not be empty, for reasons that will be obvious later on. 
		\label{def_Lh}
	\end{definition}
	 
\begin{lemma}
	The pmf $p_{\mathcal{L}^N_{h}}(l_h)$ of the number $\mathcal{L}^N_h$ of nodes within the level $h$, assuming that the tree extends at least up to that level, is
	\begin{equation}
	p_{\mathcal{L}^N_{h}}(l_h) = \begin{cases}
	\frac{p_{\mathcal{X}^N_{h-1}}\left(\frac{l_h}{2}\right)}{1-p_{\mathcal{X}^N_{h-1}}(0)} & \text{if $l_h$ even,} \\
	0 & \text{if $l_h$ odd.}
	\end{cases} \label{treat}
	\end{equation}
	\begin{proof}
		The probability of obtaining $l_h$ nodes at the level $h$ can be derived directly from the probability of obtaining half that number of collisions in the previous level, since every collision creates two new nodes. Moreover, since we are assuming that the successful transmission of our focused device occurred at the level $h$, we need to normalize the probability of a certain number of collisions to leave out the possibility of no collisions. Furthermore, it is clear that only even values of $l_h$ are allowed, due to the binary nature of the tree. 
	\end{proof}
\end{lemma}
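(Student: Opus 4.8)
The plan is to reduce the statement to the already-available pmf of collisions $p_{\mathcal{X}^N_{h-1}}$ by exploiting the deterministic branching rule of the binary tree. First I would observe that, because $Q = 2$, every collision at level $h-1$ spawns exactly two child nodes at level $h$, while every non-collision node (empty or singleton) spawns none. Hence the number of nodes at level $h$ is rigidly tied to the number of collisions at the previous level through the identity $\mathcal{L}^N_h = 2\,\mathcal{X}^N_{h-1}$. This single relation does almost all of the work.

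From this identity, two of the cases follow immediately. Since $\mathcal{L}^N_h$ is always twice an integer, it can never take an odd value, so $p_{\mathcal{L}^N_h}(l_h) = 0$ whenever $l_h$ is odd. For even $l_h$, the map $x_{h-1} \mapsto 2 x_{h-1}$ is a bijection between the attainable collision counts and the attainable node counts, so the probability of observing $l_h$ nodes equals the probability of observing $l_h/2$ collisions; that is, $p_{\mathcal{L}^N_h}(l_h) = p_{\mathcal{X}^N_{h-1}}(l_h/2)$ before any conditioning.

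The one genuinely substantive step is the normalization, and I expect it to be the only place requiring care. By Definition~\ref{def_Lh}, $\mathcal{L}^N_h$ is defined conditionally on the event that the tree actually reaches level $h$, i.e. that level $h$ is non-empty. Under the branching identity, level $h$ is non-empty precisely when there is at least one collision at level $h-1$, so the conditioning event is $\{\mathcal{X}^N_{h-1} \geq 1\}$, whose probability is $1 - p_{\mathcal{X}^N_{h-1}}(0)$. I would therefore write $p_{\mathcal{L}^N_h}(l_h) = \Pr\{\mathcal{L}^N_h = l_h \mid \mathcal{X}^N_{h-1} \geq 1\}$ and apply the definition of conditional probability; since every even $l_h \geq 2$ already lies inside the conditioning event, the joint probability coincides with $p_{\mathcal{X}^N_{h-1}}(l_h/2)$, and dividing by $1 - p_{\mathcal{X}^N_{h-1}}(0)$ yields the stated expression. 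The easy-to-miss subtlety is exactly this restriction to surviving trees: omitting the denominator would leave the conditional pmf unnormalized, and forgetting that $l_h = 0$ is excluded by the conditioning is the corresponding edge case to keep in mind.
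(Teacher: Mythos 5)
Your proposal is correct and follows essentially the same route as the paper's proof: you use the binary branching identity $\mathcal{L}^N_h = 2\,\mathcal{X}^N_{h-1}$ to rule out odd values and transfer the pmf, then normalize by $1-p_{\mathcal{X}^N_{h-1}}(0)$ to account for the conditioning on the tree reaching level $h$ (Definition~\ref{def_Lh}). Your version merely makes explicit what the paper states informally, in particular identifying the conditioning event as $\{\mathcal{X}^N_{h-1}\geq 1\}$ and noting that $l_h = 0$ is excluded by it.
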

Now that we have characterized the number of nodes at the level $h$ by means of $\mathcal{L}^N_h$, we are interested in knowing the statistics of the position of our successful transmission within that level. In order to do so, we introduce a new random variable, along with its pmf.
\begin{definition}
	Let $\mathcal{W}^N_h$ be the random variable modeling the number of nodes that lie between the first node belonging to the level $h$ and the node containing the successful transmission of the device, both inclusive. In other words, $\mathcal{W}^N_h$ reflects the position of the node of the successful transmission at the level $h$.
	\label{def_Wh}
\end{definition}

\begin{lemma}
	The pmf of $\mathcal{W}^N_h$ can be expressed as
		\begin{equation}
		p_{\mathcal{W}^N_h}(w_h) = \sum_{x = \left\lceil \!\frac{w_h}{2}\!\right\rceil}^{\infty} \frac{p_{\mathcal{X}^N_{h-1}}(x)}{2 x \left(1-p_{\mathcal{X}^N_{h-1}}(0)\right)}.
		\end{equation}
		\begin{proof}
			An unbiased binary tree is statistically symmetrical, i.e., it is equally probable for a device to transmit in any node of a given level. Hence it is clear that $\mathcal{W}^N_h$ follows an uniform distribution ranging from 1 to the total number of nodes $\mathcal{L}^N_h$ at the level $h$:
			\begin{equation}
			\mathcal{W}^N_h \sim U(1,\mathcal{L}^N_h).
			\end{equation}
			Since $\mathcal{L}^N_h$, the upper limit of $\mathcal{W}^N_h$, is itself a random variable, we need to apply the law of total probability to take into account every possible value.
			\begin{equation}
			p_{\mathcal{W}^N_h}(w_h) =  \sum_{l_h = 1}^{\infty}  \Pr \left\lbrace \mathcal{W}^N_h =w_h \, | \, \mathcal{L}^N_h = l_h\right\rbrace \cdot p_{\mathcal{L}^N_{h}}(l_h) \label{soulsacrifice_}
			\end{equation}
			As $\mathcal{W}^N_h$ is an uniform random variable, the probability of any value is constant, provided a deterministic upper bound. Therefore it follows that:
			\begin{equation}
			\Pr \left\lbrace \mathcal{W}^N_h =w_h \, | \, \mathcal{L}^N_h = l_h\right\rbrace = \begin{cases}
			\frac{1}{l_h} & \text{if } 1 \leq w_h \leq l_h,\\
			0 & \text{otherwise.}
			\end{cases} \label{jingo_}
			\end{equation}
			After combining \eqref{soulsacrifice_} and \eqref{jingo_}, the final expression for the pmf of $\mathcal{W}^N_h$ is obtained.
				\begin{equation}
				p_{\mathcal{W}^N_h}(w_h)= \sum_{l_h = w_h}^{\infty} \frac{p_{\mathcal{L}^N_{h}}(l_h)}{l_h}\\= \sum_{x = \left\lceil \!\frac{w_h}{2}\!\right\rceil}^{\infty} \frac{p_{\mathcal{X}^N_{h-1}}(x)}{2 x \left(1-p_{\mathcal{X}^N_{h-1}}(0)\right)}.
				\end{equation}
		\end{proof}
\end{lemma}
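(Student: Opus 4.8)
The plan is to exploit the statistical symmetry of the unbiased binary tree, which makes the successful node equally likely to be any node at level $h$, and then to remove the conditioning on the random number of nodes at that level by the law of total probability. Concretely, I would build the pmf of $\mathcal{W}^N_h$ on top of the pmf $p_{\mathcal{L}^N_h}(l_h)$ established in the previous lemma, together with the pmf $p_{\mathcal{X}^N_{h-1}}$ of collisions one level up.

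First I would argue that, conditioned on the number of nodes $\mathcal{L}^N_h = l_h$ at level $h$, the position $\mathcal{W}^N_h$ is uniformly distributed on $\{1,\dots,l_h\}$. This follows because a fair split (probability $\nu = \tfrac{1}{2}$ at every branching, as used in the preceding collision lemmas) treats all nodes of a given level symmetrically, so no node is privileged to host the focused device. Hence $\mathcal{W}^N_h \sim U(1,\mathcal{L}^N_h)$, and the conditional pmf is
\begin{equation}
\Pr\left\{\mathcal{W}^N_h = w_h \mid \mathcal{L}^N_h = l_h\right\} = \begin{cases} \frac{1}{l_h} & 1 \leq w_h \leq l_h, \\ 0 & \text{otherwise.} \end{cases}
\end{equation}

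Next I would apply the law of total probability over all admissible values of $\mathcal{L}^N_h$. Because the conditional probability vanishes unless $l_h \geq w_h$, the summation starts at $l_h = w_h$, giving
\begin{equation}
p_{\mathcal{W}^N_h}(w_h) = \sum_{l_h = w_h}^{\infty} \frac{p_{\mathcal{L}^N_h}(l_h)}{l_h}.
\end{equation}
Finally I would substitute the closed form of $p_{\mathcal{L}^N_h}$, which is supported only on even arguments with $p_{\mathcal{L}^N_h}(2x) = p_{\mathcal{X}^N_{h-1}}(x)/(1 - p_{\mathcal{X}^N_{h-1}}(0))$, and change the summation index from $l_h$ to $x = l_h/2$. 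The odd terms drop out, each denominator $l_h$ becomes $2x$, and the lower limit $l_h \geq w_h$ translates to $x \geq \lceil w_h/2 \rceil$ since $x$ is an integer. This reproduces the claimed expression.

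The main obstacle --- indeed the only conceptually nontrivial step --- is justifying the uniform conditional distribution, which hinges on the exchangeability of nodes induced by the unbiased ($\nu = \tfrac{1}{2}$) splitting; the remaining manipulation is a routine substitution and reindexing. Some care must be taken with the ceiling in the lower limit, because $w_h$ may be odd while $l_h = 2x$ is necessarily even, so the half-integer threshold $w_h/2$ must be rounded up.
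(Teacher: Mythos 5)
Your proposal is correct and takes essentially the same route as the paper's proof: uniformity of $\mathcal{W}^N_h$ conditioned on $\mathcal{L}^N_h$ by the statistical symmetry of the unbiased tree, the law of total probability over $l_h \geq w_h$, and substitution of $p_{\mathcal{L}^N_h}$ in terms of $p_{\mathcal{X}^N_{h-1}}$. If anything, you make the final step more explicit than the paper, which silently performs the reindexing $x = l_h/2$ with the even-support restriction and the ceiling $\left\lceil w_h/2 \right\rceil$ in the lower limit.
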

 
 So far, we have modeled the size of the level $h$ in terms of nodes, since they can be easily related with collisions. Nonetheless, the access delay has to be measured in time slots. Therefore, we need to convert the size of the level $h$ into time slots. With that intention in mind, we define the new variable  $\mathcal{V}^N_h$.
\begin{definition}
	Let $\mathcal{V}^N_h$ be the position of the time slot at the level $h$ in which the device successfully transmitted, assuming that the time slots within a certain level are numbered according to their transmission order. In other words, $\mathcal{V}^N_h$ is the number of time slots that are transmitted from the first slot belonging to the level $h$ until the time slot containing the successful transmission of our focused device, both inclusive.
	\label{def_Vh}
\end{definition}

\begin{lemma}
	\label{avgdel_lemma5}
	The mean of the number $\mathcal{V}^N_h$ of time slots that lie between the first time slot belonging to the level $h$ and the time slot containing the successful transmission of the device is
	\begin{equation}
	\mathrm{E}\left\lbrace\mathcal{V}^N_h\right\rbrace = \sum_{w_h=1}^{\min(N,2^h)} \!\!\! \left\lceil \frac{w_h}{2 G} \right\rceil \cdot p_{\mathcal{W}^N_h}(w_h).
	\label{ybicarbonato_}
	\end{equation}
	\begin{proof}
		 From the definition of $\mathcal{W}^N_h$, it follows that:
		 \begin{equation}
		 \mathcal{V}^N_h = \left\lceil \frac{\mathcal{W}^N_h}{2 G} \right\rceil.
		 \label{dexter}
		 \end{equation}
		 This relation is best deduced from an example. In Fig. \ref{fig:delay_explanation_}, $\mathcal{W}^N_h = 8$ nodes, since the successful transmission took place in the eighth node of level $h = 4$. Since $G = 2$, this implies that $\mathcal{V}^N_h = 2$, which is indeed the position of the time slot of the focused transmission.
		 
	 	\begin{figure}
	 		\centering
	 		\includegraphics[]{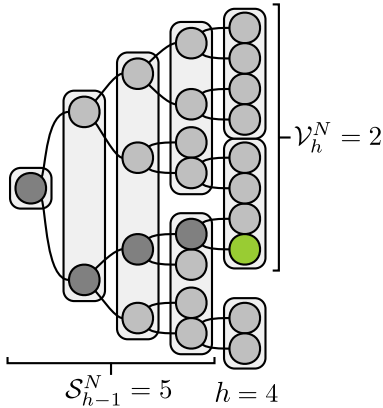}
	 		\caption{Example of tree depicting the variables $\mathcal{S}^N_{h-1}$ and $\mathcal{V}^N_h$, given a focused device (in green) that successfully transmits at the level $h=4$. Time slots (depicted as rectangles) follow $G =2$.}
	 		\label{fig:delay_explanation_}
	 	\end{figure}
		 
		 Finally, \eqref{ybicarbonato_} follows from the definition of the mean.
	\end{proof}
\end{lemma}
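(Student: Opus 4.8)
The plan is to recognize this as essentially a one-line application of the law of the unconscious statistician, the only real content being the deterministic rule that maps a node position within a level to a time-slot position. First I would fix the pointwise relationship between $\mathcal{V}^N_h$ and $\mathcal{W}^N_h$. Within a single level the nodes are transmitted in order and packed into time slots of size $H = 2G$ (namely $G$ contention frames, each holding $Q = 2$ contention slots), so the node sitting at position $w_h$ of level $h$ is carried in the $\lceil w_h/(2G)\rceil$-th time slot of that level. This yields the deterministic identity
\[
\mathcal{V}^N_h = \left\lceil \frac{\mathcal{W}^N_h}{2G} \right\rceil,
\]
which is precisely the grouping rule of the multichannel construction and can be verified against the worked instance in Fig.~\ref{fig:delay_explanation_} (there $\mathcal{W}^N_h = 8$ and $G = 2$ give $\mathcal{V}^N_h = 2$).

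Given this identity, the second step is immediate: because $\mathcal{V}^N_h$ is a fixed function $g(w) = \lceil w/(2G)\rceil$ of $\mathcal{W}^N_h$ and carries no randomness of its own, its mean transfers directly onto the pmf of $\mathcal{W}^N_h$ obtained in the preceding lemma. I would therefore write
\[
\mathrm{E}\left\lbrace\mathcal{V}^N_h\right\rbrace = \mathrm{E}\left\lbrace\left\lceil \tfrac{\mathcal{W}^N_h}{2G}\right\rceil\right\rbrace = \sum_{w_h} \left\lceil \frac{w_h}{2G}\right\rceil p_{\mathcal{W}^N_h}(w_h),
\]
without invoking any conditioning or law of total probability, since all of the stochastic behaviour has already been captured by $p_{\mathcal{W}^N_h}$.

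The last thing to pin down---and the only step I expect to need genuine justification---is the range of the summation. I would argue that $\mathcal{W}^N_h$ is supported on $\{1, \dots, \min(N, 2^h)\}$: a binary tree has at most $2^h$ nodes at level $h$, while at the same time the number of occupied positions can never exceed the $N$ contenders. Since $p_{\mathcal{W}^N_h}(w_h) = 0$ outside this support, replacing the formal sum over all $w_h$ by the finite sum up to $\min(N, 2^h)$ changes nothing and produces the stated closed form. Everything else is bookkeeping around the packing relation and the definition of the mean, so I anticipate no substantive obstacle beyond correctly bounding the support.
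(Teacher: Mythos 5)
Your proof is correct and follows essentially the same route as the paper's: establish the deterministic packing identity $\mathcal{V}^N_h = \left\lceil \mathcal{W}^N_h/(2G) \right\rceil$ and then apply the definition of the mean (law of the unconscious statistician) to the pmf of $\mathcal{W}^N_h$. Your explicit justification of the summation limit $\min(N,2^h)$ --- at most $2^h$ nodes per level and at most $2\lfloor N/2\rfloor \leq N$ children nodes since each parent collision holds at least two contenders --- is a small piece of bookkeeping the paper leaves implicit, but it introduces no divergence in approach.
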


At this point, we have all the required elements to characterize the delay of our focused device within its level of successful transmission $h$. However, we miss a variable to model the delay caused by the time slots transmitted in previous levels. For that, we introduce a new variable.
\begin{definition}
We denote by $\mathcal{S}^N_m$ the sum of the first $m \leq M$ random variables in $\hat{\mathfrak{T}}^N_M$, as it was defined in Definition \ref{setor}. That is, the number of the time slots from level 1 to $m$:
\label{def_Sm}
\begin{equation}
\mathcal{S}^N_m \triangleq \sum_{i = 1}^m \mathcal{T}^N_i.
\label{ajvar}
\end{equation}
\end{definition}

With all these elements, we can obtain a closed-form expression for the mean access delay of a device in the MP-CTA, as follows.

\begin{theorem}
	The mean of the access delay $\mathcal{D}^N$ experienced by a device in a breadth-first multichannel tree (MP-CTA), provided $N$ initial devices, is
	\begin{equation}
\mathrm{E}\{\mathcal{D}^N\} = \sum_{h=1}^\infty \left(\sum_{k = 0}^{h-2} \sum_{x_k = 1}^{\hat{X}} \left\lceil \frac{x_k}{G} \right\rceil p_{\mathcal{X}^N_{k}}(x_k) \;+ \!\!\!\sum_{w_h=1}^{\min(N,2^h)} \!\!\left\lceil \frac{w_h}{2 G} \right\rceil \cdot p_{\mathcal{W}^N_h}(w_h)\right) \!\cdot p_{\mathcal{H}^N}(h). 
	\end{equation}
	where $\hat{X} = \min\left(\lfloor\frac{N}{2}\rfloor,2^m\right)$.
\begin{proof}
	By applying the law of total expectation, we can calculate the mean of $\mathcal{D}^N$ as:
	\begin{equation}
	\mathrm{E}\{\mathcal{D}^N\} = \sum_{h=1}^\infty\mathrm{E}\left\lbrace\mathcal{D}^N \, \big| \, \mathcal{H}^N = h\right\rbrace \cdot p_{\mathcal{H}^N}(h).
	\label{ificouldfly_}
	\end{equation}
	Notice that the summation above starts with $h=1$, since it is not possible to successfully transmit at the level 0. The only unknown in this expression is $\mathrm{E}\left\lbrace\mathcal{D}^N  | \mathcal{H}^N = h\right\rbrace$, which is the average delay experienced by a device provided that it has successfully transmitted at the level $h$. For a more convenient notation, let us introduce $\mathcal{D}^N_h$, defined as the random variable modeling the access delay of a device that has successfully transmitted at the level $h$:
	\begin{equation}
	\mathcal{D}^N_h \triangleq \mathcal{D}^N \big| _{\mathcal{H}^N = h}.
	\label{defidefi_}
	\end{equation}
	After using this new definition, \eqref{ificouldfly_} becomes:
	\begin{equation}
	\mathrm{E}\{\mathcal{D}^N\} = \sum_{h=1}^\infty\mathrm{E}\left\lbrace\mathcal{D}^N_h\right\rbrace \cdot p_{\mathcal{H}^N}(h).
	\label{crowdchant_}
	\end{equation}
	Owing to the structure of the tree, we can decompose $\mathcal{D}^N_h$ as follows:
	\begin{equation}
	\mathcal{D}^N_h = \begin{cases}
	1 & \text{if }h = 1,\\
	\mathcal{S}^N_{h-1} + \mathcal{V}^N_h  & \text{if }h > 1,
	\end{cases}
	\label{sambapati_}
	\end{equation}
	where $\mathcal{S}^N_{h-1}$ was defined in \eqref{ajvar} as the partial sum of the number of slots between levels $0$ and $h-1$.

	The meaning of this problem is illustrated in Fig. \ref{fig:delay_explanation_}, where our focused device has transmitted in the second slot of the fourth level, so that we can compute the access delay that it has experienced by counting the number of slots up to level 3 ($\mathcal{S}^N_{h-1} = 5$ slots) and then the number of slots within the level 4 ($\mathcal{V}^N_h = 2$ slots).
	
	If we apply the expectation operator in \eqref{sambapati_}, we obtain the following expressions for $h >1$:
	\begin{align}
	\mathrm{E}\{\mathcal{D}^N_h\} &= \mathrm{E}\left\lbrace\mathcal{S}^N_{h-1}\right\rbrace + \mathrm{E}\left\lbrace\mathcal{V}^N_h\right\rbrace\\
	&= \sum_{k = 1}^{h-1}\mathrm{E}\left\lbrace\mathcal{T}^N_{k}\right\rbrace + \mathrm{E}\left\lbrace\mathcal{V}^N_h\right\rbrace\\
	&= \sum_{k = 0}^{h-2}\mathrm{E}\left\lbrace\left\lceil\frac{\mathcal{X}^N_{k}}{G}\right\rceil\right\rbrace + \mathrm{E}\left\lbrace\mathcal{V}^N_h\right\rbrace\\
	&= \sum_{k = 0}^{h-2}\sum_{x_k = 1}^{\hat{X}} \left\lceil \frac{x_k}{G} \right\rceil p_{\mathcal{X}^N_{k}}(x_k) + \mathrm{E}\left\lbrace\mathcal{V}^N_h\right\rbrace
	\label{wellalright_}
	\end{align}
	At this point, we just need to replace in \eqref{wellalright_} the expression for $\mathrm{E}\left\lbrace\mathcal{V}^N_h\right\rbrace$ given in Lemma \ref{avgdel_lemma5} and then combine it with \eqref{crowdchant_}.
\end{proof}
\label{teorematopefino}
\end{theorem}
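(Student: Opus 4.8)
The plan is to compute the mean by conditioning on the level at which the focused device is resolved and then invoking the law of total expectation. Let $\mathcal{H}^N$ be the resolution level, whose pmf $p_{\mathcal{H}^N}(h)$ is already available from Lemma~\ref{lemma_H}. Since a device can never be resolved at the root (level $0$ is always a collision), I would write
\begin{equation}
\mathrm{E}\{\mathcal{D}^N\} = \sum_{h=1}^\infty \mathrm{E}\{\mathcal{D}^N \mid \mathcal{H}^N = h\}\cdot p_{\mathcal{H}^N}(h),
\end{equation}
and introduce the shorthand $\mathcal{D}^N_h \triangleq \mathcal{D}^N|_{\mathcal{H}^N=h}$ for the conditional delay. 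The whole task then reduces to evaluating $\mathrm{E}\{\mathcal{D}^N_h\}$ for each $h$.

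Next I would exploit the breadth-first structure to split the delay into the slots accumulated in the levels strictly below $h$ and the position of the successful slot inside level $h$. That is, for $h>1$,
\begin{equation}
\mathcal{D}^N_h = \mathcal{S}^N_{h-1} + \mathcal{V}^N_h,
\end{equation}
with the degenerate case $\mathcal{D}^N_1 = 1$, where $\mathcal{S}^N_{h-1}$ is the partial sum of time slots up to level $h-1$ and $\mathcal{V}^N_h$ is the in-level slot position. By linearity of expectation this gives $\mathrm{E}\{\mathcal{D}^N_h\} = \mathrm{E}\{\mathcal{S}^N_{h-1}\} + \mathrm{E}\{\mathcal{V}^N_h\}$, so no independence assumption between the two terms is needed at the level of means.

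For the first term I would convert each level's slot count into a collision count through the ceiling relation $\mathcal{T}^N_m = \lceil \mathcal{X}^N_{m-1}/G\rceil$ established in equation~\eqref{lagrange}, so that $\mathrm{E}\{\mathcal{S}^N_{h-1}\} = \sum_{k=1}^{h-1}\mathrm{E}\{\lceil \mathcal{X}^N_{k-1}/G\rceil\}$; re-indexing the sum to run from $k=0$ to $h-2$ and expanding each expectation as a weighted sum over the collision pmf $p_{\mathcal{X}^N_k}$ yields exactly the first double sum in the claim, with the range of $x_k$ bounded by $\hat{X}=\min(\lfloor N/2\rfloor,2^m)$ because at most $\lfloor N/2\rfloor$ collisions can coexist and at most $2^m$ nodes exist at level $m$. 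For the second term I would simply insert the expression for $\mathrm{E}\{\mathcal{V}^N_h\}$ already proved in Lemma~\ref{avgdel_lemma5}. Combining both terms and summing against $p_{\mathcal{H}^N}(h)$ produces the stated formula.

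The main subtlety, and the step I would scrutinize most, is the approximation hidden in expressing $\mathrm{E}\{\mathcal{S}^N_{h-1}\}$ through the \emph{unconditioned} collision pmf $p_{\mathcal{X}^N_k}$. Strictly speaking we have conditioned on $\mathcal{H}^N=h$, i.e., on the event that the focused device is caught in collisions through level $h-1$ and resolves only at level $h$; this event biases the collision statistics in the levels below $h$, since a device reaching a deep level is evidence of a larger surrounding population. The tractable derivation tacitly treats the per-level collision counts as independent of this conditioning, which is consistent with the Markovian, level-wise approximation adopted throughout Section~\ref{sec:PAR}. I would flag this as the point where rigor is traded for a closed form, and defer its empirical justification to the simulation agreement reported in Section~\ref{Simusimu}.
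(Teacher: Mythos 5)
Your proposal reproduces the paper's proof step for step: the same law-of-total-expectation conditioning on $\mathcal{H}^N$, the same decomposition $\mathcal{D}^N_h = \mathcal{S}^N_{h-1} + \mathcal{V}^N_h$ with the degenerate case $\mathcal{D}^N_1 = 1$, the same conversion of slot counts to collision counts via $\mathcal{T}^N_m = \left\lceil \mathcal{X}^N_{m-1}/G \right\rceil$ with the same re-indexing, and the same substitution of $\mathrm{E}\left\lbrace\mathcal{V}^N_h\right\rbrace$ from Lemma~\ref{avgdel_lemma5}. Your closing observation---that evaluating $\mathrm{E}\left\lbrace\mathcal{S}^N_{h-1}\right\rbrace$ through the \emph{unconditioned} pmf $p_{\mathcal{X}^N_k}$ after conditioning on $\mathcal{H}^N = h$ is a tacit approximation---is correct and in fact identifies a gap the paper's own proof passes over without comment, so it strengthens rather than departs from the argument.
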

%%%%%%%%%%%%%%%%%%
%% NEXT SECTION %%
%%%%%%%%%%%%%%%%%%
\subsubsection{Probability mass function}
The goal of this section is to derive an expression for the pmf $p_{\mathcal{D}^N}(d)$ of $\mathcal{D}^N$. This function will yield a more insightful view of the access delay and will allow to provide stochastic guarantees to the devices using a MP-CTA. 

The first step is to relate the number of time slots at the level $h$ with the number of time slots up to that level. That is, we want to calculate the joint pmf of $\mathcal{T}^N_h$ and $\mathcal{S}^N_{h-1}$. This pmf is necessary since it relates the size of the level of successful transmission with the size of the rest of the tree, which is required to characterize the access delay. Although a closed-form expression is hard to obtain, we can rely on a recursive way of computing such a joint pmf, as it is shown in the lemma hereunder.
\begin{lemma}
	\label{pdel_lema2}
	The joint probability mass function $p_{\mathcal{T}^N_h, \mathcal{S}^N_{h-1}}(t_h,s_{h-1})$ of the number $\mathcal{T}^N_h$ of time slots at the level $h$ and the total number $\mathcal{S}^N_{h-1}$ of time slots up to the level $h-1$ can be computed recursively as
	\begin{equation}
	p_{\mathcal{T}^N_h, \mathcal{S}^N_{h-1}}(t_h,s_{h-1})
	\cong \sum_{t_{h-1} = 1}^{\hat{T}} p_{\mathcal{T}^N_h | \mathcal{T}^N_{h-1}}(t_h | t_{h-1})\cdot p_{\mathcal{T}^N_{h-1},\mathcal{S}^N_{h-2}}(t_{h-1},s_{h-1} - t_{h-1}),
	\label{currywurst_}
	\end{equation}
	where $\hat{T} = \left\lceil\frac{\min\left(2^{h-1}, N\right)}{2 G}\right\rceil$, with the initial conditions:
	\begin{equation}
	p_{\mathcal{S}^N_1}(s_1) = p_{\mathcal{T}^N_1}(s_1) = \delta_{s_1,0},
	\end{equation}
	\begin{equation}
	p_{\mathcal{T}^N_2, \mathcal{S}^N_1}(t_2, s_1) = p_{\mathcal{T}^N_2}(t_2) \cdot p_{\mathcal{S}^N_1}(s_1),
	\end{equation}
	where $\delta_{i,j}$ is the Kronecker delta.
	\begin{proof}
		In order to derive the recursion formula, we need to introduce the random variable $\mathcal{T}^N_{h-1}$, which can be accomplished by writing $p_{\mathcal{T}^N_h, \mathcal{S}^N_{h-1}}(t_h,s_{h-1})$ as a marginal pmf of the joint pmf of $\mathcal{T}^N_{h}$, $\mathcal{T}^N_{h-1}$ and $\mathcal{S}^N_{h-1}$, and then apply the law of total probability:
		\begin{align}
			p_{\mathcal{T}^N_h, \mathcal{S}^N_{h-1}}(t_h,s_{h-1}) 
			=& \sum_{t_{h-1} = 1}^{\hat{T}} p_{\mathcal{T}^N_h,  \mathcal{T}^N_{h-1},\mathcal{S}^N_{h-1}}(t_h, t_{h-1},s_{h-1})\\
			=& \sum_{t_{h-1} = 1}^{\hat{T}} p_{\mathcal{T}^N_h | \mathcal{T}^N_{h-1},\mathcal{S}^N_{h-1}}(t_h | t_{h-1},s_{h-1})\cdot p_{ \mathcal{T}^N_{h-1},\mathcal{S}^N_{h-1}}(t_{h-1},s_{h-1}) \\
			\cong& \sum_{t_{h-1} = 1}^{\hat{T}} p_{\mathcal{T}^N_h | \mathcal{T}^N_{h-1}}(t_h | t_{h-1})\cdot p_{\mathcal{T}^N_{h-1},\mathcal{S}^N_{h-2}}(t_{h-1},s_{h-1} - t_{h-1})
			\label{currywurst}
		\end{align}
		In the last step, the Markov property was applied, as well as the property $\mathcal{S}^N_{h-1} = \mathcal{S}^N_{h-2} + \mathcal{T}^N_{h-1}$. An expression for $p_{\mathcal{T}^N_h | \mathcal{T}^N_{h-1}}(t_h | t_{h-1})$ was already given in Lemma \ref{lemma_TTXX}.
		
		Regarding the initial conditions, it is clear that $\mathcal{S}^N_1 = 1$, since the first level always contains a single time slot. Due to this, $\mathcal{S}^N_1$ and $\mathcal{T}^N_2$ are independent, therefore their joint distribution can be written as the product of their marginal distributions.
	\end{proof}
\end{lemma}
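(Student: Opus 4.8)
The plan is to turn the two-variable joint pmf into a recursion in the level index $h$ by inserting the single bridge variable $\mathcal{T}^N_{h-1}$, the number of time slots at the immediately preceding level. This variable is the natural choice for two reasons: the partial sums telescope through it via $\mathcal{S}^N_{h-1} = \mathcal{S}^N_{h-2} + \mathcal{T}^N_{h-1}$ (Definition \ref{def_Sm}), and the Markov approximation \eqref{lentejas} lets $\mathcal{T}^N_h$ depend on the past only through $\mathcal{T}^N_{h-1}$. The target is therefore to rewrite $p_{\mathcal{T}^N_h, \mathcal{S}^N_{h-1}}(t_h, s_{h-1})$ so that the level-$(h-1)$ joint pmf of the \emph{same} form reappears on the right-hand side.

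First I would express $p_{\mathcal{T}^N_h, \mathcal{S}^N_{h-1}}(t_h, s_{h-1})$ as a marginal of the three-variable joint pmf of $(\mathcal{T}^N_h, \mathcal{T}^N_{h-1}, \mathcal{S}^N_{h-1})$, summing over the admissible values $t_{h-1}$ of $\mathcal{T}^N_{h-1}$. The summation range $1 \le t_{h-1} \le \hat{T}$ is fixed by the fact that level $h-1$ holds at most $\min(2^{h-1}, N)$ nodes, which when packed $2G$ to a time slot yields at most $\hat{T} = \lceil \min(2^{h-1},N)/(2G)\rceil$ slots. Then I would factor the three-variable joint by the chain rule into $p_{\mathcal{T}^N_h | \mathcal{T}^N_{h-1}, \mathcal{S}^N_{h-1}}(t_h | t_{h-1}, s_{h-1}) \cdot p_{\mathcal{T}^N_{h-1}, \mathcal{S}^N_{h-1}}(t_{h-1}, s_{h-1})$.

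Next I would apply the Markov approximation. Since $\mathcal{S}^N_{h-1}$ is a deterministic function of $\mathcal{T}^N_1, \ldots, \mathcal{T}^N_{h-1}$, conditioning on the pair $(\mathcal{T}^N_{h-1}, \mathcal{S}^N_{h-1})$ amounts, under \eqref{lentejas}, to conditioning only on $\mathcal{T}^N_{h-1}$, so the conditional factor collapses to $p_{\mathcal{T}^N_h | \mathcal{T}^N_{h-1}}(t_h | t_{h-1})$, whose closed form is supplied by Lemma \ref{lemma_TTXX}. Finally, using $\mathcal{S}^N_{h-1} = \mathcal{S}^N_{h-2} + \mathcal{T}^N_{h-1}$, the event $\{\mathcal{T}^N_{h-1} = t_{h-1}, \mathcal{S}^N_{h-1} = s_{h-1}\}$ coincides with $\{\mathcal{T}^N_{h-1} = t_{h-1}, \mathcal{S}^N_{h-2} = s_{h-1} - t_{h-1}\}$, so the remaining factor becomes $p_{\mathcal{T}^N_{h-1}, \mathcal{S}^N_{h-2}}(t_{h-1}, s_{h-1} - t_{h-1})$, which is precisely the same joint pmf evaluated one level down --- giving the recursion. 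For the base cases, I would observe that level 1 always occupies a single time slot, so $\mathcal{S}^N_1 = \mathcal{T}^N_1$ is deterministic (a Kronecker delta), and its determinism makes it independent of $\mathcal{T}^N_2$, so the level-2 joint factorizes into the product of the two marginals.

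The main obstacle is justifying the collapse of the conditioning set $(\mathcal{T}^N_{h-1}, \mathcal{S}^N_{h-1})$ to $\mathcal{T}^N_{h-1}$ alone: strictly, $\mathcal{S}^N_{h-1}$ encodes information about all earlier levels, so the step is only an approximation --- hence the $\cong$ --- and it inherits exactly the Markov assumption already argued after \eqref{lentejas} rather than requiring any new justification. The only other care needed is index bookkeeping: verifying the sum bound $\hat{T}$ and tracking the simultaneous shift of the level index $\mathcal{S}^N_{h-1} \to \mathcal{S}^N_{h-2}$ with the argument shift $s_{h-1} \to s_{h-1} - t_{h-1}$, so that the right-hand side is genuinely the lemma's own left-hand side at level $h-1$ and the recursion is well founded.
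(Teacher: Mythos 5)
Your proposal is correct and follows essentially the same route as the paper's proof: marginalizing over the bridge variable $\mathcal{T}^N_{h-1}$, factoring by the chain rule, collapsing the conditioning to $\mathcal{T}^N_{h-1}$ via the Markov approximation \eqref{lentejas}, substituting $\mathcal{S}^N_{h-1} = \mathcal{S}^N_{h-2} + \mathcal{T}^N_{h-1}$ to recover the same joint pmf one level down, and justifying the base cases through the determinism of $\mathcal{S}^N_1$. Your explicit justifications of the summation bound $\hat{T}$ and of why conditioning on the pair $(\mathcal{T}^N_{h-1}, \mathcal{S}^N_{h-1})$ reduces to conditioning on $\mathcal{T}^N_{h-1}$ alone are slightly more careful than the paper's, but the argument is the same.
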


Now, we need to find the relation between the position ---in time slots--- of our focused device within the level $h$, and the size ---in nodes--- of that level. This means that we need the pmf of $\mathcal{V}^N_h$ conditioned to a certain $\mathcal{L}^N_h$, which is given in the next lemma.
\begin{lemma}
	\label{antesmuertaquesencilla}
	The conditional probability $p_{\mathcal{V}^N_h | \mathcal{L}^N_h}(v_h | l_h)$ of the number $\mathcal{V}^N_h$ of time slots between the first and the successful transmission at the level $h$, provided the number $\mathcal{L}^N_h = l_h$ of nodes at the level $h$ is 
	\begin{gather}
	p_{\mathcal{V}^N_h | \mathcal{L}^N_h} (v_h | l_h) = \begin{cases}
	\frac{2 G}{l_h} & v_h < t_h, \\
	\frac{l_h - 2 G (t_h -1)}{l_h} & v_h = t_h,\\
	0 & v_h > t_h,
	\end{cases}
	\quad\text{ where  } 	t_h =  \left\lceil \frac{l_h}{2 G} \right\rceil.
	\label{cuerotroleo_}
	\end{gather}
	\begin{proof}
		This probability is simple to compute, since it is just the number of nodes that fit in one slot divided by the number $t_h$ of time slots in the level $h$. However, the number of nodes in a time slot is not fixed, but it ranges from $2$ to $2 G$ nodes. Namely, given $l_h$ nodes at the level $h$ that are grouped into $t_h$ slots, there will be $t_h -1$ slots of size $2 G$ and one (the last one) of size $l_h - 2 G (t_h-1)$. After taking into account this two cases, \eqref{cuerotroleo_} is reached.
	\end{proof}
\end{lemma}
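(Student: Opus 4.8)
The plan is to derive this conditional pmf directly from the uniform distribution of the node position $\mathcal{W}^N_h$ together with the deterministic node-to-slot conversion already established in \eqref{dexter}. The key observation is that, once the number of nodes $\mathcal{L}^N_h = l_h$ at the level is fixed, everything about $\mathcal{V}^N_h$ is determined by where the successful node falls within the level, and that node position is uniform by symmetry.

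First I would recall that, conditioned on $\mathcal{L}^N_h = l_h$, the variable $\mathcal{W}^N_h$ is uniformly distributed on $\{1, \dots, l_h\}$, by the same statistical-symmetry argument used to establish the pmf of $\mathcal{W}^N_h$; hence each node position carries weight $1/l_h$. Next I would invoke the relation $\mathcal{V}^N_h = \lceil \mathcal{W}^N_h / (2G) \rceil$ from \eqref{dexter}, which holds because each time slot packs exactly $2G$ nodes (that is, $G$ contention frames of $Q=2$ nodes each). The conditional pmf of $\mathcal{V}^N_h$ is then obtained by pushing the uniform law of $\mathcal{W}^N_h$ through this map, i.e., summing $1/l_h$ over the preimage of each value $v_h$.

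The core of the argument is the counting of these preimages. The set of node positions mapped to a given slot index $v_h$ is $\{2G(v_h - 1) + 1, \dots, 2G v_h\}$, which contains exactly $2G$ elements whenever the slot is full. Since the $l_h$ nodes are packed consecutively into $t_h = \lceil l_h / (2G) \rceil$ slots, the first $t_h - 1$ slots are full and each contributes probability $2G / l_h$, giving the case $v_h < t_h$. The final slot $v_h = t_h$ holds only the leftover nodes, namely $l_h - 2G(t_h - 1)$ of them, so its probability is $(l_h - 2G(t_h-1))/l_h$; any $v_h > t_h$ has empty preimage and hence probability zero. Collecting these three cases yields \eqref{cuerotroleo_}.

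The argument is essentially routine, so I do not expect a deep obstacle; the one point requiring care is the bookkeeping of the ceiling function at the last, possibly partial, slot — verifying that the leftover count $l_h - 2G(t_h-1)$ is correct and lies in $\{1, \dots, 2G\}$, and that the three cases partition all admissible values of $v_h$ so that the resulting pmf indeed sums to one.
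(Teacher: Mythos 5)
Your proposal is correct and follows essentially the same route as the paper's proof: conditioned on $\mathcal{L}^N_h = l_h$, the uniformly distributed node position $\mathcal{W}^N_h$ is pushed through the map $\mathcal{V}^N_h = \left\lceil \mathcal{W}^N_h / (2G) \right\rceil$, so the $t_h - 1$ full slots each carry probability $\frac{2G}{l_h}$ and the last, possibly partial, slot carries $\frac{l_h - 2G(t_h - 1)}{l_h}$. You merely make explicit the preimage counting and the normalization check that the paper's terser argument leaves implicit.
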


As a last step, we need to link the two lemmas above, such that we can use them to compute the pmf of the access delay. With that intention in mind, we present the following lemma.
\begin{lemma}
	The conditional probability mass function $	p_{\mathcal{V}^N_h | \mathcal{T}^N_h, \mathcal{S}^N_{h-1}}(v_h | t_h, s_{h-1})$ of the number $\mathcal{V}^N_h$ of time slots between the first and the successful transmission at the level $h$, provided the number $\mathcal{T}^N_h = t_h$ of time slots at the level $h$, and the total number $\mathcal{S}^N_{h-1} = s_{h-1}$ of time slots up to the level $h-1$ is
	\begin{equation}
	p_{\mathcal{V}^N_h | \mathcal{T}^N_h, \mathcal{S}^N_{h-1}}(v_h | t_h, s_{h-1}) =
	\begin{cases}
		 \frac{1}{t_h} &  G = 1\\
		\displaystyle\sum_{l_h \in \mathfrak{L}_h} p_{\mathcal{V}^N_h | \mathcal{L}^N_h} (v_h | l_h) \cdot \frac{p_{\mathcal{L}^N_h}(l_h)}{p_{\mathcal{T}^N_h}\left(t_h \right)}\; &  G > 1,
	\end{cases}
	\label{ketchup}
	\end{equation}
	where
	\begin{align}
	\mathfrak{L}_h = \left\lbrace l_h \; : \; 2 (t_h-1) G +1\leq l_h \leq 2 t_h G\right\rbrace.
	\end{align}
	\begin{proof}
		If the tree is unbiased, as we have assumed for the whole analysis, all nodes within a tree are equally like, since the tree is statistically symmetrical. Therefore, if $G = 1$, every time slot is equally likely, since they all contain two nodes. Thus, the probability to transmit in a certain position given $t_h$ options is just $\frac{1}{t_h}$:
		\begin{equation}
		p_{\mathcal{V}^N_h | \mathcal{T}^N_h, \mathcal{S}^N_{h-1}} (v_h | t_h, s_{h-1}) =p_{\mathcal{V}^N_h | \mathcal{L}^N_h, \mathcal{S}^N_{h-1}} (v_h | 2 t_h, s_{h-1}) = \frac{1}{t_h}.
		\end{equation}
		However, in the case of $G > 1$, the number of nodes in one time slot may vary, as explained in Lemma \ref{antesmuertaquesencilla}. This difference in the size of the slots produces an asymmetry in the tree: it is less probable to transmit in the last slot than in the rest of them. In order to cope with this asymmetry, let us introduce the number of nodes $\mathcal{L}^N_h$ at the level $h$ into the problem, with the help of the law of total probability:
		\begin{equation}
		p_{\mathcal{V}^N_h | \mathcal{T}^N_h, \mathcal{S}^N_{h-1}} (v_h | t_h, s_{h-1}) = 
		\sum_{l_h \in \mathfrak{L}_h} p_{\mathcal{V}^N_h | \mathcal{L}^N_h, \mathcal{T}^N_h, \mathcal{S}^N_{h-1}} (v_h | l_h, t_h, s_{h-1}) \cdot p_{\mathcal{L}^N_h | \mathcal{T}^N_h, \mathcal{S}^N_{h-1}} (l_h | t_h, s_{h-1}).
		\label{grumete}
		\end{equation}
		where $\mathfrak{L}_h$ contains all the values of $l_h$ that are in agreement with the relation $t_h = \left\lceil \frac{l_h}{2 G} \right\rceil$, that is,
		\begin{equation}
		\mathfrak{L}_h = \left\lbrace l_h \; : \; t_h = \left\lceil \frac{l_h}{2 G} \right\rceil \right\rbrace = \left\lbrace l_h \; : \; 2 (t_h-1) G +1\leq l_h \leq 2 t_h G\right\rbrace.
		\end{equation}
		In the first term of \eqref{grumete}, the simplification 
		\begin{equation}
		p_{\mathcal{V}^N_h | \mathcal{L}^N_h, \mathcal{T}^N_h, \mathcal{S}^N_{h-1}} (v_h | l_h, t_h, s_{h-1}) = p_{\mathcal{V}^N_h | \mathcal{L}^N_h} (v_h | l_h)
		\end{equation}
		may be applied, since it is clear that the probability of transmitting in the slot $v_h$ is only influenced by the number of nodes $l_h$ at the level $h$. In the second term, the Bayes' theorem may be applied with the intention of changing the order of the variables. After such manipulations, the following expression is obtained:
		\begin{equation}
		p_{\mathcal{V}^N_h | \mathcal{T}^N_h, \mathcal{S}^N_{h-1}} (v_h | t_h, s_{h-1}) = \sum_{l_h \in L_h} p_{\mathcal{V}^N_h | \mathcal{L}^N_h} (v_h | l_h) \cdot p_{\mathcal{T}^N_h, \mathcal{S}^N_{h-1} | \mathcal{L}^N_h } (t_h, s_{h-1} | l_h)  \cdot \frac{p_{\mathcal{L}^N_h}(l_h)}{p_{\mathcal{T}^N_h, \mathcal{S}^N_{h-1}} (t_h, s_{h-1})}.
		\label{pineapple}
		\end{equation}
		Now we are left with the problem of calculating $p_{\mathcal{T}^N_h, \mathcal{S}^N_{h-1} | \mathcal{L}^N_h } (t_h, s_{h-1} | l_h)$, that is, the probability of obtaining $t_h$ slots at the level $h$ and a total of $s_{h-1}$ slots from level 1 to level $m-1$, given that we have $l_h$ nodes at the level $h$. We know that the number of slots is completely determined by the number of nodes, therefore we only allow for $t_h =  \left\lceil \frac{l_h}{2 G} \right\rceil$. This condition leads to the following expression:
		\begin{equation}
		p_{\mathcal{T}^N_h, \mathcal{S}^N_{h-1} | \mathcal{L}^N_h } (t_h, s_{h-1} | l_h) = \begin{cases}
		\frac{p_{\mathcal{T}^N_h, \mathcal{S}^N_{h-1}} (t_h, s_{h-1})}{ p_{\mathcal{T}^N_h}(t_h)} & t_h = \left\lceil \frac{l_h}{2 G} \right\rceil.\\
		0 & \text{otherwise.}
		\end{cases}
		\label{pepperoni_}
		\end{equation}
		After combining \eqref{pineapple} and \eqref{pepperoni_}, the final result is obtained:
		\begin{equation}
		p_{\mathcal{V}^N_h | \mathcal{T}^N_h, \mathcal{S}^N_{h-1}} (v_h | t_h, s_{h-1}) =  \sum_{l_h \in L_h} p_{\mathcal{V}^N_h | \mathcal{L}^N_h} (v_h | l_h) \cdot \frac{p_{\mathcal{L}^N_h}(l_h)}{p_{\mathcal{T}^N_h}\left(t_h \right)}.
		\end{equation}
	\end{proof}
	\label{lemma_VTS}
\end{lemma}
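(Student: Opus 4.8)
The plan is to split the argument at once into the regimes $G=1$ and $G>1$, since the only source of difficulty --- a possibly under-filled last time slot --- appears solely when $G>1$. For $G=1$, each time slot contains exactly $2G=2$ nodes; because every collision spawns two nodes, $\mathcal{L}^N_h$ is even and is divided evenly into slots, so all $t_h$ slots of the level have identical size. Invoking the statistical symmetry of the unbiased tree (the same property used in Lemma~\ref{antesmuertaquesencilla}, namely that every node of a level is equally likely to host the focused device's success), equal-sized slots are equiprobable, and the conditioning on $\mathcal{S}^N_{h-1}$ carries no information about the within-level position. Hence $\mathcal{V}^N_h$ is uniform on $\{1,\dots,t_h\}$ and the pmf is $1/t_h$, matching the first case.

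For $G>1$ the slots need not be equiprobable, because the last slot of the level may hold fewer than $2G$ nodes. The idea is to reduce to the already-characterized quantity $p_{\mathcal{V}^N_h | \mathcal{L}^N_h}$ from Lemma~\ref{antesmuertaquesencilla} by reintroducing the exact node count $\mathcal{L}^N_h = l_h$ via the law of total probability, with $l_h$ ranging over the set $\mathfrak{L}_h$ of node counts consistent with $t_h = \lceil l_h/(2G)\rceil$. First I would write
\begin{equation}
p_{\mathcal{V}^N_h | \mathcal{T}^N_h, \mathcal{S}^N_{h-1}}(v_h | t_h, s_{h-1}) = \sum_{l_h \in \mathfrak{L}_h} p_{\mathcal{V}^N_h | \mathcal{L}^N_h, \mathcal{T}^N_h, \mathcal{S}^N_{h-1}}(v_h | l_h, t_h, s_{h-1}) \cdot p_{\mathcal{L}^N_h | \mathcal{T}^N_h, \mathcal{S}^N_{h-1}}(l_h | t_h, s_{h-1}),
\end{equation}
and then collapse the first factor to $p_{\mathcal{V}^N_h | \mathcal{L}^N_h}(v_h | l_h)$, arguing that once the node count is fixed the position in slots is determined by the node position and slot sizes alone, so that $t_h$ and $s_{h-1}$ add nothing.

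Next I would rewrite the second factor with Bayes' theorem, turning $p_{\mathcal{L}^N_h | \mathcal{T}^N_h, \mathcal{S}^N_{h-1}}$ into $p_{\mathcal{T}^N_h, \mathcal{S}^N_{h-1} | \mathcal{L}^N_h} \cdot p_{\mathcal{L}^N_h} / p_{\mathcal{T}^N_h, \mathcal{S}^N_{h-1}}$, leaving the conditional $p_{\mathcal{T}^N_h, \mathcal{S}^N_{h-1} | \mathcal{L}^N_h}(t_h, s_{h-1} | l_h)$ as the only genuinely new object. The crux is its evaluation: since $\mathcal{T}^N_h = \lceil \mathcal{L}^N_h/(2G)\rceil$ is a deterministic function of the node count, fixing $\mathcal{L}^N_h = l_h$ forces $t_h = \lceil l_h/(2G)\rceil$ (the probability being $0$ otherwise), and the prefix count $\mathcal{S}^N_{h-1}$ is then governed by $l_h$ only through the induced $t_h$; this gives $p_{\mathcal{T}^N_h, \mathcal{S}^N_{h-1} | \mathcal{L}^N_h}(t_h, s_{h-1} | l_h) = p_{\mathcal{T}^N_h, \mathcal{S}^N_{h-1}}(t_h, s_{h-1})/p_{\mathcal{T}^N_h}(t_h)$ on $\mathfrak{L}_h$. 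Substituting back, the common factor $p_{\mathcal{T}^N_h, \mathcal{S}^N_{h-1}}(t_h, s_{h-1})$ cancels and the claimed sum over $\mathfrak{L}_h$ with weight $p_{\mathcal{L}^N_h}(l_h)/p_{\mathcal{T}^N_h}(t_h)$ remains.

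I expect the main obstacle to be exactly this crux step, i.e. justifying the form of $p_{\mathcal{T}^N_h, \mathcal{S}^N_{h-1} | \mathcal{L}^N_h}$. It rests on the claim that, given the level-$h$ node count, the number of slots accumulated before level $h$ depends on $l_h$ only through the derived slot count $t_h$, so that $l_h$ conveys no extra information about $\mathcal{S}^N_{h-1}$ beyond $t_h$. This is the same Markov-type simplification that underlies the earlier lemmas, and it is precisely where an approximation rather than an exact identity enters; I would therefore state it explicitly as such and note that the residual error is the one governed by the truncation level $M$ and validated through the simulations.
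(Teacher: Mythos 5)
Your proposal follows the paper's proof essentially step for step: the same symmetry argument yielding $\frac{1}{t_h}$ for $G=1$, the same law-of-total-probability expansion over $\mathfrak{L}_h$, the same collapse of $p_{\mathcal{V}^N_h | \mathcal{L}^N_h, \mathcal{T}^N_h, \mathcal{S}^N_{h-1}}$ to $p_{\mathcal{V}^N_h | \mathcal{L}^N_h}$, the same Bayes manipulation, and the same evaluation of $p_{\mathcal{T}^N_h, \mathcal{S}^N_{h-1} | \mathcal{L}^N_h}$ producing the cancellation of $p_{\mathcal{T}^N_h, \mathcal{S}^N_{h-1}}(t_h,s_{h-1})$. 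Your explicit remark that the crux step rests on the conditional-independence assumption that $\mathcal{S}^N_{h-1}$ is independent of $\mathcal{L}^N_h$ given $\mathcal{T}^N_h$, and is therefore a Markov-type approximation rather than an exact identity, is actually more careful than the paper, which presents that step as if it were exact.
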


Finally, we have all the necessary elements to write down an expression for the probability mass function of the access delay.
\begin{theorem}
	\label{teorematopetocho}
	The probability mass function $p_{\mathcal{D}^N}(d)$ of the access delay $\mathcal{D}^N$ experienced by a device in a breadth-first multichannel tree is
	\begin{equation}
	p_{\mathcal{D}^N}(d) =
	\begin{cases}
	\displaystyle\sum_{h=1}^\infty \left( \sum_{v_{h}=1}^{\left\lceil\frac{2^h}{2 G}\right\rceil} \sum_{t_h} \frac{ p_{\mathcal{T}^N_h, \mathcal{S}^N_{h-1}} (t_h, d - v_{h})}{t_h}\right)\cdot p_{\mathcal{H}^N}(h) & G = 1\\
	\displaystyle \sum_{h=1}^\infty \left( \sum_{v_{h}=1}^{\left\lceil\frac{2^h}{2 G}\right\rceil} \sum_{l_h} \frac{p_{\mathcal{V}^N_h | \mathcal{L}^N_h} (v_h | l_h) \cdot p_{\mathcal{T}^N_h, \mathcal{S}^N_{h-1}} \left(\left\lceil\frac{l_h}{2 G}\right\rceil, d - v_{h}\right)  \cdot p_{\mathcal{L}^N_h}(l_h)}{p_{\mathcal{T}^N_h}\left(\left\lceil\frac{l_h}{2 G}\right\rceil \right)}\right)\cdot p_{\mathcal{H}^N}(h) & G > 1
	\end{cases}
	\end{equation}
\end{theorem}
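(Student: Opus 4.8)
The plan is to condition on the level $\mathcal{H}^N = h$ of successful transmission and reuse the decomposition of the conditional delay $\mathcal{D}^N_h$ into a ``previous-levels'' part and a ``within-level'' part that was already exploited for the mean in Theorem~\ref{teorematopefino}. First I would apply the law of total probability over the level of success,
\begin{equation*}
p_{\mathcal{D}^N}(d) = \sum_{h=1}^\infty p_{\mathcal{D}^N_h}(d) \cdot p_{\mathcal{H}^N}(h),
\end{equation*}
where $p_{\mathcal{H}^N}(h)$ is supplied by Lemma~\ref{lemma_H}. This immediately produces the outer summation and the factor $p_{\mathcal{H}^N}(h)$ common to both branches of the statement.

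Next I would invoke the structural identity \eqref{sambapati_}, namely $\mathcal{D}^N_h = \mathcal{S}^N_{h-1} + \mathcal{V}^N_h$ for $h > 1$, and express $p_{\mathcal{D}^N_h}(d)$ as a discrete self-convolution of its two summands. Since the event $\{\mathcal{D}^N_h = d\}$ is the disjoint union over $v_h$ of the events $\{\mathcal{V}^N_h = v_h,\ \mathcal{S}^N_{h-1} = d - v_h\}$, this yields
\begin{equation*}
p_{\mathcal{D}^N_h}(d) = \sum_{v_h} p_{\mathcal{V}^N_h, \mathcal{S}^N_{h-1}}(v_h,\ d - v_h),
\end{equation*}
with $v_h$ ranging from $1$ to the maximum number of time slots in level $h$, i.e. $\left\lceil 2^h/(2 G) \right\rceil$, matching the inner summation limits in the theorem.

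The core step is to obtain the joint pmf $p_{\mathcal{V}^N_h, \mathcal{S}^N_{h-1}}$. I would introduce $\mathcal{T}^N_h$ by the law of total probability and factor it using the conditional pmf of Lemma~\ref{lemma_VTS} together with the joint pmf of Lemma~\ref{pdel_lema2}:
\begin{equation*}
p_{\mathcal{V}^N_h, \mathcal{S}^N_{h-1}}(v_h, s_{h-1}) = \sum_{t_h} p_{\mathcal{V}^N_h | \mathcal{T}^N_h, \mathcal{S}^N_{h-1}}(v_h | t_h, s_{h-1}) \cdot p_{\mathcal{T}^N_h, \mathcal{S}^N_{h-1}}(t_h, s_{h-1}).
\end{equation*}
For $G = 1$ the conditional factor collapses to $1/t_h$, and substituting $s_{h-1} = d - v_h$ reproduces the first branch verbatim. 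For $G > 1$ I would plug in the second branch of Lemma~\ref{lemma_VTS}, which itself sums over the admissible node counts $l_h \in \mathfrak{L}_h$; since $t_h = \left\lceil l_h/(2 G) \right\rceil$ is a deterministic function of $l_h$, the double sum over $t_h$ and $l_h$ collapses to a single sum over $l_h$, with $t_h$ replaced everywhere by $\left\lceil l_h/(2 G)\right\rceil$, giving the second branch.

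I expect the main obstacle to be bookkeeping rather than any deep difficulty: one must verify that the partition of $\{\mathcal{D}^N_h = d\}$ over $v_h$ is exhaustive and disjoint, that the set $\mathfrak{L}_h$ correctly enumerates exactly the node counts consistent with a given $t_h$, and that the implicit cancellation of $p_{\mathcal{T}^N_h}(t_h)$ in the $G > 1$ branch is legitimate, i.e. that $p_{\mathcal{T}^N_h}\!\left(\left\lceil l_h/(2 G)\right\rceil\right)$ does not vanish on the support summed over. The degenerate case $h = 1$, where $\mathcal{D}^N_1 = 1$ deterministically, should be checked separately but folds harmlessly into the outer sum.
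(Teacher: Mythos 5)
Your proposal is correct and takes essentially the same route as the paper's own proof: conditioning on $\mathcal{H}^N = h$ via the law of total probability, expanding $p_{\mathcal{D}^N_h}(d)$ as the convolution sum over $v_h$ of the joint pmf of $\mathcal{V}^N_h$ and $\mathcal{S}^N_{h-1}$, and then introducing $\mathcal{T}^N_h$ so that Lemma~\ref{lemma_VTS} and Lemma~\ref{pdel_lema2} combine, with the $t_h$-sum collapsing in the $G>1$ branch since $t_h = \left\lceil l_h/(2G)\right\rceil$ is determined by $l_h$. Your extra bookkeeping checks (disjointness of the $v_h$-partition, correctness of $\mathfrak{L}_h$, non-vanishing of $p_{\mathcal{T}^N_h}$ on the support, and the degenerate case $h=1$) go slightly beyond what the paper states explicitly but do not alter the argument.
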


\begin{proof}
	We can benefit from the law of total probability and the definition of $\mathcal{D}^N_h$ provided in Theorem \ref{teorematopefino} to express the pmf of $\mathcal{D}^N$ as follows:
	\begin{equation}
	p_{\mathcal{D}^N}(d) =\sum_{h=1}^\infty \Pr\left\lbrace\mathcal{D}^N_h = d \right\rbrace \cdot p_{\mathcal{H}^N}(h).
	\label{juan}
	\end{equation}
	Hence, we need to obtain the probability $p_{\mathcal{D}^N_h}(d_h)$ that our focused device successfully transmits with delay $d_h$, given the knowledge that it has transmitted at the level $h$.
	The variable $\mathcal{D}^N_h$ was defined in \eqref{sambapati_} as the sum of ${S}^N_{h-1}$ and  $\mathcal{V}^N_h$ for $h>1$. Again, we face the problem of obtaining the pmf of the sum of two dependent random variables. As in the previous section, this is accomplished by means of their joint pmf:
	\begin{equation}
	p_{\mathcal{D}^N_h}(d_h) = \sum_{v_{h}=1}^{\left\lceil\frac{2^h}{2 G}\right\rceil}  p_{\mathcal{V}^N_h, \mathcal{S}^N_{h-1}} (v_{h},d_h - v_{h}).
	\label{damian}
	\end{equation}
	The upper limit of the summation represents the maximum number of slots that can be obtained at the level $h$. At this point, the problem is to derive the joint pmf of $\mathcal{S}^N_{h-1}$ and $\mathcal{V}^N_h$. The law of total probability can be applied to introduce $\mathcal{T}^N_{h}$, which provides information about the number of slots at the level $h$. This is useful in order to infer the probability of transmitting at position $\mathcal{V}^N_h = v_h$.
	\begin{equation}
	p_{ \mathcal{V}^N_h, \mathcal{S}^N_{h-1}} (v_h, s_{h-1}) = \sum_{t_h} p_{\mathcal{V}^N_h | \mathcal{T}^N_h, \mathcal{S}^N_{h-1}} (v_h | t_h, s_{h-1}) \cdot p_{\mathcal{T}^N_h, \mathcal{S}^N_{h-1}} (t_h, s_{h-1}).
	\label{marron}
	\end{equation}
	
	The first term on the right-hand side of this equation was already derived as in Lemma \ref{lemma_VTS}, and the second term in Lemma \ref{pdel_lema2}. At this point, we just need to combine \eqref{juan}, \eqref{damian} and \eqref{marron} to obtain the final expression of the theorem. 
\end{proof}
Although all the previous analysis has focused on multichannel trees, we can also draw conclusions for single channel trees based on our results. These conclusions are presented as a corollary hereunder.
\begin{definition}
	Let  $\tilde{\mathcal{D}}^N$ be the access delay experienced by a device in a breadth-first single channel tree, provided $N$ initial contenders.
	\label{def_Dt}
\end{definition}
\begin{corollary}
	The probability mass function $p_{\tilde{\mathcal{D}}^N_h}(\tilde{d})$ of the access delay $\tilde{\mathcal{D}}^N$ experienced by a device in a breadth-first single channel tree is
	\begin{equation}
		p_{\tilde{\mathcal{D}}^N}(\tilde{d}) = \frac{1}{2}\sum_{h=1}^\infty \left( \sum_{v_{h}=1}^{\left\lceil\frac{2^h}{2 G}\right\rceil} \sum_{t_h} \frac{ p_{\mathcal{T}^N_h, \mathcal{S}^N_{h-1}} \left(t_h, \left\lceil \frac{\tilde{d}}{2} \right\rceil - v_{h}\right)}{t_h}\right)\cdot p_{\mathcal{H}^N}(h)
	\end{equation}
	\begin{proof}
	The access delay $\tilde{\mathcal{D}}^N$ experienced in a single channel tree and that experienced in a multichannel tree when $G = 1$ are directly related, since every time slot in the multichannel tree always contains two nodes. Namely, their relation is:
	\begin{equation}
		\mathcal{D}^N\Big|_{G=1} = \left\lceil\frac{\tilde{\mathcal{D}}^N}{2}\right\rceil
	\end{equation}
	Therefore, in order to compute the probability of $\tilde{\mathcal{D}}^N = \tilde{d}$, we can compute the probability of $\mathcal{D}^N = \left\lceil\frac{\tilde{d}}{2}\right\rceil$ for the equivalent multichannel tree with $G = 1$ by means of the theorem \ref{teorematopetocho}, and then divide that probability by two, since both nodes within each time slot are equally probable.
	\end{proof}
\end{corollary}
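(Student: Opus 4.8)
The plan is to avoid re-running the entire level-by-level construction and instead deduce the single-channel pmf from the multichannel pmf already obtained in Theorem~\ref{teorematopetocho}, specialised to $G=1$. The first step I would take is to record a purely structural fact about binary trees: since $Q=2$, every collision produces exactly two children, so the number of nodes at any level $h\geq 1$ equals $2\mathcal{X}^N_{h-1}$ and is therefore always even. When $G=1$ the nodes are grouped into time slots of size $2G=2$, and because each level holds an even number of nodes there is never a half-filled trailing slot. Consequently each time slot of the $G=1$ multichannel tree contains exactly two nodes, whereas the single-channel tree processes one node per slot and hence runs in exactly twice as many slots.

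From this correspondence I would extract a deterministic relation between the two delays. A device whose successful transmission lies in node position $\tilde d$ (counted in single-channel slots, i.e.\ in nodes) is contained in the multichannel time slot of index $\lceil \tilde d/2\rceil$, so that
\begin{equation}
\mathcal{D}^N\big|_{G=1} = \left\lceil \frac{\tilde{\mathcal{D}}^N}{2}\right\rceil .
\end{equation}
This identity is the bridge I need: the event $\{\mathcal{D}^N|_{G=1}=\lceil\tilde d/2\rceil\}$ is precisely the disjoint union of the two single-channel events corresponding to the two node positions inside that time slot.

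The next step is to split this mass evenly. Invoking the statistical symmetry of the unbiased tree ---the same symmetry that gives $\mathcal{W}^N_h$ a uniform conditional law--- the two nodes that make up a given time slot lie at the same level and are therefore equiprobable. Hence the probability mass of each multichannel time slot divides equally between its two constituent node positions, which yields
\begin{equation}
p_{\tilde{\mathcal{D}}^N}(\tilde d) = \frac{1}{2}\, p_{\mathcal{D}^N|_{G=1}}\!\left(\left\lceil \frac{\tilde d}{2}\right\rceil\right).
\end{equation}
It then only remains to substitute $G=1$ and $d=\lceil\tilde d/2\rceil$ into the first branch of Theorem~\ref{teorematopetocho} and carry the factor $\tfrac12$ through the sums, which reproduces the claimed formula verbatim.

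I expect the main obstacle to be the careful justification of the two \emph{soft} steps rather than any computation: first, that the two-to-one correspondence between nodes and time slots never degenerates into a partially filled boundary slot ---this is exactly where the evenness of the per-level node count is essential--- and second, that the two nodes of a slot are genuinely equiprobable, so that the factor $\tfrac12$ is exact and not merely approximate. Both rest on the unbiasedness assumption maintained throughout the paper; once it is granted, the corollary is nothing more than a change of variable applied to the pmf of Theorem~\ref{teorematopetocho}.
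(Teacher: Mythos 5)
Your proposal is correct and takes essentially the same approach as the paper's own proof: you establish the deterministic bridge $\mathcal{D}^N\big|_{G=1} = \left\lceil \tilde{\mathcal{D}}^N/2\right\rceil$, split the mass of each two-node time slot evenly by the symmetry of the unbiased tree, and then specialise Theorem~\ref{teorematopetocho} to $G=1$, exactly as the paper does. Your added justifications (the evenness of the per-level node count ruling out a partially filled slot, and the equiprobability of the two nodes within a slot) merely make explicit what the paper leaves implicit.
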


\section{Simulations}
\label{Simusimu}
In order to check the accuracy of the model, simulations were performed and their results were compared with the predicted values. The simulator was written in MATLAB, and the selected parameters were $N = \{10, 20, 30, 40, 50, 60\}$ contenders and $26500$ runs for each value of $N$.

\subsection{Simulation results}
In the Fig. \ref{fig:pmfs_delay_sim}, the theoretical and the empirical pmfs of the access delay for $G = \{1, ..., 4\}$ are plotted together for comparison. For $G = 1$, we see a slight but noticeable difference between the approximate model and the actual results, as a consequence of the approximate model. Nevertheless, this difference is rather small and the accuracy of the analytical model seems to improve rapidly when $G$ increases.
\begin{figure}
\centering
\includegraphics[width=1\linewidth]{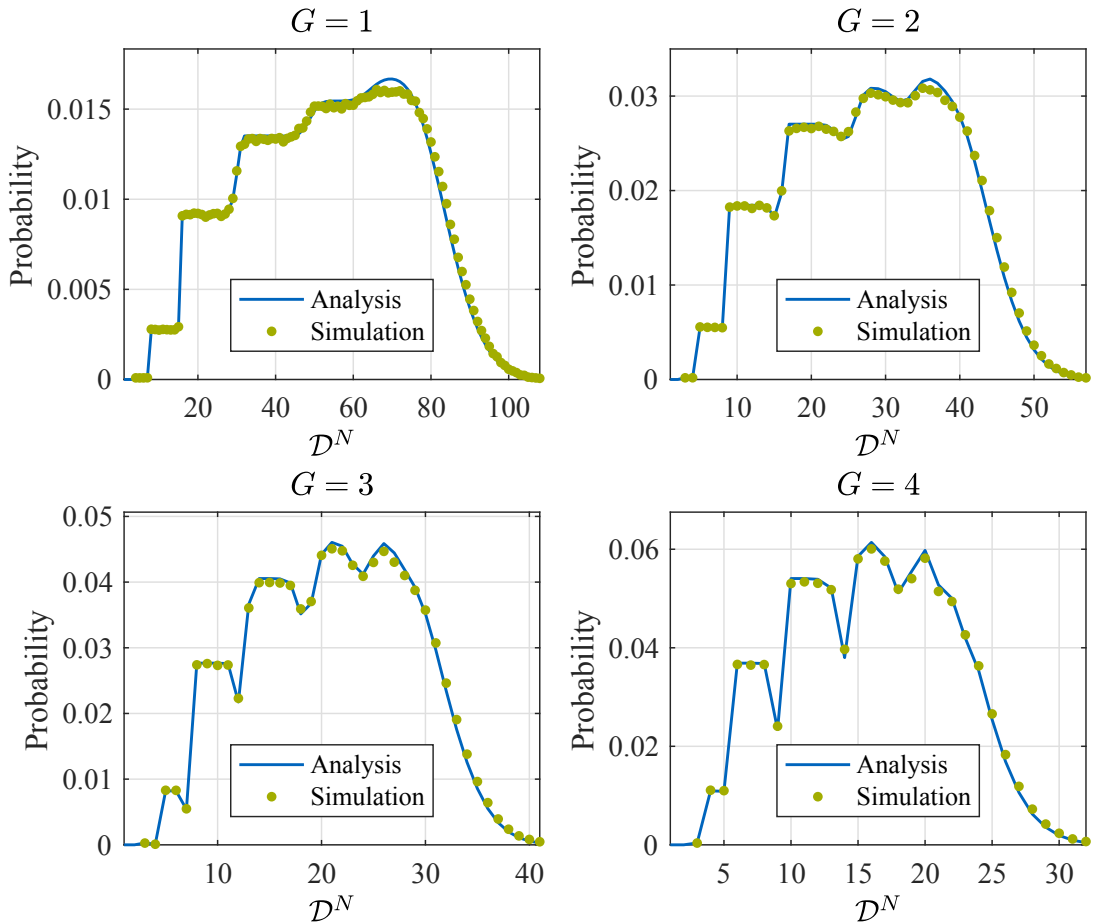}
\caption{Analytical and simulative results of the probability mass functions of the access delay experienced by a contender in a multichannel tree for several values of $G$, given $N=60$ initial contenders.}
\label{fig:pmfs_delay_sim}
\end{figure}

In the Fig. \ref{fig:cdfs_delay_sim}, the theoretical and the empirical CDFs of the access delay for $G = \{1, ..., 8\}$ are also plotted together. For $G = 1$, the predicted and the actual result differ slightly, although this is barely noticeable. For the remaining values of $G$, it can be observed that the model becomes more accurate when $G$ increases. Thus, one may conclude the Markovian approximation is valid and yields accurate approximations.

\begin{figure}
\centering
\includegraphics[width=1\linewidth]{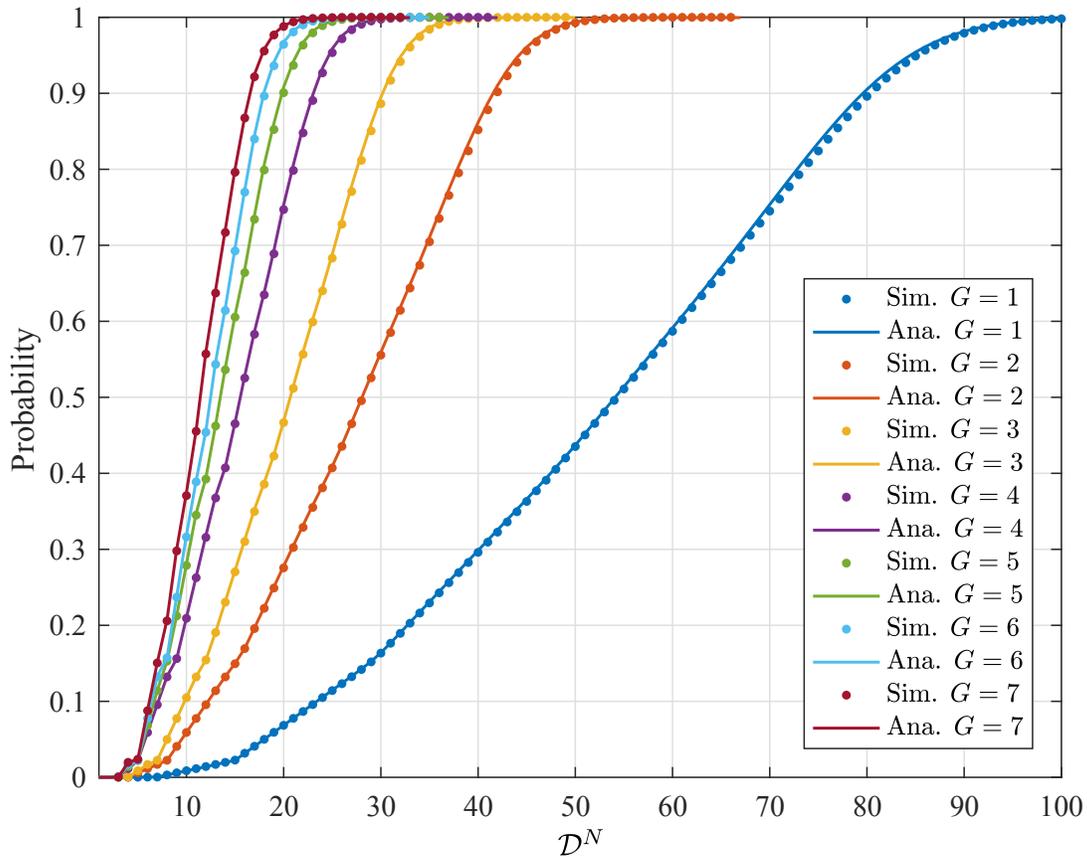}
\caption{Analytical and simulative results of the cumulative distribution functions of the access delay experienced by a contender in a multichannel tree for several values of $G$, given $N=60$ initial contenders.}
\label{fig:cdfs_delay_sim}
\end{figure}

Apart from the validity of the model, conclusions about the values of the access delay may be drawn as well, now that the predicted values are backed with simulations. Regarding access delay, we see how the maximum access delay for $G = 7$ might be lowered up to a 10\% of the delay of a single channel Tree Algorithm, which is obtained after multiplying by two the result for $G = 1$. Hence, a tenfold reduction of the access delay can be achieved if $G = 7$, and larger reductions are possible is $G \geq 7$. Nevertheless, the higher $G$ the lower the number of trees that can be executed in parallel if the number of channels is limited, therefore the optimum value of $G$ needs to be carefully chosen depending on the application.

\subsection{Goodness of the approximate model}
Although visual inspection of the the aforementioned figures seems to approve the validity the model, some measures are still necessary to be aware of the significance of the errors. In order to measure the goodness of fit between the approximate model and the simulation results, the Kolmogorov--Smirnov statistic \cite{kolmogorov} will be applied. This statistic is often employed to perform the Kolmogorov--Smirnov test, which is used to check whether an empirical CDF matches a theoretical CDF. Although this appears to be similar to our situation, it would be pointless to use the complete Kolmogorov--Smirnov test in the present case, since we already know that our theoretical CDF is just an approximation to the actual CDF. Therefore, the test will surely fail given a number of samples high enough. Nonetheless, the Kolmogorov--Smirnov statistic alone can be still used as a metric of the goodness of fit.

Given an empirical CDF of the access delay $\hat{F}^n_{\mathcal{D}^N}(d)$ and an analytical CDF $F_{\mathcal{D}^N}(d)$, the Kolmogorov--Smirnov statistic $KS$ is defined as:
\begin{equation}
KS = \max \left\vert \hat{F}^n_{\mathcal{D}^N}(d) - F_{\mathcal{D}^N}(d) \right\vert.
\end{equation}

In words, $D$ is simply the maximum difference between the empirical and the theoretical CDFs. As a consequence, in our case it will also be the maximum error that we could expect from our approximation. This statistic may seem biased, since only the worst point of the CDFs is considered, regardless of the goodness of the remaining points. However, as we are modeling an algorithm that might cope with delay-sensitive contenders, we are indeed interested in the maximum error of our prediction rather than in the average or some other `smoother' statistic. 

A table with values of $KS$ for different values of $G$ is presented in Table \ref{smirnov}, for the case of $N = 60$ contenders. One can observe from this table that the maximum difference between the predicted and the actual probability of any access delay is around 1\%, even lower for $G > 2$. For most applications, this margin of error should be acceptable. For other values of $N$, the evolution of $D$ with $G$ is depicted in Fig. \ref{fig:KS_stat_pic}. Although the behavior of this statistic is not smooth, it suggests that the accuracy of the model improves the higher $N$ and the lower $G$, but still the maximum difference is around 3.5\% when $G=1$ and $N=3$.
% Please add the following required packages to your document preamble:
% \usepackage{booktabs}
\begin{table}[h]
	\centering
	\resizebox{0.5\linewidth}{!}{%
	\begin{tabular}{@{}ccccccccc@{}}
		\toprule
		$G$ & 1 & 2 & 3 & 4 & 5 & 6 & 7 \\ \midrule
		$KS$ & 0.0112 & 0.0103 & 0.0091 & 0.0085 & 0.0083 & 0.0060 & 0.0050 \\ \bottomrule
	\end{tabular}}
	\vspace{0.15cm}
	\caption{Kolmogorov--Smirnov statistic for different values of $G$ with $N$=60}
	\label{smirnov}
\end{table}
\begin{figure}
\centering
\includegraphics[width=0.7\linewidth]{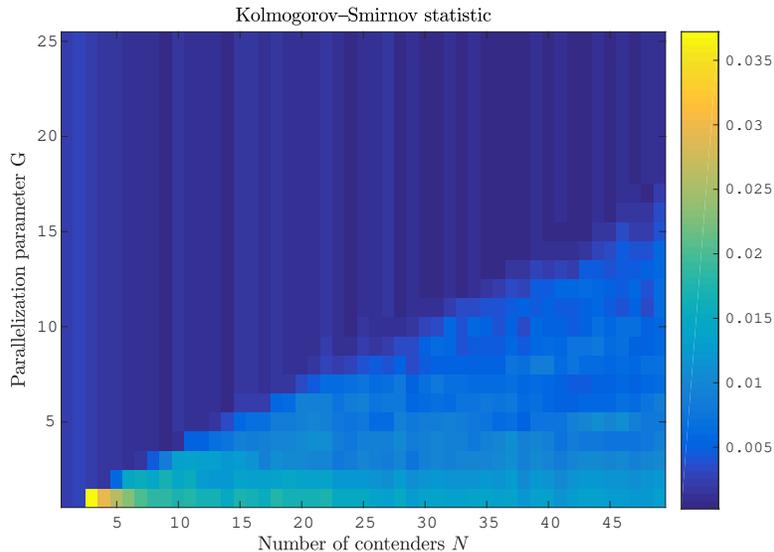}
\caption{Evolution of the Kolmogorov--Smirnov statistic with $G$ and $N$.}
\label{fig:KS_stat_pic}
\end{figure}

\section{Conclusion}
\label{sec:CONC}
In this work we provide detailed analysis and simulations of the statistics of multichannel Tree Algorithms. Namely, we derive the probability mass function of the length of a tree (in time slots), and the average and probability mass function of the access delay experienced by a contender, provided a number of initial contenders. We show that multichannel Tree Algorithms outperform single channel Tree Algorithms in terms of access delay. Owing to this property, multichannel Tree Algorithms can be used to deliver precise delays in those systems that are able to implement them, such as Ultra Reliable Low Latency Communications.

% if have a single appendix:
%\appendix[Proof of the Zonklar Equations]
% or
%\appendix  % for no appendix heading
% do not use \section anymore after \appendix, only \section*
% is possibly needed

% use appendices with more than one appendix
% then use \section to start each appendix
% you must declare a \section before using any
% \subsection or using \label (\appendices by itself
% starts a section numbered zero.)
%

\appendices

\section{Selection of $M$ for $\hat{\mathfrak{T}}^N_M$}
\label{app:3}
The set $\hat{\mathfrak{T}}^N_M$ contains the all the variables that will be used to compute an approximate pmf of $\mathcal{T}^N$. As a consequence, $M$ will be the maximum level that we will consider in the analysis of the pmf of $\mathcal{T}^N$. 

Since any node trespassing level $M$ will not be taken into account in the computation of the pmf, we want to set $M$ as high as possible. On the other hand, the greater $M$ the bulkier the operations will be, as more terms will be considered in them. In order to choose an optimum $M$, we need to compute the probability of a tree reaching the level $M$. With that objective in mind, let us define $\mathcal{M}^N$ as the random variable modeling the last level reached by a tree of $N$ contenders.
%, which is given by \eqref{paella}. 
In \cite{janssenjong}, the authors provide the pmf  of this random variable:
\begin{align}
p_{\mathcal{M}^N}(m) =\mu(2^m,N) - \mu(2^{m-1},N),
\label{paella}
\end{align}
where
\begin{equation}
\mu(\alpha,\beta) = \begin{cases}
0 & \text{if } \alpha < \beta,\\
\frac{\alpha !}{(\alpha - \beta)! \, \alpha^\beta} & \text{if } \alpha \geq \beta.
\end{cases}
\end{equation}
Provided that we have chosen an accuracy $\epsilon$, we need to select $M$ such that:
\begin{equation}
\epsilon \geq \sum_{m=1}^{M} \left(\mu(2^m,N) - \mu(2^{m-1},N)\right),
\end{equation}
which can be easily accomplished by numerical search. Then, we can choose the required $M$ for a desired accuracy. Fortunately, $M$ grows slowly as we increase either the required accuracy or the number of contenders, since the maximum number of nodes at each level grows exponentially, and so do the number of opportunities to successfully transmit. For instance, only $M=36$ is required to guarantee that at least $\epsilon=99.9$\% of trees will be finished even if $N = 10000$.

\section{Derivation of the probability of a given partition}
\label{LOL}
In \eqref{weisswurst}, the probability of obtaining a partition $\pi_i$ from a random distribution of $k_{h-1}$ balls over $x_{h-1}$ bins was presented as:
\begin{equation}
p_{\mathcal{P}^{k,x}}(\pi^{k,x}_i) = \frac{k!}{\Psi^{k}_{x,x}} \prod_{j=1}^{x} \frac{1}{\eta^{k,x}_{i,j}! \cdot \#^{k,x}_{i,a}!}.
\tag{\ref{weisswurst}}
\end{equation}

In this Appendix, the derivation of this expression will be tackled, using a slightly simplified notation for clearness. 
%First, let us simplify the notation, so that we denote the number of balls by $a$ and the number of bins by $b$. The probability of obtaining a partition $\pi$ of $a$ with $b$ parts is:
%\begin{equation}
%p_{\mathcal{P}^{a}_{b}}(\pi) = \frac{a!}{\Psi^{a}_{b,b}} \prod_{j=1}^{b} \frac{1}{\eta_j! \#_j!},
%\label{sardine}
%\end{equation}
%where $\pi = (\eta_1, \eta_2, \hdots, \eta_{b})$ and $\#_c$ is defined as:
%\begin{equation}
%\#_c = \sum_j \left[\eta_j = c\right],
%\end{equation}
%where $\left[ \cdot \right]$ is the Iverson bracket. In words, $\#^i_c$ is the number of occurrences of $c$ within the partition $\pi_i$. 
%
%Once all variables are defined, we can start with the derivation. 
Let us start by computing the number of ways $Z^{\eta_1}_k$ to choose $\eta_1$ balls out of a total of $k$ balls:
\begin{equation}
Z^{\eta_1}_k =  {k\choose{\eta_1}}.
\end{equation}
After this selection is done, the number of ways $Z^{\eta_2}_{k-\eta_1}$ to choose $\eta_2$ balls out of a total of $k-\eta_1$ balls is:
\begin{equation}
Z^{\eta_2}_{k - \eta_1} =  {{k-\eta_1}\choose{\eta_2}}.
\end{equation}
In general, the number of ways  $Z^{\eta_n}_{k}$ to choose $\eta_n$ balls out of a total of $k-\sum_{i = 1}^{n-1}\eta_i$ balls is:
\begin{equation}
Z^{\eta_n}_{k} =  {{a-\sum_{i = 1}^{n-1}\eta_i}\choose{\eta_n}}.
\end{equation}
If every part $\eta_n$ in the partition $\pi$ is different, the total number of ways $A^\pi$ to generate such partition is simply:
\begin{equation}
A^\pi = \prod_{j = 1}^x Z^{\eta_j}_k = \prod_{j = 1}^x {{k-\sum_{i = 1}^{j-1}\eta_i}\choose{\eta_j}}.
\label{frijoles}
\end{equation}
After some basic manipulation based on the definition of the binomial coefficient, we can rewrite \eqref{frijoles} as:
\begin{equation}
A^\pi = k! \prod_{j=1}^{x} \frac{1}{\eta_j!}.
\end{equation}
Nevertheless, if some parts have the same value, e.g. $10 = 4 + 4 + 2$, the number of ways to select those parts would be counted multiple times, yielding an incorrect result. In order to solve this issue, we have to correct by the number of ways to arrange those repeated values:
\begin{equation}
A^\pi = k! \prod_{j=1}^{x} \frac{1}{\eta_j!  \#_j!}.
\end{equation}
Finally, the probability of partition $\pi$ is obtained by dividing the number of ways $A^\pi$ to generate that specific partition by the total number of ways to generate any partition, which is given by $\Psi^{k}_{x,x}$, i.e. the number of ways to arrange $k$ balls in $x$ groups, $x$ of which have more than one ball. Therefore, we have reached our final result:
\begin{equation}
p_{\mathcal{P}^{k,x}}(\pi) = \frac{k!}{\Psi^{k}_{x,x}} \prod_{j=1}^{x} \frac{1}{\eta_j! \#_j!}
\label{mona}
\end{equation}
We just need to use the full notation in \eqref{mona} to obtain \eqref{weisswurst}.
\section*{Acknowledgment}

The authors would like to thank Markus Kl\"ugel for his useful comments.

% Can use something like this to put references on a page
% by themselves when using endfloat and the captionsoff option.
\ifCLASSOPTIONcaptionsoff
  \newpage
\fi

\end{document}